\documentclass[twoside,11pt]{article}

\usepackage{blindtext}

%

%
%
%
\usepackage[abbrvbib, preprint]{jmlr2e}




\usepackage{lastpage}
\jmlrheading{27}{2026}{1-\pageref{LastPage}}{02/26; Revised -/-}{-/-}{21-0000}{Manish Prajapat, Johannes K\"ohler, Melanie N. Zeilinger, Andreas Krause}

\usepackage[T1]{fontenc}    
\usepackage{hyperref}       
\usepackage{url}            
\usepackage{booktabs}       
\usepackage{amsfonts}       
\usepackage{nicefrac}       
\usepackage{microtype}      
\usepackage{xspace}
\usepackage{siunitx}        

\usepackage{enumitem}
\setlist[itemize]{labelsep=0.6em, left=12pt}
\usepackage{bm}
\usepackage{comment}
\usepackage{amsmath}
\usepackage{graphicx}
\usepackage{rotating}
\usepackage{float}
\usepackage{mathrsfs}
\usepackage{amssymb}
\usepackage{autobreak}
\usepackage{mathtools}
\usepackage{wrapfig}
\usepackage{xcolor}
\usepackage{bbm}

\usepackage[noend]{algpseudocode}
\usepackage{algorithm}

\usepackage{subcaption}
\usepackage{lipsum}
\usepackage{outlines}
\usepackage{blindtext}
\usepackage{multicol}

\usepackage{tikz}
\usetikzlibrary{shapes,backgrounds,patterns}
\usepackage{array}
\usepackage[export]{adjustbox}
\newsavebox{\figright}
\usepackage[capitalise]{cleveref}

\usepackage[normalem]{ulem}   
\usepackage{tcolorbox}
\usepackage{thmtools} 
\usepackage{thm-restate}
\usepackage[toc,page,header]{appendix}
\usepackage{minitoc}
\usepackage{longtable,xtab,booktabs}

\usepackage{makecell}
\usepackage{pifont}  
\newcommand{\xmark}{\ding{55}}  




\crefname{assumption}{Assumption}{Assumptions}

\newsavebox{\algleft}
\newsavebox{\algright}
\newsavebox{\mdpfigright}

\algdef{SE}[DOWHILE]{Do}{doWhile}{\algorithmicdo}[1]{\algorithmicwhile\ #1}%

\DeclareMathOperator*{\argmax}{arg\,max}

\newcommand\numberthis{\addtocounter{equation}{1}\tag{\theequation}}



\newcommand{\qedhere}{\tag*{\BlackBox}}

\newcommand{\epssafeset}{\epsilon'}
\newcommand{\dynExplor}{\textsc{\small{SageDynX}}\xspace}

\newcommand{\Intp}{\mathbb{N}}
\newcommand{\Intrange}[2]{\mathbb{N}_{[#1,#2]}}
\newcommand{\Liprew}{L'_r}
\newcommand{\LiprewPi}{L_r}
\newcommand{\stateOpti}{\state^{\mathrm{o}}}
\newcommand{\statePessi}[1][]{\state^{\mathrm{p} #1}}

\newcommand{\upPi}{\Pi}
\newcommand{\entropy}{\mathcal H}
\newcommand{\Jobj}[3][]{J^{#1}({{#2}};{#3})}
\newcommand{\h}{h}
\newcommand{\truePolicySet}[2][]{\Pi_{#2}^{ \star #1}}
\newcommand{\statetrue}[1][]{\state^{\star #1}}
\newcommand{\cwidth}[1][]{w_{#1}}

\newcommand{\numdataPts}{d}
\newcommand{\epscollect}{\epsilon_c}
\newcommand{\epsdiff}{\epsilon_c}
\newcommand{\epspessi}{\epsilon'}
\newcommand{\epsanywhere}{\epsilon'}
\newcommand{\epsterminal}{\epsilon_d}
\newcommand{\epsguarantee}{\epsilon}

\newcommand{\epsclose}{\epsilon_u}
\newcommand{\totHorizon}{\Horizon_c}
\newcommand{\seqinput}{\bm{{\coninput}}}
\newcommand{\StatModel}[1][]{\mathcal{M}_{#1}}

\newcommand{\dynVec}{\bm{f}}

\newcommand{\W}{\mathcal W}

\newcommand{\gpinp}{z}
\newcommand{\dyntrue}{f^{\star}}
\newcommand{\dyntrueVec}{\bm{f}^{\star}}
\newcommand{\statedim}{{n_x}}
\newcommand{\inputdim}{{n_u}}

\newcommand{\LipDyn}{L}
\newcommand{\LipPi}{L_{\pi}}
\newcommand{\noise}[1][]{{\eta_{#1}}}
\newcommand{\noisebound}{\Tilde{\eta}}

\newcommand{\coninput}[1][]{{u_{{#1}}}}
\newcommand{\state}[1][]{{x_{{#1}}}}

\newcommand{\datasetdim}{D}

\newcommand{\Horizon}{\mathrm{H}}

\newcommand{\probab}{\delta}
\newcommand{\dynSet}[1][]{\mathcal{F}_{#1}}
\newcommand{\n}{n}

\newcommand{\identity}{I}

\newcommand{\GP}{\mathcal{GP}}

\newcommand{\kernelfunc}{k}

\newcommand{\Bg}{B_i}

\newcommand{\RKHS}[1][]{{\mathcal{H}_{k^{#1}}}}

\newcommand{\probability}[1]{\mathrm{Pr}\left( {#1} \right)}

\newcommand{\gppostmean}[1][]{{\mu_{#1}}}
\newcommand{\gppostvar}[1][]{{\sigma_{#1}}}


\newcommand{\betadata}[1][]{\beta_{#1}}
\newcommand{\ki}{k}


\newcommand{\inputSpace}{\mathcal{U}}

\newcommand{\Horizonmid}{H^\prime}
\newcommand{\timeX}{\Delta \Horizon}
\newcommand{\stateAct}{z}

\newcommand{\xpos}{x_p}
\newcommand{\ypos}{y_p}

\newcommand{\dyn}{f}

\newcommand{\ball}[1]{\mathcal{B}_{#1}}

\newcommand{\mypar}[1]{\textbf{#1}.}

\newcommand{\U}{\mathcal{U}}

\renewcommand{\Re}{\mathbb{R}}

\newcommand{\X}{\mathcal X}

\newcommand{\D}{\mathcal D}

\newcommand{\E}{\mathbb E}

\newcommand{\N}{\mathcal N}

\newcommand{\R}{\mathbb{R}}



%
\newtheorem{assumption}{Assumption}
\newtheorem{objective}{Objective}



\newcommand{\LocAgents}{X}




\newcommand{\constrain}{q}

\newcommand{\LipWidth}{L'_{\cwidth}}
\newcommand{\LipWidthPi}{L_{\cwidth}}

\newcommand{\Rcontoper}[2][]{\mathcal{R}_{\Horizon}({{#1}},{#2})}
\newcommand{\RTcoper}[3][]{\mathcal{R}_{#3}({{#1}},{#2})}

\newcommand{\sigconst}[1][]{\sigma_{#1}}

\newcommand{\epsconst}{\epsilon}

\newcommand{\noiseconst}{\sigma^{-2}}

\newcommand{\betaconst}[1][]{\beta_{#1}}

\newcommand{\gammaconst}[1]{\gamma_{#1}}

\newcommand{\retPolicy}[1][]{\pi^{\mathrm{r}}_{#1}}
\newcommand{\pessiPolicy}[1][]{\pi^{\mathrm{p}}_{#1}}
\newcommand{\optiPolicy}[1][]{\pi^{\mathrm{o}}_{#1}}

\newcommand{\pessiSet}[2][]{\Pi_{#2}^{\mathrm{p} #1}}

\newcommand{\optiSet}[2][]{\Pi_{#2}^{ \mathrm{o}, \epsconst #1}}
\newcommand{\optiSetwe}[2][]{\Pi_{#2}^{ \mathrm{o} #1}}

\newcommand{\sumMaxwidth}[2][]{w^{#1}_{#2}}

\newcommand{\safeInit}[1]{\mathbb{\LocAgents}_{#1}}

\newcommand{\nfin}{{\bar{\n}}}


\newcounter{relctr} 
\everydisplay\expandafter{\the\everydisplay\setcounter{relctr}{0}} 
\newcommand\labelrel[2]{%
  \begingroup
    \refstepcounter{relctr}%
    \stackrel{\textnormal{(\text{\roman{relctr})}}}{\mathstrut{#1}}%
    \originallabel{#2}%
  \endgroup
}
\AtBeginDocument{\let\originallabel\label} 

\makeatletter

\makeatother

\makeatletter
\renewcommand{\theHALG@line}{\thealgorithm.\arabic{ALG@line}}
\makeatother

\allowdisplaybreaks
\makeatletter
\def\thanks#1{\protected@xdef\@thanks{\@thanks
        \protect\footnotetext{#1}}}
\makeatother

\usepackage{comment}

\newcommand{\JK}{\textcolor{black}}
\newcommand{\manish}[1]{\textcolor{red}{#1}}
\newcommand{\change}[1]{{\color{black}{#1}}}


\ShortHeadings{Safe and Near-Optimal Control with Online Dynamics Learning}{M. Prajapat, J. K\"ohler, M. N. Zeilinger, A. Krause}
\firstpageno{1}

\begin{document}
\title{Safe and Near-Optimal Control with Online Dynamics Learning}


\author{%
  \name  Manish Prajapat \thanks{$\dagger$ Joint supervision. Code available at \url{https://github.com/manish-pra/SageDynX}}  
  \email manishp@ai.ethz.ch \\
\addr   ETH Zurich 
   \AND
  \name Johannes K\"ohler \email jkoehle@ethz.ch \\
 \addr   ETH Zurich 
   \AND
 \name   Melanie N. Zeilinger\textsuperscript{$\dagger$} \email mzeilinger@ethz.ch \\
\addr    ETH Zurich 
   \AND
 \name   Andreas Krause\textsuperscript{$\dagger$} \email krausea@ethz.ch \\
\addr    ETH Zurich 
}

\editor{My editor}

\maketitle


\begin{abstract}
\looseness -1 
Achieving both optimality and safety under unknown system dynamics is a central challenge in 
real-world deployment of agents.
To address this, we introduce a notion of maximum safe dynamics learning, where sufficient exploration is performed within the space of safe policies.
Our method executes \emph{pessimistically} safe policies while \emph{optimistically} exploring informative states and, despite not reaching them due to model uncertainty, ensures continuous online learning of dynamics. 
The framework achieves first-of-its-kind results: learning the dynamics model sufficiently — up to an arbitrary small tolerance (subject to noise) — in a finite time, while ensuring provably safe operation throughout with high probability and without requiring resets. 
Building on this, we propose an algorithm to maximize rewards while learning the dynamics \emph{only to the extent needed} to achieve close-to-optimal performance. 
Unlike typical reinforcement learning (RL) methods, our approach operates online in a non-episodic setting and ensures safety throughout the learning process. 
We demonstrate the effectiveness of our approach in challenging domains such as autonomous car racing and drone navigation under aerodynamic effects — scenarios where safety is critical and accurate modeling is difficult. 

\end{abstract}
\begin{keywords}
  Safe exploration, Gaussian processes, Dynamical systems and control, Active learning, Model predictive control,
  Safe reinforcement learning
\end{keywords}

\section{Introduction}
\label{sec:introduction}

\looseness -1
Deploying agents in the real world is inherently challenging due to their \emph{a priori} unknown dynamics and the need for rigorous safety and optimality guarantees.
Without an accurate model, the agent cannot predict the consequences of actions, and thus risks taking unsafe actions or failing to complete their tasks. 
Moreover, real-world deployments are non-episodic, where resetting back to an initial state is often not possible or very costly.
In such cases, the agent must actively interact with the environment and learn the dynamics online from only a single trajectory while remaining safe throughout. 
Successfully addressing this challenge of sufficient learning for optimal behavior in a safe, online, and non-episodic manner is essential across a wide range of applications, especially in autonomous robotics~\citep{wong2018autonomous}, including car racing \citep{kabzan2020amz}, drones, and underwater vehicles~\citep{9062680}.
\Cref{fig:dyn_expl_problem} illustrates the considered safe online dynamics learning framework in a drone navigation example.

\begin{figure}
   \scalebox{0.62}{\input{figure/sagedynx_intro}}
    \caption{\change{Illustration of the online dynamics learning problem.
    A drone navigates a cluttered environment while satisfying safety constraints, despite its dynamics being a-priori unknown. These constraints require avoiding collisions with the orange obstacles and passing through the green gate. 
    The green curve illustrates the optimal trajectory that the drone would have taken if the dynamics were known exactly.
    The dotted curve shows the actual trajectory executed by the drone, which deviates from the optimal path due to model uncertainty.
    The gray region represents the propagated uncertainty during planning at the current location of the drone. Initially, uncertainty is large, leading the drone to plan conservatively. As the drone learns the dynamics online, the propagated uncertainty shrinks, resulting in less conservative plans, and the executed trajectory gets close to optimal.
    }} 
    \label{fig:dyn_expl_problem}
    \vspace{-0.3em}
\end{figure}


\looseness -1 \mypar{Related work}
Model-based reinforcement learning (RL) is a well-established framework to achieve close-to-optimal performance through optimistic or sampling-based exploration~\citep{chua2018deep,kakade2020information,curi2020efficient}.
To ensure safety, constraints under expectation can be additionally enforced through Constrained Markov Decision Processes (CMDPs) \citep{altman2021constrained}; however, these approaches typically ensure safety only for the final learned policy, not throughout the learning process itself~\citep{achiam2017constrained,ding2020natural,as2022constrained,muller2024truly,gu2024review}. 
Safety during learning is often addressed through pessimistic exploration strategies inspired by the safe Bayesian optimization (BO) literature~\citep{safe-bo-sui15,wachi2020safe,turchetta2016safemdp,prajapat2022near}.
These methods have been extended to optimize safe policies~\citep{berkenkamp2023bayesian,as2024actsafe}. 
However, they require resets during the learning process, which are often not feasible in real-world settings. 
Safe exploration {\em without} resets has been developed more recently, using techniques from model predictive control (MPC)~\citep{prajapat2025safe}, i.e., online finite-horizon re-planning. 
However, this method is only applicable to learning uncertain constraints, and steering the system to informative states requires that the dynamics are known exactly.
The problem of safely controlling an uncertain dynamical system without resets has also been studied in the control community, especially using (learning-based) MPC~\citep{hewing2020learning,brunke2021safe,koller_learning_based_2018}. 
However, only a few works consider actively learning dynamics \citep{heirung2018model,lew2022safe,soloperto2023guaranteed}. In general, these algorithms do not guarantee
that the dynamics model is learned sufficiently accurately, 
let alone providing optimality guarantees. 
In contrast, our work {\em uniquely addresses safe guaranteed exploration without resets, ensuring close-to-optimal performance in finite-time}, see also~\cref{apxsec:extended_related_works} for a more detailed comparison to related work.

\mypar{Contributions}
To address the open problem of safe, non-episodic learning of unknown dynamics with sufficient exploration, we make the following key contributions:
\begin{itemize}
\item To ensure sufficient dynamics learning for any task, we introduce the notion of guaranteed exploration in the policy space, which ensures \emph{maximum safe dynamics learning}. We propose a general framework that pessimistically ensures safety against model uncertainty while optimistically exploring informative states. 
Even though the desired state cannot be reached because of unknown dynamics, the proposed approach ensures that the agent obtains informative measurements.
\item We theoretically guarantee that the dynamics model is learned sufficiently up to a user-chosen tolerance (subject to noise) in finite time, while being provably safe throughout with high probability.
\item Building on the general framework, we propose an efficient algorithm to maximize rewards, where the dynamics model is learned only to the extent needed to achieve optimal behavior.
The approach maintains safety by ensuring returnability to a known safe set, but without having to actually return, thus ensuring efficiency. We prove that the agent achieves close-to-optimal performance in finite time,
while being provably safe throughout the (non-episodic) online learning process. 
\item \change{We derive a lower bound on the sample complexity of the dynamics learning problem to provide insight into the optimality of exploration. 
By reducing it to Gaussian Process (GP) bandit optimization~\citep{scarlett2017lower}, we show that for the Squared Exponential kernel, the proposed algorithm matches this lower bound up to logarithmic factors in the state dimension, while a larger gap remains for the Mat\'ern kernel.}
\item We provide an efficient sampling-based implementation of our method and demonstrate its effectiveness with unknown dynamical models in the challenging safety-critical task of autonomous racing and drone navigation under aerodynamic effects. Our experiments demonstrate rapid convergence to optimal policies without compromising safety.
\end{itemize}

\looseness -1 
\textit{Notation:} The set of non-negative integers is denoted by $\Intp$. We use the short-hand notation $\Intrange{i}{j}$ for a set of integers $\{ i,i+1, \hdots, j\}$. 
We denote $f(x) = \mathcal{O}(g(x))$ as $x \to 0$ if $\exists x_0 > 0, c>0 : \forall~0 < x < x_0, f(x) \leq c g(x)$. Similarly, we write $f(x) = \Omega(g(x))$ as $x \to 0$ if $\exists x_0 > 0, c>0 : \forall~0 < x < x_0, f(x) \geq c g(x)$.
The Minkowski sum and Pontryagin set difference for two sets $X, Y \subseteq \R^n$ are defined by $X \oplus Y \coloneqq \{ x+y \in \R^\n | x \in X, y \in Y \}$ and $X \ominus Y \coloneqq \{ z \in \R^\n | z+y \in X~ \forall y \in Y \}$, respectively. 
All norms, balls of radius $r$, and Lipschitz constants of functions are based on the infinity norm $\|x\|_\infty = \max_{i} |x_i|$ with $x\in\mathbb{R}^n$. A sequence $\{\noise(\ki)\}_{\ki \geq 0}$ is conditionally $\sigma$-sub-Gaussian for a fixed $\sigma \geq0$ if $\forall \ki \in \Intp, \forall s \in \R, \E[e^{s \noise(\ki)}|\{\noise(i)\}_{i=1}^{\ki-1},\{\state(i)\}_{i=1}^{\ki}] \leq \mathrm{exp}(\frac{s^2 \sigma^2}{2})$, see~\citet{beta_chowdhury17a,ao2025stochastic}. 
We assume that for all optimization problems a maximizer can be attained and hence write $\max$ instead of $\sup$.  
\section{Problem setup}
\looseness -1
We consider the task of learning and controlling a nonlinear dynamical system 
\begin{align}
\state(\ki+1) = \dyntrueVec(\state(\ki), \coninput(\ki)) 
+  \noise(\ki)  \label{eq:system_dyn}
\end{align}
with state $\state(\ki) \in \R^\statedim$ and input $\coninput(\ki) \in \R^\inputdim$. The noise $\noise(\ki) \in \W \subseteq \R^\statedim$ is assumed to be bounded and conditionally $\sigma$-sub-Gaussian~\citep{beta_chowdhury17a}. 
Here, \mbox{$\dyntrueVec: \Re^{\statedim} \times \Re^{\inputdim} \rightarrow \Re^{\statedim}$} denotes the dynamics model, which is a-priori unknown and needs to be learned using online state measurements $\state(\ki)$. The disturbance set is given by $\W \coloneqq \ball{\noisebound}$, where $\ball{r}$ is an infinity-norm ball of radius $r$ with appropriate dimension.
The system should satisfy (known) state and input constraints during closed-loop operation:
\begin{align}\label{eq:constraint_set}
    \state(\ki) \in \X, \coninput(\ki) \in \inputSpace,~\forall \ki \in \Intp. \vspace{-0.1em}
\end{align}

\looseness -1
\mypar{Objective}  
First, we consider the problem of
learning the unknown system~\eqref{eq:system_dyn} up to user-chosen tolerance $\epsilon>0$, while satisfying the constraints~\eqref{eq:constraint_set} with arbitrarily high-probability \mbox{$1-\delta\in(0,1)$} during runtime, i.e., in a non-episodic fashion without resets (\cref{sec:full_expl_theory}). In our setting, the agent plans online at time step $\ki \in \Intp$ for a finite horizon $\Horizon$ with $\state_{\h+1} = \dynVec(\state_\h, \coninput_\h) +\noise_h$ denoting $(\h + 1)^{th}$ step prediction in the future starting from $\state_0 = \state(\ki)$. Using the guaranteed exploration framework developed in \cref{sec:full_expl_theory}, 
our goal is 
to safely maximize rewards, i.e., to find a policy $\pi$ that maximizes,
\begin{align}
    \Jobj[]{\state(\ki), \dynVec}{\pi} \coloneqq \E \left[ \sum\nolimits_{\h = 0}^{\Horizon-1} r(\state_\h, \coninput_\h) | \state_0 = \state(\ki), \pi \right]\!\!, \label{eq:obj_def}
\end{align}
over a finite horizon $\Horizon$ for the (unknown) true dynamics $\dynVec=\dyntrueVec$~\eqref{eq:system_dyn}, with $\coninput_\h = \pi_\h(\state_\h)$ while ensuring safety~\eqref{eq:constraint_set} throughout the exploration process 
(\cref{sec:reward_maxim_exploration}). We operate in MPC-style, that is, at real time $k$, we optimize (plan) for a prediction of finite horizon $\Horizon$, then execute actions and repeat planning (see \cref{fig:dyn_expl_problem}).
Without loss of generality, we assume that the reward \mbox{$r\!:\!\R^{\statedim}\!\!\times\!\R^\inputdim \!\!\to\! \R$} is non-negative, i.e., $r: \R \times \R \to \R_{\geq0}$. 
We optimize over the class of non-stationary deterministic policies $\pi \coloneqq [\pi_0, \hdots \pi_{\Horizon-1}]^\top \!\in \Pi_{\Horizon}$ with $\pi_\h : \X \to \inputSpace, \h \in \Intp$.
\subsection{Probabilistic dynamic model}
We make a standard regularity assumption on the unknown dynamics $\dyntrueVec$~\citep{abbasi2013online,beta_srinivas,as2024actsafe,prajapat2022near}:
\begin{assumption}[Regularity]
    \label{assump:q_RKHS}
    \looseness -1  
    Each component~$\dyntrue_i, i \in \Intrange{1}{\statedim}$ of the unknown dynamics model is an element of the Reproducing Kernel Hilbert Space~(RKHS)~$\RKHS[i]$ associated with a 
    continuous, positive definite kernel \mbox{$\kernelfunc^i : \Re^{\statedim + \inputdim} \times \Re^{\statedim + \inputdim} \rightarrow \Re_{\geq 0}$}, 
    with a bounded RKHS norm, i.e., \mbox{$\dyntrue_i \in \RKHS[i]$} with \mbox{$\|\dyntrue_i\|_{\RKHS[i]} \leq \Bg < \infty$} where $\Bg$ is known. 
    Furthermore, $k^i(z,z)\leq 1$, $\forall z\in\mathbb{R}^{n_x+n_u}$.
\end{assumption}

\looseness -1
We collect data online as the policy is executed, and construct a growing dataset $\D_\n = \{Z, Y\}$ with $D_\n$ data points, where 
\mbox{$Z\! =\! [ z_1, \ldots,\!z_{\datasetdim_\n} ]$} with \mbox{$z_\ki \doteq (\state_\ki, \coninput_\ki) \in \Re^{\statedim + \inputdim}$}. 
The corresponding noisy measurements are \mbox{$Y = [ y_1, \ldots, y_{\datasetdim_\n} ]$}, where each \mbox{$y_\ki = \dyntrueVec(z_\ki) + \noise_\ki$}. 
We use Gaussian process (GP) regression to compute the posterior mean $\gppostmean[\n,i](\gpinp)$ and the covariance $\gppostvar[\n,i](\gpinp)$ for each $i\in\Intrange{1}{\statedim}$ using noise standard deviation $\sigma$~\citep{beta_chowdhury17a,gp-Rasmussen}.
Here, $\n$ denotes the number of model updates, and each update conditions on all the data $\D_\n$ collected so far. 
Based on this, we build a dynamics set \vspace{-0.2em}
\begin{align}
\dynSet[\n] \coloneqq \left\{ \dynVec 
~|~|\dyn_i(\gpinp) - \gppostmean[\n',i](\gpinp)| \leq \sqrt{\betadata[\n',i]} \gppostvar[\n',i](\gpinp)~\forall z \in \X \times \inputSpace, \n' \in \Intrange{0}{\n}, i \in \Intrange{1}{\statedim}\right\},\label{eq:dyn_set}
\vspace{-0.3em}
\end{align}
with a scaling factor $\betadata[\n,i]>0$ and define the confidence width \mbox{$\cwidth[\n,i](\stateAct) \!\coloneqq\! 2\sqrt{\betadata[\n,i]} \gppostvar[\n,i](\gpinp)$} quantifying the uncertainty of the dynamics set $\dynSet[\n]$.
We use \mbox{$w_n(x,u)\coloneqq\max_{i} w_{n,i}(x,u)$} and $\betadata[\n] \coloneqq \max_{i} \beta_{n,i}$ to denote the maximum across all state components.
The following lemma provides sufficient conditions to obtain well-calibrated bounds $\dynSet[\n]$ when modeling the unknown dynamics with
GPs\footnote{Notably, our approach equally applies to other probabilistic models, as long as they provide an error bound akin to \cref{lem:beta} for safety and a condition comparable to \cref{assump:sublinear} to ensure finite-time termination.}. 
\begin{lemma}[Well-calibrated model {\cite[\!Theorem 3.11]{abbasi2013online}}]
    \label{lem:beta}  \looseness -1 
    Let \cref{assump:q_RKHS} hold and $\sqrt{\betadata[\n, i]} \!\coloneqq\! \Bg \!+\! \sqrt{ \ln(\det(\identity_{D_\n} \!+\! \sigma^{-2} K^{i}_{\smash{D}_\n})) + 2\ln(\statedim/\probab)}$.
    Then, 
    it holds that $\probability{\dyntrueVec\in\dynSet[\n], \forall \n\in \Intp} \geq 1-\delta$.
\end{lemma}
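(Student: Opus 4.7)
The statement is essentially a vector-valued lift of a standard scalar RKHS concentration inequality, so the plan is to reduce to the per-component bound of Abbasi-Yadkori (Theorem 3.11) and then union-bound over the $\statedim$ output dimensions.

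First, I would fix an index $i \in \Intrange{1}{\statedim}$ and consider the scalar regression problem of estimating $\dyntrue_i$ from the data $\D_\n = \{Z, Y\}$, where the $i$-th coordinate of $y_\ki$ is $\dyntrue_i(z_\ki) + \noise_{\ki,i}$. The assumptions of \cref{lem:beta} satisfy all hypotheses of the cited result: (i) $\dyntrue_i \in \RKHS[i]$ with $\|\dyntrue_i\|_{\RKHS[i]} \leq \Bg$ by \cref{assump:q_RKHS}; (ii) the noise sequence $\{\noise_{\ki,i}\}_{\ki \geq 0}$ is conditionally $\sigma$-sub-Gaussian with respect to the natural filtration generated by past noise and current inputs, which follows from the assumption on $\noise(\ki)$ in the problem setup (sub-Gaussianity is preserved under coordinate projection); (iii) the kernel is continuous, positive definite, and bounded with $k^i(z,z)\leq 1$.

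Second, applying the scalar result with confidence parameter $\delta/\statedim$ yields that, with the choice $\sqrt{\betadata[\n,i]} = \Bg + \sqrt{\ln\det(I_{D_\n} + \sigma^{-2} K^i_{D_\n}) + 2\ln(\statedim/\prob)}$, the event
\[
\mathcal{E}_i \coloneqq \Bigl\{ |\dyntrue_i(z) - \gppostmean[\n,i](z)| \leq \sqrt{\betadata[\n,i]}\, \gppostvar[\n,i](z), \ \forall z \in \X \times \inputSpace, \ \forall \n \in \Intp \Bigr\}
\]
holds with probability at least $1 - \prob/\statedim$. The crucial feature of this bound from \cite{abbasi2013online} is that it is uniform over both $z$ and the round index $\n$ (i.e., the self-normalized martingale construction already handles the time-uniform statement, so no additional union bound over $\n$ is needed).

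Third, applying a union bound over the $\statedim$ coordinates gives $\probability{\bigcap_{i=1}^{\statedim} \mathcal{E}_i} \geq 1 - \prob$. On the intersection event, the defining inclusion of $\dynSet[\n]$ in~\eqref{eq:dyn_set} is satisfied by $\dyntrueVec$ for every $\n$ simultaneously, which is exactly the claim. There is no genuine technical obstacle here, since the self-normalized concentration inequality of \cite{abbasi2013online} is invoked as a black box; the only care needed is to verify that the sub-Gaussian and RKHS-regularity assumptions transfer cleanly to the per-coordinate problem and that the $\ln(\statedim/\prob)$ term in $\betadata[\n,i]$ correctly absorbs the union-bound penalty.
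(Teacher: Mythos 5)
Your proposal is correct and follows essentially the same route as the paper, which states that \cref{lem:beta} "follows from applying the scalar bound from \citep[Theorem 3.11]{abbasi2013online} individually for each of the components and then using Boole's inequality, resulting in factor $\delta/n_x$." Your additional care in checking that the per-coordinate sub-Gaussianity and RKHS hypotheses transfer, and in noting that the time-uniformity over $\n$ is already built into the self-normalized bound (so the intersection over $\n'\in\Intrange{0}{\n}$ in the definition of $\dynSet[\n]$ requires no extra union bound), is consistent with and slightly more explicit than the paper's one-line argument.
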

\cref{lem:beta} follows from applying the scalar bound from  \citep[Theorem 3.11]{abbasi2013online} individually for each of the components and then using Boole's inequality, resulting in factor $\delta/n_x$ in $\sqrt{\betadata[\n]}$. 
%
\begin{assumption}[Lipschitz continuity]\label{assump:lipschitz} 
The dynamics model $\dyntrueVec$ is $L_{\mathrm{f}}$-Lipschitz, 
$\cwidth[\n]$ is $\LipWidth$-Lipschitz $\forall \n \in \Intp$, 
the reward $r$ is $\Liprew$-Lipschitz, 
and any policy $\pi\in\Pi_\Horizon$ is $L_\pi$ Lipschitz.
\end{assumption}
Under \cref{assump:q_RKHS}, the dynamics $\dyntrueVec$ is Lipschitz continuous if the kernels $k^i$ are Lipschitz continuous, such as the squared exponential or the Mat\'ern kernels ~\citep{fiedler2023lipschitz}. Similarly, 
Lipschitz continuity of the confidence width $\cwidth[\n]$ can be derived~\citep{curi2020efficient}.  
Moreover, \cref{assump:lipschitz} also implies that closed-loop dynamics $\dyntrueVec(x,\pi_h(x))$ is $\LipDyn$-Lipschitz continuous with some $\LipDyn\leq L_{\mathrm{f}}(1+L_\pi)$ and similarly $\cwidth[\n](x,\pi(x))$ is $\LipWidthPi$-Lipschitz $\forall \n \in \Intp$ and
the reward $r(x,\pi(x))$ is $\LiprewPi$-Lipschitz continuous.  For short notation we define $L_h \coloneqq \sum_{i=0}^{\h-1} L^i$ and $L_{w,h} \coloneqq \sum_{i=0}^{\h-1} (\LipDyn + \LipWidthPi)^i$.
\savebox{\algleft}{
\begin{minipage}[t]{.60\textwidth}
\vspace{-9 em}
\begin{algorithm}[H]
\caption{Safe guaranteed dynamics exploration}
\begin{algorithmic}[1]
\State \textbf{Initialize:} Start at $\state(0) \in \safeInit{0}$, $\dynSet[0]$, $\totHorizon$, Tol. $\epsilon$, Data $\D_0$
\For{$\n = 0,1, \hdots $} \label{alg: termination-condi}
\State $\pi^p \leftarrow$ Solve Problem~\eqref{eq:sampling_rule_timex} with current state $\state(k)$.\label{lin:solve_sampling_problem}
 \If{Problem~\eqref{eq:sampling_rule_timex} is infeasible} terminate \label{lin:full_exploration_terminate} \EndIf 
\State $\state(\ki) \leftarrow$ Apply $\pi^p$ to $\dyntrueVec$ for $\totHorizon$ steps and \\ \qquad \qquad collect $\mathcal{D}_c \coloneqq \{ (\state_{\h+1}, \stateAct_{\h}) | \cwidth[\n-1](\stateAct_\h)\geq \epscollect\}$. \label{lin:execute_pessi_policy}
\State Update $\dynSet[\n]$ model with $\mathcal{D}_{\n+1} \leftarrow \mathcal{D}_\n \cup \mathcal{D}_c$. 
\EndFor
\end{algorithmic}
\label{alg:full_domain_exploration_basic}
\end{algorithm}
\end{minipage}}

\savebox{\mdpfigright}{
\begin{minipage}[t]{.35\textwidth}
\centering
    \scalebox{0.45}{
  
\tikzset {_4g4mb9fal/.code = {\pgfsetadditionalshadetransform{ \pgftransformshift{\pgfpoint{9 bp } { 0 bp }  }  \pgftransformrotate{0 }  \pgftransformscale{6 }  }}}
\pgfdeclarehorizontalshading{_la6g34qx3}{150bp}{rgb(0bp)=(1,1,1);
rgb(37.5bp)=(1,1,1);
rgb(62.5bp)=(0,0,0);
rgb(100bp)=(0,0,0)}
\tikzset{_d2wjdkpvv/.code = {\pgfsetadditionalshadetransform{\pgftransformshift{\pgfpoint{9 bp } { 0 bp }  }  \pgftransformrotate{0 }  \pgftransformscale{6 } }}}
\pgfdeclarehorizontalshading{_9tfu6ysn1} {150bp} {color(0bp)=(transparent!90);
color(37.5bp)=(transparent!90);
color(62.5bp)=(transparent!10);
color(100bp)=(transparent!10) } 
\pgfdeclarefading{_6milg03lk}{\tikz \fill[shading=_9tfu6ysn1,_d2wjdkpvv] (0,0) rectangle (50bp,50bp); } 
\tikzset{every picture/.style={line width=0.75pt}} 

\begin{tikzpicture}[x=0.75pt,y=0.75pt,yscale=-1,xscale=1]

\path  [shading=_la6g34qx3,_4g4mb9fal,path fading= _6milg03lk ,fading transform={xshift=2}] (401.9,377.29) -- (264.72,376.76) -- (245.28,357.18) -- (245.81,220) -- (314.67,151.67) -- (314.67,151.67) -- (470.76,308.96) -- (470.76,308.96) -- cycle ; 
 \draw   (401.9,377.29) -- (264.72,376.76) -- (245.28,357.18) -- (245.81,220) -- (314.67,151.67) -- (314.67,151.67) -- (470.76,308.96) -- (470.76,308.96) -- cycle ; 

\draw [color={rgb, 255:red, 0; green, 0; blue, 0 }  ,draw opacity=0.52 ][line width=2.25]    (346.67,194) .. controls (351.83,266.33) and (403.83,283) .. (436.5,292.83) ;
\draw [line width=2.25]    (347,168.67) .. controls (373.67,124) and (391.75,151.17) .. (409.13,182.83) .. controls (426.5,214.5) and (438.5,261.33) .. (443.67,277.33) ;
\draw  [fill={rgb, 255:red, 0; green, 0; blue, 0 }  ,fill opacity=1 ][line width=3.75]  (419.7,261.85) .. controls (419.7,261.14) and (420.28,260.57) .. (420.98,260.57) .. controls (421.69,260.57) and (422.26,261.14) .. (422.26,261.85) .. controls (422.26,262.55) and (421.69,263.12) .. (420.98,263.12) .. controls (420.28,263.12) and (419.7,262.55) .. (419.7,261.85) -- cycle ;
\draw [color={rgb, 255:red, 0; green, 119; blue, 255 }  ,draw opacity=1 ][line width=1.5]  [dash pattern={on 5.63pt off 4.5pt}]  (253.7,366.85) .. controls (294.39,372.67) and (297.05,359.33) .. (305.05,337.33) .. controls (313.05,315.33) and (353.05,300.67) .. (371.05,311.33) .. controls (388.42,321.63) and (407.04,299.02) .. (419.7,265.67) ;
\draw [shift={(421.05,262)}, rotate = 109.72] [fill={rgb, 255:red, 0; green, 119; blue, 255 }  ,fill opacity=1 ][line width=0.08]  [draw opacity=0] (13.4,-6.43) -- (0,0) -- (13.4,6.44) -- (8.9,0) -- cycle    ;
\draw [shift={(293,361.35)}, rotate = 143.42] [fill={rgb, 255:red, 0; green, 119; blue, 255 }  ,fill opacity=1 ][line width=0.08]  [draw opacity=0] (13.4,-6.43) -- (0,0) -- (13.4,6.44) -- (8.9,0) -- cycle    ;
\draw [shift={(339.92,310.19)}, rotate = 161.94] [fill={rgb, 255:red, 0; green, 119; blue, 255 }  ,fill opacity=1 ][line width=0.08]  [draw opacity=0] (13.4,-6.43) -- (0,0) -- (13.4,6.44) -- (8.9,0) -- cycle    ;
\draw [shift={(406.42,292.93)}, rotate = 124.07] [fill={rgb, 255:red, 0; green, 119; blue, 255 }  ,fill opacity=1 ][line width=0.08]  [draw opacity=0] (13.4,-6.43) -- (0,0) -- (13.4,6.44) -- (8.9,0) -- cycle    ;
\draw  [fill={rgb, 255:red, 0; green, 0; blue, 0 }  ,fill opacity=1 ][line width=3.75]  (253.7,366.85) .. controls (253.7,366.14) and (254.28,365.57) .. (254.98,365.57) .. controls (255.69,365.57) and (256.26,366.14) .. (256.26,366.85) .. controls (256.26,367.55) and (255.69,368.12) .. (254.98,368.12) .. controls (254.28,368.12) and (253.7,367.55) .. (253.7,366.85) -- cycle ;
\draw [color={rgb, 255:red, 0; green, 0; blue, 0 }  ,draw opacity=1 ][line width=1.5]  [dash pattern={on 5.63pt off 4.5pt}]  (254.98,365.57) .. controls (294.11,352.99) and (290.67,339.84) .. (288.23,316.55) .. controls (285.8,293.27) and (315.34,262.58) .. (336.19,264.29) .. controls (356.32,265.94) and (366.91,244.89) .. (371.82,212.26) ;
\draw [shift={(372.33,208.67)}, rotate = 97.67] [fill={rgb, 255:red, 0; green, 0; blue, 0 }  ,fill opacity=1 ][line width=0.08]  [draw opacity=0] (13.4,-6.43) -- (0,0) -- (13.4,6.44) -- (8.9,0) -- cycle    ;
\draw [shift={(287.9,343.42)}, rotate = 117.46] [fill={rgb, 255:red, 0; green, 0; blue, 0 }  ,fill opacity=1 ][line width=0.08]  [draw opacity=0] (13.4,-6.43) -- (0,0) -- (13.4,6.44) -- (8.9,0) -- cycle    ;
\draw [shift={(307.71,276.89)}, rotate = 135.98] [fill={rgb, 255:red, 0; green, 0; blue, 0 }  ,fill opacity=1 ][line width=0.08]  [draw opacity=0] (13.4,-6.43) -- (0,0) -- (13.4,6.44) -- (8.9,0) -- cycle    ;
\draw [shift={(365.37,239.3)}, rotate = 111.87] [fill={rgb, 255:red, 0; green, 0; blue, 0 }  ,fill opacity=1 ][line width=0.08]  [draw opacity=0] (13.4,-6.43) -- (0,0) -- (13.4,6.44) -- (8.9,0) -- cycle    ;
\draw  [fill={rgb, 255:red, 0; green, 0; blue, 0 }  ,fill opacity=1 ][line width=3.75]  (370.7,210.85) .. controls (370.7,210.14) and (371.28,209.57) .. (371.98,209.57) .. controls (372.69,209.57) and (373.26,210.14) .. (373.26,210.85) .. controls (373.26,211.55) and (372.69,212.12) .. (371.98,212.12) .. controls (371.28,212.12) and (370.7,211.55) .. (370.7,210.85) -- cycle ;
\draw [line width=1.5]  [dash pattern={on 5.63pt off 4.5pt}]  (375.66,212.13) -- (420.21,259.66) ;
\draw [shift={(422.26,261.85)}, rotate = 226.85] [color={rgb, 255:red, 0; green, 0; blue, 0 }  ][line width=1.5]    (14.21,-4.28) .. controls (9.04,-1.82) and (4.3,-0.39) .. (0,0) .. controls (4.3,0.39) and (9.04,1.82) .. (14.21,4.28)   ;
\draw [shift={(373.61,209.95)}, rotate = 46.85] [color={rgb, 255:red, 0; green, 0; blue, 0 }  ][line width=1.5]    (14.21,-4.28) .. controls (9.04,-1.82) and (4.3,-0.39) .. (0,0) .. controls (4.3,0.39) and (9.04,1.82) .. (14.21,4.28)   ;
\draw [line width=1.5]  [dash pattern={on 5.63pt off 4.5pt}]  (372.33,254.33) -- (372.17,147.67) ;
\draw [shift={(372.17,144.67)}, rotate = 89.92] [color={rgb, 255:red, 0; green, 0; blue, 0 }  ][line width=1.5]    (14.21,-4.28) .. controls (9.04,-1.82) and (4.3,-0.39) .. (0,0) .. controls (4.3,0.39) and (9.04,1.82) .. (14.21,4.28)   ;
\draw [shift={(372.33,257.33)}, rotate = 269.92] [color={rgb, 255:red, 0; green, 0; blue, 0 }  ][line width=1.5]    (14.21,-4.28) .. controls (9.04,-1.82) and (4.3,-0.39) .. (0,0) .. controls (4.3,0.39) and (9.04,1.82) .. (14.21,4.28)   ;
\draw [line width=1.5]  [dash pattern={on 5.63pt off 4.5pt}]  (422.33,286.67) -- (422.33,219) ;
\draw [shift={(422.33,216)}, rotate = 90] [color={rgb, 255:red, 0; green, 0; blue, 0 }  ][line width=1.5]    (14.21,-4.28) .. controls (9.04,-1.82) and (4.3,-0.39) .. (0,0) .. controls (4.3,0.39) and (9.04,1.82) .. (14.21,4.28)   ;
\draw [shift={(422.33,289.67)}, rotate = 270] [color={rgb, 255:red, 0; green, 0; blue, 0 }  ][line width=1.5]    (14.21,-4.28) .. controls (9.04,-1.82) and (4.3,-0.39) .. (0,0) .. controls (4.3,0.39) and (9.04,1.82) .. (14.21,4.28)   ;

\draw (273.6,246.17) node [anchor=north west][inner sep=0.75pt]  [font=\LARGE]  {$\dynVec^s$};
\draw (328.26,311.5) node [anchor=north west][inner sep=0.75pt]  [font=\LARGE,color={rgb, 255:red, 0; green, 119; blue, 255 }  ,opacity=1 ]  {$\dyntrueVec$};
\draw (377.43,226.67) node [anchor=north west][inner sep=0.75pt]  [font=\LARGE]  {$d_{\h}{}$};
\draw (372.26,170) node [anchor=north west][inner sep=0.75pt]  [font=\LARGE]  {$\epsterminal$};
\draw (401,227) node [anchor=north west][inner sep=0.75pt]  [font=\LARGE]  {$\epscollect$};

\end{tikzpicture}} 
\end{minipage}     }
\section{Safe guaranteed dynamics exploration}

\label{sec:full_expl_theory}
This section provides the general framework to ensure \textit{maximum safe dynamics learning}, while reward maximization is addressed later in Section~\ref{sec:reward_maxim_exploration}.

\looseness-1
\mypar{Main Idea} 
We execute a policy that pessimistically ensures safety for all dynamics $\dynVec \in \dynSet[\n]$, while optimistically planning to visit informative states with some dynamics $\dynVec^s\in\dynSet[\n]$. 
If the resulting trajectory of the true system $\dyntrueVec$ deviates significantly from the trajectory predicted with $\dynVec^s$, then we \emph{gain information} about the unknown dynamics by observing this discrepancy. Otherwise, if they are close enough, then 
we still \emph{gain information} by reaching the intended informative states. 
In a nutshell; pessimism ensures safety, and no optimistic plan is bad, since we gain information either way.


\begin{figure}
\begin{subfigure}[b]{0.45\columnwidth}
    \centering
    \scalebox{0.55}{
 
\tikzset{
pattern size/.store in=\mcSize, 
pattern size = 5pt,
pattern thickness/.store in=\mcThickness, 
pattern thickness = 0.3pt,
pattern radius/.store in=\mcRadius, 
pattern radius = 1pt}
\makeatletter
\pgfutil@ifundefined{pgf@pattern@name@_ow7fk99jj}{
\pgfdeclarepatternformonly[\mcThickness,\mcSize]{_ow7fk99jj}
{\pgfqpoint{-\mcThickness}{-\mcThickness}}
{\pgfpoint{\mcSize}{\mcSize}}
{\pgfpoint{\mcSize}{\mcSize}}
{
\pgfsetcolor{\tikz@pattern@color}
\pgfsetlinewidth{\mcThickness}
\pgfpathmoveto{\pgfpointorigin}
\pgfpathlineto{\pgfpoint{0}{\mcSize}}
\pgfusepath{stroke}
}}
\makeatother
\tikzset{every picture/.style={line width=0.75pt}} 

\begin{tikzpicture}[x=0.75pt,y=0.75pt,yscale=-1,xscale=1]

\draw  [fill={rgb, 255:red, 128; green, 128; blue, 128 }  ,fill opacity=0.3 ] (658.94,211.05) .. controls (658.94,184.56) and (680.42,163.08) .. (706.91,163.08) -- (1025.64,163.08) .. controls (1052.13,163.08) and (1073.61,184.56) .. (1073.61,211.05) -- (1073.61,354.95) .. controls (1073.61,381.44) and (1052.13,402.92) .. (1025.64,402.92) -- (706.91,402.92) .. controls (680.42,402.92) and (658.94,381.44) .. (658.94,354.95) -- cycle ;
\draw  [fill={rgb, 255:red, 108; green, 215; blue, 108 }  ,fill opacity=0.6 ] (731.5,198.5) .. controls (792,213.5) and (900,180.5) .. (971,189) .. controls (1042,197.5) and (1065,284.17) .. (1051.5,356) .. controls (1038,427.83) and (813.6,376) .. (797.2,380.8) .. controls (780.8,385.6) and (704.1,374.1) .. (697,348.5) .. controls (689.9,322.9) and (671,183.5) .. (731.5,198.5) -- cycle ;
\draw  [fill={rgb, 255:red, 108; green, 215; blue, 108 }  ,fill opacity=0.3 ] (739.54,207.45) .. controls (797.04,221) and (899.68,191.18) .. (967.16,198.86) .. controls (1034.64,206.54) and (1056.5,284.86) .. (1043.67,349.78) .. controls (1030.84,414.69) and (817.57,367.85) .. (801.98,372.19) .. controls (786.39,376.53) and (713.5,366.13) .. (706.75,343) .. controls (700,319.87) and (682.04,193.89) .. (739.54,207.45) -- cycle ;
\draw  [fill={rgb, 255:red, 80; green, 227; blue, 194 }  ,fill opacity=0.67 ] (801.57,339.9) .. controls (789.11,313.35) and (796,280.49) .. (816.96,266.51) .. controls (837.91,252.52) and (865,262.71) .. (877.45,289.26) .. controls (889.9,315.81) and (883.01,348.67) .. (862.06,362.66) .. controls (841.1,376.64) and (814.02,366.46) .. (801.57,339.9) -- cycle ;
\draw  [line width=3.75]  (818.2,348.39) .. controls (818.2,347.68) and (817.66,347.11) .. (816.99,347.11) .. controls (816.32,347.11) and (815.78,347.68) .. (815.78,348.39) .. controls (815.78,349.1) and (816.32,349.67) .. (816.99,349.67) .. controls (817.66,349.67) and (818.2,349.1) .. (818.2,348.39) -- cycle ;
\draw [color={rgb, 255:red, 233; green, 160; blue, 49 }  ,draw opacity=1 ][line width=3]    (816.99,349.67) .. controls (745.91,407.5) and (661.4,261.33) .. (731.7,289) .. controls (800.95,316.25) and (726.43,227.72) .. (827.84,273.8) ;
\draw [shift={(832.59,276)}, rotate = 205.18] [fill={rgb, 255:red, 233; green, 160; blue, 49 }  ,fill opacity=1 ][line width=0.08]  [draw opacity=0] (16.97,-8.15) -- (0,0) -- (16.97,8.15) -- cycle    ;
\draw    (855,291.33) .. controls (856.61,293.08) and (856.61,294.79) .. (854.98,296.46) .. controls (853.45,298.22) and (853.63,299.86) .. (855.5,301.39) .. controls (857.42,302.66) and (857.75,304.29) .. (856.48,306.28) .. controls (855.31,308.4) and (855.74,310.02) .. (857.78,311.13) .. controls (859.84,312.36) and (860.23,313.95) .. (858.95,315.9) .. controls (857.4,317.4) and (857.11,319.01) .. (858.1,320.73) .. controls (857.17,323.15) and (855.66,323.74) .. (853.57,322.5) .. controls (851.76,320.97) and (850.11,321.05) .. (848.64,322.73) .. controls (846.8,324.34) and (845.05,324.27) .. (843.39,322.52) .. controls (841.89,320.76) and (840.25,320.66) .. (838.48,322.21) .. controls (836.69,323.76) and (835.09,323.67) .. (833.66,321.93) .. controls (831.88,320.24) and (830.19,320.29) .. (828.6,322.07) .. controls (827.55,324.03) and (826.01,324.65) .. (823.99,323.92) .. controls (821.5,324.27) and (820.58,325.61) .. (821.22,327.96) .. controls (822.24,330.03) and (821.72,331.68) .. (819.65,332.92) .. controls (817.65,334.23) and (817.31,335.88) .. (818.62,337.86) -- (818.62,337.86) -- (817.38,345.69) ;
\draw [shift={(816.99,348.39)}, rotate = 277.95] [fill={rgb, 255:red, 0; green, 0; blue, 0 }  ][line width=0.08]  [draw opacity=0] (8.93,-4.29) -- (0,0) -- (8.93,4.29) -- cycle    ;
\draw [line width=1.5]    (998.19,199.9) -- (996.08,203.7) ;
\draw [shift={(994.13,207.2)}, rotate = 299.05] [fill={rgb, 255:red, 0; green, 0; blue, 0 }  ][line width=0.08]  [draw opacity=0] (4.64,-2.23) -- (0,0) -- (4.64,2.23) -- cycle    ;
\draw [shift={(1000.13,196.4)}, rotate = 119.05] [fill={rgb, 255:red, 0; green, 0; blue, 0 }  ][line width=0.08]  [draw opacity=0] (4.64,-2.23) -- (0,0) -- (4.64,2.23) -- cycle    ;
\draw  [line width=3.75]  (873.58,336.17) .. controls (873.58,335.46) and (874.15,334.89) .. (874.85,334.89) .. controls (875.56,334.89) and (876.13,335.46) .. (876.13,336.17) .. controls (876.13,336.87) and (875.56,337.45) .. (874.85,337.45) .. controls (874.15,337.45) and (873.58,336.87) .. (873.58,336.17) -- cycle ;
\draw  [line width=3.75]  (857.42,291.33) .. controls (857.42,290.63) and (856.88,290.05) .. (856.21,290.05) .. controls (855.54,290.05) and (855,290.63) .. (855,291.33) .. controls (855,292.04) and (855.54,292.61) .. (856.21,292.61) .. controls (856.88,292.61) and (857.42,292.04) .. (857.42,291.33) -- cycle ;
\draw  [pattern=_ow7fk99jj,pattern size=4pt,pattern thickness=0.75pt,pattern radius=0pt, pattern color={rgb, 255:red, 0; green, 0; blue, 0}] (956.89,393.06) .. controls (927.09,392.06) and (887.56,347.11) .. (877.11,338.44) .. controls (866.67,329.78) and (895.11,349.56) .. (919.11,354.22) .. controls (943.11,358.89) and (970,356) .. (970,343.78) .. controls (970,331.56) and (935.56,319.33) .. (944.56,292.22) .. controls (953.56,265.11) and (914.44,268.22) .. (895.78,279.56) .. controls (877.11,290.89) and (880.27,305.49) .. (876.67,304.89) .. controls (873.07,304.29) and (874.05,292.11) .. (866.28,283) .. controls (858.5,273.89) and (850,272.89) .. (849.78,267.33) .. controls (849.56,261.78) and (874.44,237) .. (904.67,228.39) .. controls (934.89,219.78) and (972.44,229.56) .. (988.22,244.67) .. controls (1004,259.78) and (1016.16,275.61) .. (1016,310.61) .. controls (1015.84,345.61) and (986.69,394.06) .. (956.89,393.06) -- cycle ;
\draw [color={rgb, 255:red, 65; green, 117; blue, 5 }  ,draw opacity=1 ][line width=3]    (889.67,346.33) .. controls (937,394) and (1022.33,352.33) .. (984,310.33) .. controls (946.24,268.96) and (959.91,207.54) .. (871.76,277.38) ;
\draw [shift={(867.67,280.67)}, rotate = 321.01] [fill={rgb, 255:red, 65; green, 117; blue, 5 }  ,fill opacity=1 ][line width=0.08]  [draw opacity=0] (16.97,-8.15) -- (0,0) -- (16.97,8.15) -- cycle    ;
\draw  [line width=3.75]  (964.09,364.67) .. controls (964.09,363.96) and (963.55,363.39) .. (962.88,363.39) .. controls (962.21,363.39) and (961.67,363.96) .. (961.67,364.67) .. controls (961.67,365.37) and (962.21,365.95) .. (962.88,365.95) .. controls (963.55,365.95) and (964.09,365.37) .. (964.09,364.67) -- cycle ;

\draw (804.47,300) node [anchor=north west][inner sep=0.75pt]  [font=\Large]  {$\mathbb{X}_{n}$};
\draw (670.9,177.2) node [anchor=north west][inner sep=0.75pt]  [font=\Huge]  {$\mathcal{X}$};
\draw (720.33,240.97) node [anchor=north west][inner sep=0.75pt]  [font=\LARGE]  {$\pi _{n}^{o,\epsilon }$};
\draw (815.47,274.83) node [anchor=north west][inner sep=0.75pt]  [font=\Large]  {$x(k)$};
\draw (828.13,345.5) node [anchor=north west][inner sep=0.75pt]  [font=\Large]  {$x'$};
\draw (830.5,327.8) node [anchor=north west][inner sep=0.75pt]  [font=\Large]  {$\hat{\pi}$};
\draw (1009.13,180.93) node [anchor=north west][inner sep=0.75pt]  [font=\LARGE]  {$\epsilon $};
\draw (858.33,210.3) node [anchor=north west][inner sep=0.75pt]  [font=\LARGE]  {$\pi _{n}^{p}$};
\draw (980,369.3) node [anchor=north west][inner sep=0.75pt]  [font=\LARGE]  {$w_{n} \geq \epsilon _{d}$};

\end{tikzpicture}} \vspace{-0.7em}
    \caption{\looseness -1 Optimistic and pessimistic trajectories}
    \label{fig:dyn_opti_pessi_definition} 
\end{subfigure}
~
\begin{subfigure}[b]{0.45\columnwidth}
    \centering
    \scalebox{0.55}{\tikzset{every picture/.style={line width=0.75pt}} 

\begin{tikzpicture}[x=0.75pt,y=0.75pt,yscale=-1,xscale=1]

\draw  [fill={rgb, 255:red, 128; green, 128; blue, 128 }  ,fill opacity=0.3 ] (181.2,774.12) .. controls (181.2,747.65) and (202.65,726.2) .. (229.12,726.2) -- (548.08,726.2) .. controls (574.55,726.2) and (596,747.65) .. (596,774.12) -- (596,917.88) .. controls (596,944.35) and (574.55,965.8) .. (548.08,965.8) -- (229.12,965.8) .. controls (202.65,965.8) and (181.2,944.35) .. (181.2,917.88) -- cycle ;
\draw  [fill={rgb, 255:red, 245; green, 166; blue, 35 }  ,fill opacity=0.2 ] (218.5,781) .. controls (230.5,770.5) and (373.7,720.4) .. (420.5,729) .. controls (467.3,737.6) and (493.5,895) .. (488.5,915) .. controls (483.5,935) and (338.9,967.27) .. (305.6,960.2) .. controls (272.3,953.13) and (222.6,923.6) .. (215.5,898) .. controls (208.4,872.4) and (206.5,791.5) .. (218.5,781) -- cycle ;
\draw  [fill={rgb, 255:red, 108; green, 215; blue, 108 }  ,fill opacity=0.3 ] (222,785.13) .. controls (234,774.63) and (366.7,727.4) .. (413.5,736) .. controls (460.3,744.6) and (480.5,893) .. (475.5,913) .. controls (470.5,933) and (338.8,960.07) .. (305.5,953) .. controls (272.2,945.93) and (228.1,918.73) .. (221,893.13) .. controls (213.9,867.53) and (210,795.63) .. (222,785.13) -- cycle ;
\draw  [fill={rgb, 255:red, 245; green, 166; blue, 35 }  ,fill opacity=0.72 ] (222,785.13) .. controls (234,774.63) and (295.6,758.2) .. (322.4,759) .. controls (349.2,759.8) and (369.1,739.4) .. (403.5,745) .. controls (437.9,750.6) and (482.4,891.8) .. (460.8,906.6) .. controls (439.2,921.4) and (432.4,940.6) .. (398.8,939) .. controls (365.2,937.4) and (316,940.6) .. (300.8,943.4) .. controls (285.6,946.2) and (228.1,918.73) .. (221,893.13) .. controls (213.9,867.53) and (210,795.63) .. (222,785.13) -- cycle ;
\draw  [fill={rgb, 255:red, 245; green, 166; blue, 35 }  ,fill opacity=0.2 ] (587.5,788) .. controls (599.5,806) and (585.5,848) .. (544.5,847) .. controls (503.5,846) and (529.5,807) .. (509.5,779) .. controls (489.5,751) and (575.5,770) .. (587.5,788) -- cycle ;
\draw  [fill={rgb, 255:red, 245; green, 166; blue, 35 }  ,fill opacity=0.72 ] (580.3,792) .. controls (588.3,808) and (576.3,839) .. (545.3,838) .. controls (514.3,837) and (537.3,812) .. (520.3,787) .. controls (503.3,762) and (572.3,776) .. (580.3,792) -- cycle ;
\draw  [fill={rgb, 255:red, 108; green, 215; blue, 108 }  ,fill opacity=0.6 ] (311.5,868) .. controls (320.5,889) and (338.5,888) .. (311.5,901) .. controls (284.5,914) and (236.5,907) .. (246.5,874) .. controls (256.5,841) and (302.5,847) .. (311.5,868) -- cycle ;
\draw [color={rgb, 255:red, 0; green, 0; blue, 0 }  ,draw opacity=0.5 ][line width=3.75]    (539,792.4) .. controls (545.44,774.14) and (546.2,764.7) .. (533.38,752.16) ;
\draw [shift={(528.3,747.6)}, rotate = 46.01] [fill={rgb, 255:red, 0; green, 0; blue, 0 }  ,fill opacity=0.5 ][line width=0.08]  [draw opacity=0] (20.27,-9.74) -- (0,0) -- (20.27,9.74) -- (13.46,0) -- cycle    ;
\draw [color={rgb, 255:red, 0; green, 0; blue, 0 }  ,draw opacity=0.5 ][line width=3.75]    (406.7,779.6) .. controls (418.73,754.63) and (444.44,741.63) .. (471.97,742.19) ;
\draw [shift={(478.7,742.6)}, rotate = 179.96] [fill={rgb, 255:red, 0; green, 0; blue, 0 }  ,fill opacity=0.5 ][line width=0.08]  [draw opacity=0] (20.27,-9.74) -- (0,0) -- (20.27,9.74) -- (13.46,0) -- cycle    ;
\draw [color={rgb, 255:red, 0; green, 0; blue, 0 }  ,draw opacity=0.5 ][line width=3.75]    (464.5,915) .. controls (469.98,931.7) and (480.99,941.75) .. (495.02,947.63) ;
\draw [shift={(501.5,950)}, rotate = 203.19] [fill={rgb, 255:red, 0; green, 0; blue, 0 }  ,fill opacity=0.5 ][line width=0.08]  [draw opacity=0] (20.27,-9.74) -- (0,0) -- (20.27,9.74) -- (13.46,0) -- cycle    ;
\draw [color={rgb, 255:red, 0; green, 0; blue, 0 }  ,draw opacity=0.5 ][line width=3.75]    (482.5,892) .. controls (504,897.67) and (516.88,897.49) .. (538.83,890.67) ;
\draw [shift={(545.2,888.6)}, rotate = 163.49] [fill={rgb, 255:red, 0; green, 0; blue, 0 }  ,fill opacity=0.5 ][line width=0.08]  [draw opacity=0] (20.27,-9.74) -- (0,0) -- (20.27,9.74) -- (13.46,0) -- cycle    ;
\draw [color={rgb, 255:red, 0; green, 0; blue, 0 }  ,draw opacity=0.5 ][line width=3.75]    (581.3,820.6) .. controls (585.48,838.76) and (582.77,848.67) .. (572.47,862.34) ;
\draw [shift={(568.3,867.6)}, rotate = 306.11] [fill={rgb, 255:red, 0; green, 0; blue, 0 }  ,fill opacity=0.5 ][line width=0.08]  [draw opacity=0] (20.27,-9.74) -- (0,0) -- (20.27,9.74) -- (13.46,0) -- cycle    ;
\draw [line width=1.5]    (316.1,937.6) -- (317.1,963.6) ;
\draw [shift={(317.1,963.6)}, rotate = 267.8] [color={rgb, 255:red, 0; green, 0; blue, 0 }  ][line width=1.5]    (0,5.09) -- (0,-5.09)(9.95,-2.99) .. controls (6.32,-1.27) and (3.01,-0.27) .. (0,0) .. controls (3.01,0.27) and (6.32,1.27) .. (9.95,2.99)   ;
\draw [shift={(316.1,937.6)}, rotate = 87.8] [color={rgb, 255:red, 0; green, 0; blue, 0 }  ][line width=1.5]    (0,5.09) -- (0,-5.09)(9.95,-2.99) .. controls (6.32,-1.27) and (3.01,-0.27) .. (0,0) .. controls (3.01,0.27) and (6.32,1.27) .. (9.95,2.99)   ;

\draw (331.28,940.53) node [anchor=north west][inner sep=0.75pt]  [font=\LARGE]  {$\epsilon $};
\draw (552.8,873.67) node [anchor=north west][inner sep=0.75pt]  [font=\LARGE]  {$\upPi ^{\star }$};
\draw (271,867.27) node [anchor=north west][inner sep=0.75pt]  [font=\LARGE]  {$\upPi _{0}^{p}$};
\draw (489,733.27) node [anchor=north west][inner sep=0.75pt]  [font=\LARGE]  {$\upPi ^{\star ,\epsilon }$};
\draw (284,786.27) node [anchor=north west][inner sep=0.75pt]  [font=\LARGE]  {$\upPi _{c}^{\star ,\epsilon }$};
\draw (510,934.27) node [anchor=north west][inner sep=0.75pt]  [font=\LARGE]  {$\upPi _{\overline{n}}^{p}$};
\draw (203.43,741) node [anchor=north west][inner sep=0.75pt]  [font=\Huge]  {$\upPi $};

\end{tikzpicture}} 
    \caption{\looseness -1 Convergence of pessimistic policy set} 
    \label{fig:dyn_exploration_convergence} 
\end{subfigure}
\caption{ 
\looseness -1 
Illustration of policy set in (a) state space and (b) policy space. In \cref{fig:dyn_opti_pessi_definition}, the cyan region denotes the (invariant) safe set $\safeInit{\n}$ and the green region represents the state constraint $\X$. The shaded region shows the reachable set under a pessimistic policy, which starts in the safe set and returns to it while satisfying the constraints. The green curve shows an informative trajectory ensuring sampling condition~\eqref{eq:sampling_rule_timex}. 
The orange curve shows a trajectory under another optimistic policy that ensures constraints are satisfied with an $\epsguarantee$-margin and is appended in the beginning by a small horizon $\delta \h$ to move from $\state(k) \to \state'$ via policy $\hat{\pi}$. 
\cref{fig:dyn_exploration_convergence} shows \cref{obj:maximum_exploration}, where due to exploration the pessimistic policy set starting from $\pessiSet[]{0}$ expands to $\pessiSet[]{\nfin}$, and covers the connected true policy set $\truePolicySet[,\epsguarantee]{c}$. Note that $\truePolicySet[,\epsguarantee]{c}$ is a subset of $\truePolicySet[,\epsguarantee]{}$, which is, in general, disconnected and thus cannot be discovered by executing only safe policies.} 
\label{fig:exploration_process}
\end{figure}
\subsection{\!Maximum safe dynamics learning via guaranteed exploration in policy space}

\looseness -1 
The exploration process is illustrated in~\cref{fig:exploration_process}. We start in some (invariant) safe set $\safeInit{n} \subseteq \X$, where the dependence on $\n$ highlights that the set can also expand with online measurements. 
We optimize a plan that satisfies the constraints and ends again in the safe set for all possible dynamics $\dynVec\in\dynSet[n]$. 
In addition, one of the trajectories in this set also (optimistically) reaches an informative state. 
Intuitively, we achieve \textit{maximum safe dynamics learning} when the set of pessimistically safe policies includes all policies that are safe (with some tolerance $\epsilon$) for the true dynamics. 
To formalize this, we first define the (unknown) true $\epsilon$-safe policy set: 
\begin{multline} \label{eq:true_policy_set}
    \truePolicySet[,\epsilon]{\n}(X; \Horizon) \coloneqq \{ \pi \in \Pi_{\Horizon}~|~\exists \state_0 \in X: \state_{\h+1} = \dyntrueVec(\state_\h, \coninput_\h) + \noise_\h, ~\forall \noise_\h \in \W, \\
    \state_\h \in \X\ominus\ball{\epsguarantee},  
    \coninput_{\h} \coloneqq \pi_\h(\state_\h) \in \inputSpace \ominus\ball{\epsguarantee}, \forall \h \in \Intrange{0}{\Horizon-1}, 
\state_\Horizon \in \safeInit{\n}\ominus \ball{\epsguarantee}\}.
\end{multline}

This denotes the set of all policies that, when applied to the true system starting from some state in a set $X\subseteq\X$, return the system to the safe set while satisfying all constraints with an $\epsguarantee$-margin. 
The user-chosen tolerance $\epsilon$ accounts for the fact that dynamics can be learned only up to a nonzero tolerance in finite time. It can be any positive constant, subject to a lower bound proportional to $\noisebound$, see \cref{apxsec:dyn_exploration_proof}. For example, in the case of no noise $\noisebound=0$, the tolerance $\epsguarantee > 0$ can be chosen arbitrarily small.

To approximate this set via exploration, we define an inner and outer approximation of the true policy set~\eqref{eq:true_policy_set}, namely, pessimistic and optimistic policy sets, respectively, as follows:
\begin{multline}
\pessiSet[]{\n}(X;\Horizon) \coloneqq \{ \pi \in \Pi_{\Horizon} ~| \exists \state_0 \in X: \state_{\h+1} = \dynVec(\state_\h, \coninput_\h) + \noise_\h, \forall \noise_\h \in \W,  \forall \dynVec \in \dynSet[\n],\\ 
\state_\h \in \X,  \coninput_{\h}\coloneqq \pi_\h(\state_\h) \in \inputSpace, \forall \h \in \Intrange{0}{\Horizon-1}, 
\state_\Horizon \in \safeInit{\n} \};
\label{eq:def_pessi_set}
\end{multline}
\begin{multline}
    \optiSet[]{\n}(X;\Horizon) \coloneqq \{ \pi \in \Pi_{\Horizon} | \exists \dynVec \in \dynSet[\n], \exists \state_0 \in X: \state_{\h+1} = \dynVec(\state_\h, \coninput_\h) + \noise_\h, ~\forall \noise_\h \in \W, \\
    \state_\h \in \X\ominus\ball{\epsguarantee},  
    \coninput_{\h} \coloneqq \pi_\h(\state_\h) \in \inputSpace\ominus\ball{\epsguarantee}, \forall \h \in \Intrange{0}{\Horizon-1}, 
\state_\Horizon \in \safeInit{\n}\ominus \ball{\epsguarantee}\}. \label{eq:def_opti_set}
\end{multline}

The \emph{pessimistic} policies ensure that \textit{for all dynamics} $\dynVec \in \dynSet[\n]$, the system is safe and returns to the safe set $\state_\Horizon \in \safeInit{\n}$, as shown with the dashed region in \cref{fig:dyn_opti_pessi_definition}.
In contrast, the \emph{optimistic} policies ensure \textit{at least one dynamics} is safe while keeping a tolerance $\epsguarantee>0$ with respect to the constraint. These set definitions naturally satisfy  $\pessiSet[]{\n}(X;\Horizon)\subseteq \truePolicySet[]{\n}(X;\Horizon)  \subseteq \optiSetwe[]{\n}(X;\Horizon)$,$\forall n\in\mathbb{N}, X\subseteq\mathcal{X}$, given $\dyntrueVec\in\dynSet[n]$ (cf. \cref{prop:over_under_approx} in \cref{apxsec:dyn_exploration_proof}).

\looseness -1 Through continuous exploration, we can expand the pessimistic policy set $\pessiSet[]{\n}(X;\Horizon)$, however, we cannot hope to discover disconnected safe policies which require executing possibly unsafe policies during exploration; see ~\cref{fig:dyn_exploration_convergence} for an illustration. 
Hence, we restrict our theoretical analysis to the \emph{connected policy set} \mbox{$\truePolicySet[]{c} (X;\Horizon) \subseteq \truePolicySet[]{}(X;\Horizon)$}, 
which satisfies the following connectedness condition: \mbox{$\forall \pi^\star \in \truePolicySet[]{c}(X;\Horizon)$}, there exists a continuous curve \mbox{$\rho:[0,1] \to \truePolicySet[]{c}(X;\Horizon)$}, such that $\rho(0)=\pi^\star , \rho(1) \in \pessiSet[]{\n}(X;\Horizon)$.
In addition, during recursive planning, the agent may end in some fixed state $\state(\ki) \in \safeInit{\n}$. But we are interested in the optimistic policy starting from any possible state $\state' \in \safeInit{\n}$, see \cref{fig:dyn_opti_pessi_definition}. Thus, we need to extend our horizon slightly up to $\Delta \Horizon \in\mathbb{N}$, which allows the policy to steer the system from $\state(\ki) \to \state'$ via a policy $\hat{\pi} \in \Pi_{\Delta \Horizon}$ (see controllability property later in \cref{assump:safeSet}). Hence, the best any algorithm can guarantee for exploration is the following:
\begin{tcolorbox}[colframe=white!, top=2pt,left=2pt,right=6pt,bottom=2pt]
\begin{objective}[Maximum safe dynamics exploration] There exists a uniform bound $\n^\star \in \Intp$, such that after some $\nfin \leq  \n^\star$ at the current state $\state(\ki) \!\in \safeInit{\nfin}$, it holds that:  \label{obj:maximum_exploration}
\begin{align}
\forall \state' \in  \safeInit{\nfin}, \pi^\star \in \truePolicySet[,\epsilon]{c,\nfin}(\state'; \Horizon),  \exists \delta \h \in \Intrange{0}{\Delta \Horizon}, \hat{\pi} \in\Pi_{\delta \h}: 
[\hat{\pi}, \pi^\star ] \in \pessiSet[]{\nfin}(\state(\ki);\Horizon + \delta \h) .\nonumber 
\end{align}
\end{objective}
\end{tcolorbox}
Achieving \cref{obj:maximum_exploration} ensures that for every policy $\pi^\star$ in the true policy set $\truePolicySet[,\epsilon]{c}(\safeInit{\nfin}; \Horizon)$, there is a corresponding policy in the pessimistic policy set with slightly extended horizon $\Horizon + \delta \h$. This implies that after exploration within $\nfin \leq \n^\star$ iterations, the $\epsguarantee$-true policy set is effectively a subset of the pessimistic policy set.
Since we can optimize over the pessimistic policy set,
achieving this objective ensures that we have explored enough
to recover close-to-optimal policies from the true set; see~\cref{sec:reward_maxim_exploration}. 

\begin{remark}
    \looseness -1 We consider two horizons, $\Horizon$ and $\Horizon\!+\!\delta \h$, where $\delta \h$ considers the time to move in the safe set; see \cref{fig:dyn_opti_pessi_definition}.
    In case we have a perfect model inside the safe set (known $\dyntrueVec$ and $\noise = 0$), 
    we can exactly steer the system to the desired location in the set $\safeInit{n}$, and \cref{obj:maximum_exploration} can be replaced by the more intuitive result:  
   $\truePolicySet[,\epsilon]{\nfin}(\safeInit{\nfin};\Horizon) \subseteq \optiSet[]{\nfin}(\safeInit{\nfin};\Horizon) \subseteq \pessiSet[]{\nfin}(\safeInit{\nfin};\Horizon)$.
\end{remark}

\subsection{Exploration process for maximum safe dynamics learning}
\looseness -1
Next, we present our novel exploration process, ensuring that, at each planning step, the agent gathers information about the dynamics until a desired tolerance is achieved. 
Specifically, we define three tolerances: \textit{(i)} $\epsguarantee$, a user-defined tolerance that guarantees exploration of the policy set (cf. \cref{obj:maximum_exploration}),
\textit{(ii)} $\epsterminal$, the tolerance on the learned dynamics, 
and \textit{(iii)} $\epscollect$, the tolerance with which measurements are collected.
Given a user-specified $\epsguarantee$ (arbitrary subject to noise bound), the tolerances $\epsterminal$ and $\epscollect$ can be uniquely determined as specified in \cref{apxsec:dyn_exploration_proof}.  

\begin{figure}[t]
    \centering
    \scalebox{0.7}{
  
\tikzset {_4g4mb9fal/.code = {\pgfsetadditionalshadetransform{ \pgftransformshift{\pgfpoint{9 bp } { 0 bp }  }  \pgftransformrotate{0 }  \pgftransformscale{6 }  }}}
\pgfdeclarehorizontalshading{_la6g34qx3}{150bp}{rgb(0bp)=(1,1,1);
rgb(37.5bp)=(1,1,1);
rgb(62.5bp)=(0,0,0);
rgb(100bp)=(0,0,0)}
\tikzset{_d2wjdkpvv/.code = {\pgfsetadditionalshadetransform{\pgftransformshift{\pgfpoint{9 bp } { 0 bp }  }  \pgftransformrotate{0 }  \pgftransformscale{6 } }}}
\pgfdeclarehorizontalshading{_9tfu6ysn1} {150bp} {color(0bp)=(transparent!90);
color(37.5bp)=(transparent!90);
color(62.5bp)=(transparent!10);
color(100bp)=(transparent!10) } 
\pgfdeclarefading{_6milg03lk}{\tikz \fill[shading=_9tfu6ysn1,_d2wjdkpvv] (0,0) rectangle (50bp,50bp); } 
\tikzset{every picture/.style={line width=0.75pt}} 

\begin{tikzpicture}[x=0.75pt,y=0.75pt,yscale=-1,xscale=1]

\path  [shading=_la6g34qx3,_4g4mb9fal,path fading= _6milg03lk ,fading transform={xshift=2}] (401.9,377.29) -- (264.72,376.76) -- (245.28,357.18) -- (245.81,220) -- (314.67,151.67) -- (314.67,151.67) -- (470.76,308.96) -- (470.76,308.96) -- cycle ; 
 \draw   (401.9,377.29) -- (264.72,376.76) -- (245.28,357.18) -- (245.81,220) -- (314.67,151.67) -- (314.67,151.67) -- (470.76,308.96) -- (470.76,308.96) -- cycle ; 

\draw [color={rgb, 255:red, 0; green, 0; blue, 0 }  ,draw opacity=0.52 ][line width=2.25]    (346.67,194) .. controls (351.83,266.33) and (403.83,283) .. (436.5,292.83) ;
\draw [line width=2.25]    (347,168.67) .. controls (373.67,124) and (391.75,151.17) .. (409.13,182.83) .. controls (426.5,214.5) and (438.5,261.33) .. (443.67,277.33) ;
\draw  [fill={rgb, 255:red, 0; green, 0; blue, 0 }  ,fill opacity=1 ][line width=3.75]  (419.7,261.85) .. controls (419.7,261.14) and (420.28,260.57) .. (420.98,260.57) .. controls (421.69,260.57) and (422.26,261.14) .. (422.26,261.85) .. controls (422.26,262.55) and (421.69,263.12) .. (420.98,263.12) .. controls (420.28,263.12) and (419.7,262.55) .. (419.7,261.85) -- cycle ;
\draw [color={rgb, 255:red, 0; green, 119; blue, 255 }  ,draw opacity=1 ][line width=1.5]  [dash pattern={on 5.63pt off 4.5pt}]  (253.7,366.85) .. controls (294.39,372.67) and (297.05,359.33) .. (305.05,337.33) .. controls (313.05,315.33) and (353.05,300.67) .. (371.05,311.33) .. controls (388.42,321.63) and (407.04,299.02) .. (419.7,265.67) ;
\draw [shift={(421.05,262)}, rotate = 109.72] [fill={rgb, 255:red, 0; green, 119; blue, 255 }  ,fill opacity=1 ][line width=0.08]  [draw opacity=0] (13.4,-6.43) -- (0,0) -- (13.4,6.44) -- (8.9,0) -- cycle    ;
\draw [shift={(293,361.35)}, rotate = 143.42] [fill={rgb, 255:red, 0; green, 119; blue, 255 }  ,fill opacity=1 ][line width=0.08]  [draw opacity=0] (13.4,-6.43) -- (0,0) -- (13.4,6.44) -- (8.9,0) -- cycle    ;
\draw [shift={(339.92,310.19)}, rotate = 161.94] [fill={rgb, 255:red, 0; green, 119; blue, 255 }  ,fill opacity=1 ][line width=0.08]  [draw opacity=0] (13.4,-6.43) -- (0,0) -- (13.4,6.44) -- (8.9,0) -- cycle    ;
\draw [shift={(406.42,292.93)}, rotate = 124.07] [fill={rgb, 255:red, 0; green, 119; blue, 255 }  ,fill opacity=1 ][line width=0.08]  [draw opacity=0] (13.4,-6.43) -- (0,0) -- (13.4,6.44) -- (8.9,0) -- cycle    ;
\draw  [fill={rgb, 255:red, 0; green, 0; blue, 0 }  ,fill opacity=1 ][line width=3.75]  (253.7,366.85) .. controls (253.7,366.14) and (254.28,365.57) .. (254.98,365.57) .. controls (255.69,365.57) and (256.26,366.14) .. (256.26,366.85) .. controls (256.26,367.55) and (255.69,368.12) .. (254.98,368.12) .. controls (254.28,368.12) and (253.7,367.55) .. (253.7,366.85) -- cycle ;
\draw [color={rgb, 255:red, 0; green, 0; blue, 0 }  ,draw opacity=1 ][line width=1.5]  [dash pattern={on 5.63pt off 4.5pt}]  (254.98,365.57) .. controls (294.11,352.99) and (290.67,339.84) .. (288.23,316.55) .. controls (285.8,293.27) and (315.34,262.58) .. (336.19,264.29) .. controls (356.32,265.94) and (366.91,244.89) .. (371.82,212.26) ;
\draw [shift={(372.33,208.67)}, rotate = 97.67] [fill={rgb, 255:red, 0; green, 0; blue, 0 }  ,fill opacity=1 ][line width=0.08]  [draw opacity=0] (13.4,-6.43) -- (0,0) -- (13.4,6.44) -- (8.9,0) -- cycle    ;
\draw [shift={(287.9,343.42)}, rotate = 117.46] [fill={rgb, 255:red, 0; green, 0; blue, 0 }  ,fill opacity=1 ][line width=0.08]  [draw opacity=0] (13.4,-6.43) -- (0,0) -- (13.4,6.44) -- (8.9,0) -- cycle    ;
\draw [shift={(307.71,276.89)}, rotate = 135.98] [fill={rgb, 255:red, 0; green, 0; blue, 0 }  ,fill opacity=1 ][line width=0.08]  [draw opacity=0] (13.4,-6.43) -- (0,0) -- (13.4,6.44) -- (8.9,0) -- cycle    ;
\draw [shift={(365.37,239.3)}, rotate = 111.87] [fill={rgb, 255:red, 0; green, 0; blue, 0 }  ,fill opacity=1 ][line width=0.08]  [draw opacity=0] (13.4,-6.43) -- (0,0) -- (13.4,6.44) -- (8.9,0) -- cycle    ;
\draw  [fill={rgb, 255:red, 0; green, 0; blue, 0 }  ,fill opacity=1 ][line width=3.75]  (370.7,210.85) .. controls (370.7,210.14) and (371.28,209.57) .. (371.98,209.57) .. controls (372.69,209.57) and (373.26,210.14) .. (373.26,210.85) .. controls (373.26,211.55) and (372.69,212.12) .. (371.98,212.12) .. controls (371.28,212.12) and (370.7,211.55) .. (370.7,210.85) -- cycle ;
\draw [line width=1.5]  [dash pattern={on 5.63pt off 4.5pt}]  (375.66,212.13) -- (420.21,259.66) ;
\draw [shift={(422.26,261.85)}, rotate = 226.85] [color={rgb, 255:red, 0; green, 0; blue, 0 }  ][line width=1.5]    (14.21,-4.28) .. controls (9.04,-1.82) and (4.3,-0.39) .. (0,0) .. controls (4.3,0.39) and (9.04,1.82) .. (14.21,4.28)   ;
\draw [shift={(373.61,209.95)}, rotate = 46.85] [color={rgb, 255:red, 0; green, 0; blue, 0 }  ][line width=1.5]    (14.21,-4.28) .. controls (9.04,-1.82) and (4.3,-0.39) .. (0,0) .. controls (4.3,0.39) and (9.04,1.82) .. (14.21,4.28)   ;
\draw [line width=1.5]  [dash pattern={on 5.63pt off 4.5pt}]  (372.33,254.33) -- (372.17,147.67) ;
\draw [shift={(372.17,144.67)}, rotate = 89.92] [color={rgb, 255:red, 0; green, 0; blue, 0 }  ][line width=1.5]    (14.21,-4.28) .. controls (9.04,-1.82) and (4.3,-0.39) .. (0,0) .. controls (4.3,0.39) and (9.04,1.82) .. (14.21,4.28)   ;
\draw [shift={(372.33,257.33)}, rotate = 269.92] [color={rgb, 255:red, 0; green, 0; blue, 0 }  ][line width=1.5]    (14.21,-4.28) .. controls (9.04,-1.82) and (4.3,-0.39) .. (0,0) .. controls (4.3,0.39) and (9.04,1.82) .. (14.21,4.28)   ;
\draw [line width=1.5]  [dash pattern={on 5.63pt off 4.5pt}]  (422.33,286.67) -- (422.33,219) ;
\draw [shift={(422.33,216)}, rotate = 90] [color={rgb, 255:red, 0; green, 0; blue, 0 }  ][line width=1.5]    (14.21,-4.28) .. controls (9.04,-1.82) and (4.3,-0.39) .. (0,0) .. controls (4.3,0.39) and (9.04,1.82) .. (14.21,4.28)   ;
\draw [shift={(422.33,289.67)}, rotate = 270] [color={rgb, 255:red, 0; green, 0; blue, 0 }  ][line width=1.5]    (14.21,-4.28) .. controls (9.04,-1.82) and (4.3,-0.39) .. (0,0) .. controls (4.3,0.39) and (9.04,1.82) .. (14.21,4.28)   ;

\draw (273.6,246.17) node [anchor=north west][inner sep=0.75pt]  [font=\LARGE]  {$\dynVec^s$};
\draw (328.26,311.5) node [anchor=north west][inner sep=0.75pt]  [font=\LARGE,color={rgb, 255:red, 0; green, 119; blue, 255 }  ,opacity=1 ]  {$\dyntrueVec$};
\draw (377.43,226.67) node [anchor=north west][inner sep=0.75pt]  [font=\LARGE]  {$d_{\h}{}$};
\draw (372.26,170) node [anchor=north west][inner sep=0.75pt]  [font=\LARGE]  {$\epsterminal$};
\draw (401,227) node [anchor=north west][inner sep=0.75pt]  [font=\LARGE]  {$\epscollect$};

\end{tikzpicture}} 
\caption{ 
\change{Visual illustration of the tolerances $\epsterminal$ and $\epscollect$ used to define the dynamics exploration scheme. 
Trajectories generated under the sampled dynamics $\dynVec^s$ and the true dynamics $\dyntrueVec$ are shown, with the deviation $d_h$ measuring their discrepancy after horizon $h$. The arrows perpendicular to the trajectory manifold indicate the uncertainty, which exceeds $\epsterminal$ and $\epscollect$ at the location of interest.}}
\label{fig:selecting_sampling_rule} 
\end{figure}

\looseness -1 \cref{fig:selecting_sampling_rule} illustrates the relation between $\epscollect$ and $\epsterminal$. 
Suppose that we learn the dynamics up to $\epscollect$ tolerance. 
Then, in the worst case, after $h$-steps, the planned and true path will deviate by at most $(\epscollect +\noisebound) \LipDyn_{h}$, where the noise bound $\noisebound\geq\|\eta(k)\|$ $\forall \eta\in\mathcal{W}$ and $\LipDyn_{h}$ is related to the Lipschitz constant $\LipDyn$. 
Hence, if the distance $d_h$ (\cref{fig:selecting_sampling_rule}) between the planned and true trajectory after $h$-steps exceeds 
$(\epscollect + \noisebound) \LipDyn_{h}$, then there was at least one point in the trajectory where the dynamics was not known up to $\epscollect$ tolerance. 
Hence, by (optimistically) searching for uncertainty $\cwidth[\n](\state_\h, \coninput[\h])$ greater than $\epsterminal \coloneqq \epscollect + \LipWidthPi \LipDyn_{\totHorizon} (\epscollect+\noisebound)$, 
we guarantee to learn about the dynamics. Here, $\totHorizon\coloneqq\Horizon + \Delta \Horizon$ is a constant horizon with $\Delta \Horizon\geq \delta h$ accounting for the appended trajectory, see \cref{assump:safeSet} later.

\looseness -1
\mypar{Exploration scheme} Given this intuition, we achieve~\cref{obj:maximum_exploration} using the following sampling strategy at the $(\n+1)^{th}$ planning iteration:
\begin{align}\label{eq:sampling_rule_timex}
    \mathrm{Find}~\pi^p \in \pessiSet[]{\n}(\state(k);\totHorizon)~  
    :~\cwidth[\n](\state_\h, \coninput[\h]) \geq \epsterminal~\text{for some } \dynVec\in\dynSet[\n], h\in\Intrange{0}{\totHorizon-1},
\end{align}
where the state $\state_{\h+1} = \dynVec(\state_{\h}, \coninput_{\h})$ is propagated 
using action sequence $\coninput_{\h} \coloneqq \pi^p_\h(\state_\h)$, $\h \in \Intrange{0}{\totHorizon-1}$,  
with the optimized pessimistically safe policy $\pi^p \in \pessiSet[]{\n}(\state(k);\totHorizon)$ starting from the current state $\state_0=\state(\ki)$. 
The optimized policy $\pi^p$ ensures safe operation for the horizon $\totHorizon$ and ends in the safe set $\safeInit{\n}$ for all dynamics $\dynVec\in\dynSet[\n]$. 
In addition, the policy ensures that for at least one dynamics $\dynVec\in\dynSet[\n]$, execution leads to an informative location. 
Problem~\eqref{eq:sampling_rule_timex} provides a sufficient condition to guarantee exploration; \cref{rem:cost_full_exploration} outlines objectives that accelerate complete exploration, and we further expand it to maximize cumulative rewards in \cref{sec:reward_maxim_exploration}.
The finite-horizon Problem~\eqref{eq:sampling_rule_timex} can be solved efficiently using optimization algorithms tailored for MPC, see \cref{rem:implementation} for details on the implementation.

\looseness -1 
\mypar{Safe dynamics exploration process} 
\cref{alg:full_domain_exploration_basic} summarizes the proposed framework. 
The agent starts at location \mbox{$
\state(0)\in\safeInit{0}$} and computes the pessimistic policy by solving Problem~\eqref{eq:sampling_rule_timex} in~\cref{lin:solve_sampling_problem}. 
The agent applies the pessimistic policy to the true (unknown) dynamics, yielding a state sequence $\state_\h, \h\in \Intrange{0}{\totHorizon}$ and collects data along the way 
in \cref{lin:execute_pessi_policy}. 
We update the dynamical model with the collected data, and the agent ends up at some state \mbox{$\state(\ki) = \state_{\totHorizon}\in\safeInit{n}$}.  
We then resolve Problem \eqref{eq:sampling_rule_timex} to identify the next safe exploration policy and the process continues until Problem~\eqref{eq:sampling_rule_timex} is infeasible, i.e., $\cwidth[\n](\state_\h,\coninput[\h]) < \epsterminal$ for any state and action $\state_\h, \coninput_\h$ that can be reached with any dynamics $\dynVec\in\dynSet[n]$ using any pessimistically safe policy $\pessiSet[]{\n}(\state(\ki);\totHorizon)$.
This implies that there are no more safely reachable informative states and thus the algorithm terminates in \cref{lin:full_exploration_terminate}. 

\begin{remark}\label{rem:cost_full_exploration}
\looseness -1
Any policy that satisfies the constraints of Problem~\eqref{eq:sampling_rule_timex} can be applied and will ensure satisfaction of~\cref{obj:maximum_exploration}. 
A particularly attractive objective is $\max_{\pi\in\pessiSet[]{\n}(\state_s;\totHorizon)} \max_{\dynVec\in\mathcal{F}_n}$ $\sum_{\h=0}^{\totHorizon-1} \cwidth[\n](x_h,u_h)$, which ensures that the constraints of~\eqref{eq:sampling_rule_timex} hold if the reward exceeds $\epsilon_d \totHorizon$. If rewards are less than $\epsterminal$, Problem~\eqref{eq:sampling_rule_timex} is infeasible, which ensures termination of \cref{alg:full_domain_exploration_basic}.
\end{remark}

\setcounter{algorithm}{0}
\begin{algorithm}[t]
\caption{Safe guaranteed dynamics exploration}
\begin{algorithmic}[1]
\State \textbf{Initialize:} Start at $\state(0) \in \safeInit{0}$, $\dynSet[0]$, $\totHorizon$, Tol. $\epsilon$, Data $\D_0$
\For{$\n = 0,1, \hdots $} \label{alg:termination-condi-sagedynx}
\State $\pi^p \leftarrow$ Solve Problem~\eqref{eq:sampling_rule_timex} with current state $\state(k)$.\label{lin:solve_sampling_problem}
 \If{Problem~\eqref{eq:sampling_rule_timex} is infeasible} terminate \label{lin:full_exploration_terminate} \EndIf 
\State $\state(\ki) \leftarrow$ Apply $\pi^p$ to $\dyntrueVec$ for $\totHorizon$ steps and collect $\mathcal{D}_c \coloneqq \{ (\state_{\h+1}, \stateAct_{\h}), \forall \h \in \Intrange{0}{\totHorizon-1}\}$. \label{lin:execute_pessi_policy}
\State Update $\dynSet[\n]$ model with $\mathcal{D}_{\n+1} \leftarrow \mathcal{D}_\n \cup \mathcal{D}_c$. 
\EndFor
\end{algorithmic}
\label{alg:full_domain_exploration_basic}
\end{algorithm}

\subsection{Theoretical guarantees}
In the following, we prove that the sampling rule~\eqref{eq:sampling_rule_timex} guarantees  \emph{maximum safe dynamics exploration} (\cref{obj:maximum_exploration}). 
To do so, we begin by formalizing the safe set assumption as follows:
\begin{assumption}[Safe set]
\label{assump:safeSet}
The agent starts in a known safe set $\state(0) \in \safeInit{0}$ which $\forall \n \geq 0$ satisfies 
\begin{itemize}[labelsep=0.6em, left=12pt,itemsep=2pt, parsep=0pt, topsep=4pt, partopsep=0pt]
\item 
Monotonicity: $\safeInit{\n} \subseteq \safeInit{\n+1}$;
\item Invariance: $\safeInit{\n}$ is a robust positive invariant set with a terminal policy  $\pi_\mathrm{f} \in \Pi$, i.e., $\forall \dynVec \in \dynSet[\n]$, $\noise \in \W$, $\state \in \safeInit{\n}$, $\dynVec(\state, \pi_{\mathrm{f}}(\state)) + \noise \in \safeInit{\n} \ominus \ball{\epsguarantee}, \state \in \X \ominus \ball{\epsguarantee}, \pi_{\mathrm{f}}(\state) \in \inputSpace\ominus \ball{\epsguarantee}$;
\item Controllability: All dynamics in the set $\dynSet[\n]$ can be controlled to any state in the safe set while satisfying constraints within time $\timeX$, i.e., 
$\forall \state_s, \state_e \in \safeInit{\n},\dynVec \in \dynSet[\n]$, $\exists \delta \h \in\Intrange{0}{\timeX},  \hat{\pi} \in \Pi_{\delta \h}:$ 
$\state_0 = \state_s,  \state_{\h+1} = \dynVec(\state_\h, \coninput_\h), \state_\h \! \in\! \safeInit{\n} \ominus \ball{\epsguarantee},  \coninput_{\h}\!\coloneqq\! \hat{\pi}_\h(\state_\h) \! \in\! \inputSpace\ominus \ball{\epsguarantee}, \forall \h \! \in\! \Intrange{0}{\delta \h - 1}, \state_{\delta \h} = \state_e$.
\end{itemize}
\end{assumption}

\looseness -1
The assumption is comparable to the safe initial seed for safe exploration~\citep{safe-bo-sui15,berkenkamp2017safe,as2024actsafe,prajapat2025safe} and safe terminal sets assumed in the MPC literature~\citep{wabersich2023data,saccani2022multitrajectory,soloperto2023safe,kohler2023analysis}. The safe set can be designed by sampling the dynamics $\dynVec \in \dynSet[\n]$ and computing a common Lyapunov function around the equilibrium point, similar to~\citet[Sec. 6.1]{prajapat2025finite}.
The safe set $\safeInit{\n}$ can be expanded with $\n$ as the agent gathers more data about the dynamics and shrinks the dynamics set $\dynSet[\n]$. 
To show finite time termination of this simple sampling strategy, we also consider a standard regularity assumption on the kernel. For this, we define maximum information capacity $\gammaconst{n} \coloneqq \sup_{\D_\n \subseteq \X\times\U: |\D_\n| \leq D_{0}+\n\totHorizon} I(Y_{\D_\n};\dyntrueVec_{\D_\n})$~\citep{beta_srinivas} that upper bounds the information that can be obtained with a finite number of measurements $D_\n$ collected at any set $\D_\n$, where $I$ denotes the mutual information associated to the GP model, see \cref{lem:mutual_information} in \cref{apxsec:dyn_exploration_proof} for details.

\begin{assumption} \label{assump:sublinear}
    $\!\betaconst[\n]\!\gamma_{\n}$ grows sublinearly in $\n$, i.e., $\betaconst[\n]\!\gamma_{\n} \!< \!\mathcal{O}(\n)$. 
\end{assumption}

\looseness -1 Such an assumption is common in most prior works \citep{safe-bo-sui15,sui2018stagewise,turchetta2016safemdp,berkenkamp2023bayesian,prajapat2022near,beta_srinivas,beta_chowdhury17a} to establish sample complexity or sublinear regret results and is not restrictive. Indeed, due to the bounded $\Bg$ (\cref{assump:q_RKHS}), $\betaconst[\n]\gamma_{\n}$ grows sublinear in ${\n}$ for compact domains $\X$ and commonly used kernels, e.g., linear, squared exponential, Mat\'ern, etc., with sufficient eigen decay \citep{gammaT-vakili21a,beta_srinivas}. 
The following theorem guarantees that the closed-loop system resulting from~\cref{alg:full_domain_exploration_basic} ensures maximum dynamics exploration and safety for the true (unknown) system~\eqref{eq:system_dyn}.
\begin{restatable}[Maximum safe dynamics exploration]{theorm}{restatedynexploration} Let \cref{assump:q_RKHS,assump:safeSet,assump:lipschitz,assump:sublinear} hold. 
Let $\n^{\star}$ be the largest integer satisfying    
$\frac{\n^{\star}}{\beta_{\n^{\star}} \gamma_{\n^{\star}}} \leq \frac{C_1}{\epscollect^2}$
with $C_1 =  8\totHorizon/ \log (1 + \totHorizon \noiseconst)$.
Then, with at least $1-\delta$ probability, \cref{alg:full_domain_exploration_basic} guarantees 
\begin{itemize}[labelsep=0.6em, left=12pt,itemsep=2pt, parsep=0pt, topsep=4pt, partopsep=0pt]
\item safety for all times: $\state(\ki) \in \X, \coninput(\ki) \in \inputSpace ~ \forall \ki \in \Intp$;
\item termination after $\nfin\leq n^\star$ iterations;
\item \cref{obj:maximum_exploration}, ensuring maximum safe dynamics exploration. 
\end{itemize}
\label{thm:maximum_dynamics_exploration}
\end{restatable}
Thus, by executing a pessimistically safe policy and collecting data where uncertainty is larger than $\epscollect$ via~\eqref{eq:sampling_rule_timex}, we achieve dynamics exploration up to the $\epsterminal$ tolerance while being safe at all times for uncertain nonlinear systems. 
This corresponds to maximal safe exploration of the policies up to arbitrary tolerance $\epsguarantee$ subject to noise (\cref{obj:maximum_exploration}), with the total exploration time bounded by $\ki \leq \n^\star \totHorizon$. 

\mypar{Proof sketch} 
In \cref{thm:maximum_dynamics_exploration}, 
safety is ensured by always executing policies from the pessimistic safe set. 
We then show that only a finite number of samples, $\n^\star$, can be collected before the model uncertainty falls below the $\epsterminal$ threshold, building on the analysis in \citep{beta_srinivas,prajapat2025safe}.
While each iteration of Problem~\eqref{eq:sampling_rule_timex} identifies only a single informative state, collecting just one measurement per trajectory would be inefficient. Therefore, we extend the analysis to collect multiple measurements along a single trajectory, without requiring a GP (posterior) update at each step.
A key challenge in our setting is that these measurements must be acquired by actively steering the unknown dynamical system, cf.~\cref{fig:selecting_sampling_rule}.
Once the model uncertainty in the pessimistic set is small, the uncertainty close to the optimistic trajectory is also small, which implies full exploration, i.e., \cref{obj:maximum_exploration}.
The complete proof is provided in \cref{apxsec:dyn_exploration_proof}.

\section{Reward maximization with intrinsic exploration}
\label{sec:reward_maxim_exploration}
\looseness -1 
Building on the general framework for safe dynamics exploration from \cref{sec:full_expl_theory}, we propose \dynExplor, 
an algorithm that focuses on maximizing rewards by reducing uncertainty \emph{only} in the regions essential to achieve optimal operation. 
\dynExplor replans directly whenever new measurements are collected instead of returning back to the safe set, thus enhancing efficiency. Furthermore, the algorithm explores in task-oriented fashion and learns a safe policy $\retPolicy$ that achieves the following objective. 
\begin{tcolorbox}[colframe=white!, top=2pt,left=2pt,right=6pt,bottom=2pt]
\begin{objective}[Safe reward maximization] There exists $\nfin \leq \n^\star$ and a constant $K>0$, such that the known safe policy $\retPolicy \in\truePolicySet{\nfin}(x(\ki), \Horizon + \delta \h)$ computed at the current state $x(\ki) \in \safeInit{\nfin}$ satisfies: \label{obj:reward_maximization}
\vspace{-0.5em}
\begin{align}\vspace{-0.5em}
\Jobj[]{\state(\ki), \dyntrueVec }{\retPolicy} \geq  \max_{\state^\star \in \safeInit{\nfin}, \pi^\star \in {\truePolicySet[, \epsilon]{c,\nfin}}(\safeInit{\nfin};\Horizon)} \Jobj[]{\state^\star, \dyntrueVec}{\pi^\star}    - K \epsguarantee.
\end{align}
\end{objective}
\end{tcolorbox}
\begin{algorithm}[!t]
\caption{Reward maximization with intrinsic exploration (\dynExplor)}
\begin{algorithmic}[1]
\State \textbf{Initialize:} Start at $\state(0) \in \safeInit{0}$, $\dynSet[0]$, $\totHorizon$, Tolerance $\epsilon$, Data $\D_0$.
\For{$\n = 0, 1, \hdots $} 
 \State $\Jobj[]{\state^o, \dynVec^o}{\optiPolicy} \leftarrow$ Solve the optimistic problem \eqref{eq:opti_obj}. \label{lin:solve_opti_prob}
 \State $\Jobj[\mathrm{p}]{\state^{\mathrm{p}}, \bm{\mu}_\n}{\pessiPolicy} \leftarrow$ Solve the pessimistic problem \eqref{eq:pessi_obj}. \label{lin:solve_pessi_prob} 
\If {$\Jobj[\mathrm{p}]{\state^{\mathrm{p}}, \bm{\mu}_\n}{\pessiPolicy} \geq \Jobj[]{\state^o, \dynVec^o}{\optiPolicy} - K \epsguarantee $} 
 \label{lin:MPC:go_termination}
 \State $\!\!\!\state(\ki) \leftarrow$ Return to the safe set using the previously optimized policy. 
    \State 
   \!\!\! Return $\retPolicy \leftarrow$ $[\hat{\pi},\! \pessiPolicy]$,\! where $\hat{\pi}$ steers $\state(\ki)$ to $\state^{\mathrm{p}}$ with $\bm{\mu}_\n$
   and terminate.%
\label{lin:MPC:move_SteadyState}
\EndIf
\State $\pessiPolicy[e], \nu \leftarrow$ Solve Problem \eqref{eq:slack_there_exists_goal}.  \label{lin:solve_relax_prob}
\If{$\nu = 0$} \label{eq:case_distinction} \hfill \texttt{(Task oriented safe expansion)} 
\State $\!\!\!\state(\ki) \!\leftarrow\!$  Execute $\pessiPolicy[e]$ on $\dyntrueVec\!$ until  some $h^\star$ such that $\exists h \leq \h^\star$ with $\cwidth[\n-1](\stateAct_{\h})\!\geq \!\epscollect$. 
\label{lin:moveTOsampling} 
\Else \hfill \texttt{(No informative location found, Return to the safe set)} 
\State $\!\!\!\state(\ki) \leftarrow$ Continue with policy optimized at \mbox{$\n - 1$} until $\safeInit{\n}$ is reached.
\State $\!\!\!\pessiPolicy[e] \leftarrow$ Solve Problem~\eqref{eq:slack_there_exists_goal} with $\nu=0$ (hard constraint). \label{lin:return}
 \State $\!\!\!\state(\ki) \leftarrow$ Execute $\pessiPolicy[e]$  
 until  some $h^\star$ such that $\exists h \leq \h^\star$ with $\cwidth[\n-1](\stateAct_{\h})\!\geq \!\epscollect$.
 \label{lin:MPC:move2}
\EndIf
\State Update $\dynSet[\n]$ with $\mathcal{D}_{\n+1} \leftarrow \mathcal{D}_{\n} \cup \mathcal{D}_c$, $\mathcal{D}_c \coloneqq \{ (\state_{\h+1}, \stateAct_{\h})\}_{\h=0}^{\h^\star-1}$.  \label{lin:MPC:updateGP} \hfill \texttt{(Update model)}
\EndFor
\end{algorithmic}
\label{alg:rew_maxim}
\end{algorithm}
\cref{obj:reward_maximization} ensures that, despite a-priori unknown dynamics, we obtain a policy with a performance that is close to the optimal policy $\pi^\star$ starting from the best state $\state^\star$ within the safe set. 
To achieve~\cref{obj:reward_maximization}, naively, one could explore the complete set of pessimistic policies using~\cref{alg:full_domain_exploration_basic} and then
maximize $\Jobj[]{x, \pessiPolicy}{\dyntrue}$ among the known safe policies, yielding a sequential two-stage algorithm. 
However, exploring the complete pessimistic policy set without considering the objective would be extremely inefficient. 
Therefore, we propose an exploration strategy that leverages the optimistic policy set, 
which excludes only those policies that are provably unsafe for all dynamics.  
Hence, reward maximization in the optimistic set
guarantees maximizing among the plausible policies. 
To formalize this, we define the optimistic and pessimistic problems, respectively, as follows:
\begin{flalign}
&\text{Optimistic problem:} \qquad\qquad  \ \ \Jobj[]{\state^o, \dynVec^o}{\optiPolicy} \!\!\!\!\!\!\!&&\coloneqq  \max_{ \state \in \safeInit{\n}, \dynVec \in \dynSet[\n], \pi \in {\optiSet[]{\n}(\state; \Horizon)}} \Jobj[]{\state, \dynVec}{\pi},  &&&\label{eq:opti_obj}
\\
&\text{Pessimistic problem:} \qquad\qquad\Jobj[\mathrm{p}]{\state^{\mathrm{p}}, \bm{\mu}_{\n}}{\pessiPolicy} \!\!\!\!\!\!\!&&\coloneqq  \max_{{\state \in \safeInit{\n},\pi \in \pessiSet[,\epssafeset]{\n}(\state; \Horizon})}  \Jobj[\mathrm{p}]{\state, \bm{\mu}_{\n}}{\pi},&&&\label{eq:pessi_obj}
  \end{flalign}
$$\text{where,}~\Jobj[\mathrm{p}]{\state, \bm{\mu}_{\n}}{\pi} = \Jobj[]{\state, \bm{\mu}_\n}{\pi} - \E \bigg[ \LiprewPi  \sum_{\h=0}^{\Horizon-1}  \sum_{i=0}^{\h} L^{i} \cwidth[\n](\state_{i}, \pi_{i}(\state_{i}))| \state_0 = \state, \pi \bigg], $$  
\looseness -1
and $\state^o,\dynVec^o,\optiPolicy, \state^{\mathrm{p}},\pessiPolicy$ denote the optimized variables. The pessimistic problem maximizes $J^{\mathrm{p}}$, which provides a pessimistic lower bound on the reward $J$ for all $\dynVec\!\!\in\!\! \dynSet[\n]$.  
The states $\state_i$ are propagated with noise under the mean dynamics $\bm{\mu}_{\n}$, which we use for computational simplicity, although $J^{\mathrm{p}}$ can, in principle, be defined using any dynamics.
Here, $\epssafeset\in(0,\epsguarantee)$ 
is a small tolerance to account for the different initial conditions in the safe set $\safeInit{\n}$ using a controllability argument (see \cref{apxsec:rew_maxim} later). 
Since the pessimistic problem provides a lower bound on the objective under the true dynamics, using it for termination allows early stopping, that is, without accurately knowing the dynamics everywhere, it still lets us recover close to optimal performance.

\looseness -1
\mypar{Reward maximization algorithm--\dynExplor} 
The method is summarized in~\cref{alg:rew_maxim}. 
Overall, we optimistically maximize the rewards while pessimistically ensuring safety. 
We start by solving the optimistic problem in \cref{lin:solve_opti_prob}, optimizing over the initial state $\state^o$, optimistic dynamics, resulting in the policy that could potentially achieve the maximum rewards.
Then we solve the pessimistic problem in \cref{lin:solve_pessi_prob} that ensures the safety of the unknown system while maximizing the rewards.
Suppose that the pessimistic rewards are sufficiently close to the optimistic rewards.
In that case, the algorithm terminates in \cref{lin:MPC:go_termination}, signifying that the agent has explored enough to converge close to the optimal solution. 
Otherwise, to continue exploring, we solve the following problem in~\cref{lin:solve_relax_prob}:
\begin{align}\label{eq:slack_there_exists_goal}
\!\!\!\!\pessiPolicy[e] = \!\!\!\!\!\argmax_{ \nu, \pi \in \pessiSet[]{\n}(\state(\ki);\totHorizon)} \!\!\!\! - \lambda \nu + 
\Jobj[\mathrm{any}]{\state(\ki), {\n}}{\pi},
~\text{s.t.}~\cwidth[\n](\state_\h,\coninput_\h) \geq \epsterminal - \nu, \nu \geq 0, h\!\in\!\Intrange{0}{\totHorizon-1},\!\! 
\end{align}
where $\lambda> 0$ is a user-defined large penalty and $\nu$ is a slack variable to ensure the feasibility of the problem. 
Here, $\state_\h$ is the propagated state using any $\dynVec \in \dynSet[\n]$ with optimized policy $\pi$. 
$J^\mathrm{any}$ can be freely designed and can also depend on $\dynSet[\n], \optiPolicy$ or $\dynVec^o$ indicated by $\n$ in \eqref{eq:slack_there_exists_goal}; see~\cref{rem:Jany} for objectives that may speed up exploration by using the solution of the optimistic problem.

\begin{figure}[t]
\begin{subfigure}[b]{0.45\columnwidth}
    \centering
    \scalebox{0.45}{\input{figure/dyn_receding}} 
    \caption{\looseness -1 Re-planning after collecting  measurements}
    \label{fig:dyn_receding_main} 
\end{subfigure}
~
\begin{subfigure}[b]{0.45\columnwidth}
    \centering
    \scalebox{0.45}{\tikzset{every picture/.style={line width=0.75pt}} 

\begin{tikzpicture}[x=0.75pt,y=0.75pt,yscale=-1,xscale=1]

\draw  [fill={rgb, 255:red, 128; green, 128; blue, 128 }  ,fill opacity=0.3 ] (1021.2,488.94) .. controls (1021.2,454.73) and (1048.93,427) .. (1083.14,427) -- (1439.66,427) .. controls (1473.87,427) and (1501.6,454.73) .. (1501.6,488.94) -- (1501.6,674.76) .. controls (1501.6,708.97) and (1473.87,736.7) .. (1439.66,736.7) -- (1083.14,736.7) .. controls (1048.93,736.7) and (1021.2,708.97) .. (1021.2,674.76) -- cycle ;
\draw  [fill={rgb, 255:red, 108; green, 215; blue, 108 }  ,fill opacity=0.6 ] (1124.8,447) .. controls (1185.5,453) and (1364.3,442.8) .. (1436,455.4) .. controls (1507.7,468) and (1446,607.75) .. (1463,686.25) .. controls (1480,764.75) and (1276.5,706) .. (1193.5,715) .. controls (1110.5,724) and (1081.3,718) .. (1070.5,688) .. controls (1059.7,658) and (1066.5,607) .. (1064.5,554) .. controls (1062.5,501) and (1064.1,441) .. (1124.8,447) -- cycle ;
\draw  [fill={rgb, 255:red, 108; green, 215; blue, 108 }  ,fill opacity=0.6 ] (1140,458.67) .. controls (1200.7,464.67) and (1349.63,457.4) .. (1421.33,470) .. controls (1493.03,482.6) and (1437.67,597.5) .. (1454.67,676) .. controls (1471.67,754.5) and (1285.67,695.67) .. (1202.67,704.67) .. controls (1119.67,713.67) and (1092.8,704) .. (1082,674) .. controls (1071.2,644) and (1082,608.33) .. (1080,555.33) .. controls (1078,502.33) and (1079.3,452.67) .. (1140,458.67) -- cycle ;
\draw  [fill={rgb, 255:red, 80; green, 227; blue, 194 }  ,fill opacity=0.67 ] (1109.44,656.82) .. controls (1096.08,628.34) and (1103.63,592.98) .. (1126.31,577.85) .. controls (1148.99,562.71) and (1178.2,573.53) .. (1191.56,602.01) .. controls (1204.92,630.49) and (1197.37,665.85) .. (1174.69,680.99) .. controls (1152.01,696.12) and (1122.8,685.3) .. (1109.44,656.82) -- cycle ;
\draw [color={rgb, 255:red, 0; green, 103; blue, 255 }  ,draw opacity=1 ][line width=2.25]  [dash pattern={on 6.75pt off 4.5pt}]  (1193.5,641) .. controls (1262.73,665.61) and (1325.7,691.3) .. (1358.25,663.5) .. controls (1390.8,635.7) and (1417.45,610) .. (1414.25,558) .. controls (1411.05,506) and (1347,464.5) .. (1288,463.5) .. controls (1229,462.5) and (1217.6,469.4) .. (1192,487) .. controls (1167.17,504.07) and (1131.42,562.54) .. (1140.98,609.84) ;
\draw [shift={(1142,614.2)}, rotate = 255.07] [fill={rgb, 255:red, 0; green, 103; blue, 255 }  ,fill opacity=1 ][line width=0.08]  [draw opacity=0] (14.29,-6.86) -- (0,0) -- (14.29,6.86) -- cycle    ;
\draw [shift={(1284.46,670.75)}, rotate = 193.81] [fill={rgb, 255:red, 0; green, 103; blue, 255 }  ,fill opacity=1 ][line width=0.08]  [draw opacity=0] (14.29,-6.86) -- (0,0) -- (14.29,6.86) -- cycle    ;
\draw [shift={(1405.04,610.13)}, rotate = 117.51] [fill={rgb, 255:red, 0; green, 103; blue, 255 }  ,fill opacity=1 ][line width=0.08]  [draw opacity=0] (14.29,-6.86) -- (0,0) -- (14.29,6.86) -- cycle    ;
\draw [shift={(1360.9,484.42)}, rotate = 31.49] [fill={rgb, 255:red, 0; green, 103; blue, 255 }  ,fill opacity=1 ][line width=0.08]  [draw opacity=0] (14.29,-6.86) -- (0,0) -- (14.29,6.86) -- cycle    ;
\draw [shift={(1229.11,467.83)}, rotate = 347.03] [fill={rgb, 255:red, 0; green, 103; blue, 255 }  ,fill opacity=1 ][line width=0.08]  [draw opacity=0] (14.29,-6.86) -- (0,0) -- (14.29,6.86) -- cycle    ;
\draw [shift={(1147.93,550.37)}, rotate = 291.67] [fill={rgb, 255:red, 0; green, 103; blue, 255 }  ,fill opacity=1 ][line width=0.08]  [draw opacity=0] (14.29,-6.86) -- (0,0) -- (14.29,6.86) -- cycle    ;

\draw [color={rgb, 255:red, 0; green, 0; blue, 0 }  ,draw opacity=1 ][line width=1.5]    (1191.2,631.8) .. controls (1258.08,662.22) and (1327.48,686.52) .. (1358.68,656.12) .. controls (1389.88,625.72) and (1411.88,611.12) .. (1408.68,559.12) .. controls (1405.48,507.12) and (1342.08,457.32) .. (1293.68,456.12) .. controls (1245.28,454.92) and (1220.28,474.02) .. (1194.68,491.62) .. controls (1169.72,508.78) and (1133.09,562.62) .. (1143.13,609.79) ;
\draw [shift={(1144,613.4)}, rotate = 255.07] [fill={rgb, 255:red, 0; green, 0; blue, 0 }  ,fill opacity=1 ][line width=0.08]  [draw opacity=0] (11.61,-5.58) -- (0,0) -- (11.61,5.58) -- cycle    ;
\draw [shift={(1281.13,665.17)}, rotate = 194.11] [fill={rgb, 255:red, 0; green, 0; blue, 0 }  ,fill opacity=1 ][line width=0.08]  [draw opacity=0] (11.61,-5.58) -- (0,0) -- (11.61,5.58) -- cycle    ;
\draw [shift={(1401.01,607.89)}, rotate = 118.36] [fill={rgb, 255:red, 0; green, 0; blue, 0 }  ,fill opacity=1 ][line width=0.08]  [draw opacity=0] (11.61,-5.58) -- (0,0) -- (11.61,5.58) -- cycle    ;
\draw [shift={(1362.67,484.11)}, rotate = 39.1] [fill={rgb, 255:red, 0; green, 0; blue, 0 }  ,fill opacity=1 ][line width=0.08]  [draw opacity=0] (11.61,-5.58) -- (0,0) -- (11.61,5.58) -- cycle    ;
\draw [shift={(1234.24,467.44)}, rotate = 336.93] [fill={rgb, 255:red, 0; green, 0; blue, 0 }  ,fill opacity=1 ][line width=0.08]  [draw opacity=0] (11.61,-5.58) -- (0,0) -- (11.61,5.58) -- cycle    ;
\draw [shift={(1150.57,550.71)}, rotate = 293.56] [fill={rgb, 255:red, 0; green, 0; blue, 0 }  ,fill opacity=1 ][line width=0.08]  [draw opacity=0] (11.61,-5.58) -- (0,0) -- (11.61,5.58) -- cycle    ;
\draw [line width=1.5]    (1149.83,616.22) .. controls (1152.05,615.31) and (1153.65,615.92) .. (1154.63,618.07) .. controls (1155.66,620.19) and (1157.23,620.72) .. (1159.34,619.67) .. controls (1161.45,618.6) and (1163.02,619.11) .. (1164.05,621.19) .. controls (1165.14,623.3) and (1166.72,623.82) .. (1168.79,622.75) .. controls (1170.99,621.74) and (1172.61,622.31) .. (1173.66,624.46) .. controls (1174.61,626.6) and (1176.13,627.18) .. (1178.21,626.2) -- (1180.55,627.14) -- (1187.83,630.26) ;
\draw [shift={(1191.2,631.8)}, rotate = 204.91] [fill={rgb, 255:red, 0; green, 0; blue, 0 }  ][line width=0.08]  [draw opacity=0] (11.61,-5.58) -- (0,0) -- (11.61,5.58) -- cycle    ;
\draw  [line width=3.75]  (1147.3,616.22) .. controls (1147.3,614.82) and (1148.44,613.69) .. (1149.83,613.69) .. controls (1151.23,613.69) and (1152.37,614.82) .. (1152.37,616.22) .. controls (1152.37,617.62) and (1151.23,618.75) .. (1149.83,618.75) .. controls (1148.44,618.75) and (1147.3,617.62) .. (1147.3,616.22) -- cycle ;
\draw  [line width=3.75]  (1188.67,631.8) .. controls (1188.67,630.4) and (1189.8,629.27) .. (1191.2,629.27) .. controls (1192.6,629.27) and (1193.73,630.4) .. (1193.73,631.8) .. controls (1193.73,633.2) and (1192.6,634.33) .. (1191.2,634.33) .. controls (1189.8,634.33) and (1188.67,633.2) .. (1188.67,631.8) -- cycle ;
\draw [line width=1.5]    (1403,451.25) -- (1403.5,469.25) ;
\draw [shift={(1403.5,469.25)}, rotate = 268.41] [color={rgb, 255:red, 0; green, 0; blue, 0 }  ][line width=1.5]    (0,2.91) -- (0,-2.91)(5.68,-2.55) .. controls (3.61,-1.2) and (1.72,-0.35) .. (0,0) .. controls (1.72,0.35) and (3.61,1.2) .. (5.68,2.55)   ;
\draw [shift={(1403,451.25)}, rotate = 88.41] [color={rgb, 255:red, 0; green, 0; blue, 0 }  ][line width=1.5]    (0,2.91) -- (0,-2.91)(5.68,-1.71) .. controls (3.61,-0.72) and (1.72,-0.15) .. (0,0) .. controls (1.72,0.15) and (3.61,0.72) .. (5.68,1.71)   ;

\draw (1132.43,659.43) node [anchor=north west][inner sep=0.75pt]  [font=\huge]  {$\mathbb{X}_{n}$};
\draw (1027.43,462.47) node [anchor=north west][inner sep=0.75pt]  [font=\Huge]  {$\mathcal{X}$};
\draw (1101.08,617.08) node [anchor=north west][inner sep=0.75pt]  [font=\huge]  {$x( k)$};
\draw (1410,439.63) node [anchor=north west][inner sep=0.75pt]  [font=\huge]  {$\epsilon $};
\draw (1116.88,506.82) node [anchor=north west][inner sep=0.75pt]  [font=\huge,color={rgb, 255:red, 0; green, 103; blue, 255 }  ,opacity=1 ,rotate=-359.36]  {$\pi ^{\star }$};
\draw (1354.77,600.33) node [anchor=north west][inner sep=0.75pt]  [font=\huge]  {$\pi ^{\mathrm{p}}$};
\draw (1157.57,588.83) node [anchor=north west][inner sep=0.75pt]  [font=\huge]  {$\hat{\pi }$};
\draw (1166.28,643.68) node [anchor=north west][inner sep=0.75pt]  [font=\huge]  {$x^{\mathrm{p}}$};

\end{tikzpicture}} 
    \caption{\looseness -1 Returned policy behaves close to optimal} 
    \label{fig:dyn_receding_optimality_main} 
\end{subfigure}
    \caption{\looseness -1 Illustration of \dynExplor algorithm during a) exploration and b) convergence after satisfying the termination criteria. 
    The cyan region denotes the safe set $\safeInit{\n}$, and the green region represents the state constraints $\X$. 
    The blue dashed line shows the optimal trajectory achieved by the clairvoyant agent starting anywhere in the safe set $\safeInit{\n}$ and satisfying the constraints with $\epsilon$ margin. 
    In~\cref{fig:dyn_receding_main}, the shaded region shows the reachable set under the optimized pessimistic policy by~\eqref{eq:slack_there_exists_goal}, which starts at the current state $\state(\ki)$, ensures all the dynamics satisfy the constraints, and returns to the safe set.  
    The black line shows the agent’s executed trajectory, with small dots marking collected data and large dots indicating the time of model updates.
    With every model update (increasing $n$), the reachable set (represented by increasing darker shades) predicted with a given policy $\pi$ shrinks since the model uncertainty reduces with data.
    The agent keeps on replanning while ensuring returnability to a known safe set, but without having to actually return. 
    Once the termination criteria is satisfied, the agent returns to the safe set. As shown in ~\cref{fig:dyn_receding_optimality_main}, it then executes the returned policy which first navigates in the safe set for small $\delta \h$ horizon (wiggly line), and then executes the optimized policy $\pessiPolicy$ shown by black line which closely matches the optimal trajectory (blue dashed line).}
    \label{fig:sagedynX_receding_main} \vspace{-0.5em}
\end{figure}

\looseness -1
In Problem~\eqref{eq:slack_there_exists_goal}, with $\nu=0$ we are guaranteed to have sufficient information of \mbox{$\cwidth[\n](\state_\h,\coninput_\h) \geq \epsterminal$} while ensuring a \emph{task-oriented} approach by 
maximizing the $J^\mathrm{any}$ objective. 
We execute the resulting policy $\pessiPolicy[e]$ until some $\h^\star$ such that $\exists \h \leq \h^\star$ with $\cwidth[\n-1](\stateAct_{\h^\star})\!\geq \!\epscollect$
and update the dynamics model using all the gathered data. 
Then we replan directly from the state at $h^\star$, without first going back to the safe set, as shown in \cref{fig:dyn_receding_main}. 
If \mbox{$\nu>0$}, it implies that the agent has not found an informative location from the current state. In this case, it
returns to the safe set, and then solves Problem~\eqref{eq:slack_there_exists_goal} (with \mbox{$\nu=0$}, hard constraint) to explore another region. This hard constraint problem is guaranteed to be feasible (c.f. \cref{thm:rew_maxim}) and yields another informative state \mbox{$\cwidth[n](x,u)>\epsterminal$}, as otherwise the algorithm would already have terminated in~\cref{lin:MPC:go_termination}.
Notably, the case $\nu>0$ (\cref{lin:return}) is required only for technical reasons, but typically does not occur in practice. The algorithm mainly revolves around~\cref{lin:solve_relax_prob}, optimizes the objective, collects data, replans, and keeps on going until \cref{obj:reward_maximization} can be achieved, see \cref{fig:sagedynX_receding_main}. 
All of this is achieved while ensuring safety and maintaining a safe returnable path (pessimistic policy to the safe set $\mathbb{X}_n$) without actually returning to the safe set.


\begin{remark}[Implementation]\label{rem:implementation}
Problems~\eqref{eq:sampling_rule_timex} and \eqref{eq:slack_there_exists_goal} require computing policies $\pi\in\pessiSet{n}$ that ensure constraint satisfaction for all dynamics $\dyn\in\dynSet[\n]$ over a finite horizon $\totHorizon$. 
This can be implemented efficiently by over-approximating this reachable set using a finite number of samples from $f\sim\GP$, as shown in~\citet{prajapat2025finite}. 
By restricting the policy parameterization to simple affine policies, the resulting optimization problem can be solved using standard optimization algorithms for finite-horizon optimal control, while maintaining the safety guarantees. 
\end{remark}

\begin{remark}[Task-oriented exploration with {$\Jobj[\mathrm{any}]{\state_s, {\n}}{\pi}$}] \label{rem:Jany}
Based on the theoretical analysis, it suffices if the uncertainty in the dynamics is small around the optimistically planned trajectory. 
Hence, a natural objective $J^{\mathrm{any}}$ to consider is aiming for informative measurements $\cwidth[\n](\state_\h,\coninput_\h)  \gg \epsilon$ that are also close to the optimistically optimal trajectory.
$J^{\mathrm{any}}$ can be the distance to the optimistic path, 
the first state on the optimistic path, or cumulative rewards and uncertainty along the trajectory. 
\end{remark}

\subsection{Theoretical guarantees}

This section presents the theoretical results for the proposed algorithm \dynExplor. 
To allow for earlier termination and be able to recover the close-to-optimal policy from any state in the safe set $\safeInit{\n}$, we make the following assumption on the dynamics in the safe set.

\begin{assumption}[Dynamics in safe set] \label{assump:dynamics_tol_terminal} $\dyntrueVec$ is known up to $\epscollect \geq 0$ tolerance in the safe set, i.e.,  $\cwidth[\n-1](\state,\coninput) \leq \epscollect,~\forall \state\in\safeInit{\n},~\coninput \in \inputSpace,~n\in\mathbb{N}$.
\end{assumption}

The following theorem guarantees that the closed-loop system resulting from~\cref{alg:rew_maxim} provides a close-to-optimal policy while being safe for the unknown system~\eqref{eq:system_dyn}.
\begin{restatable}[Safe reward maximization with unknown dynamics]{theorm}{restaterewmaximization}
\label{thm:rew_maxim}
Let~\cref{assump:q_RKHS,assump:safeSet,assump:lipschitz,assump:sublinear,assump:dynamics_tol_terminal} hold.
Consider $\n^\star$ as in \cref{thm:maximum_dynamics_exploration}.
With probability at least $1\!-\!\delta$,
\begin{itemize}[labelsep=0.6em, left=12pt,itemsep=2pt, parsep=0pt, topsep=4pt, partopsep=0pt]
\item \!All the optimization problems in~\cref{alg:rew_maxim} are feasible for $\forall \n \geq 0$;
    \item \!\cref{alg:rew_maxim} guarantees safety for the unknown system at all times: $\!\state(\ki) \!\in\! \X, \coninput(\ki) \!\in\! \inputSpace, \forall \ki \!\in\! \Intp$;
    \item \!\cref{alg:rew_maxim} is guaranteed to terminate in at most $\n^\star$ iterations;
    \item \!\cref{alg:rew_maxim} returns policy $\retPolicy\!$ that satisfies \cref{obj:reward_maximization} i.e., achieves close-to-optimal performance.\!\!
\end{itemize}
\end{restatable}
Thus,~\cref{alg:rew_maxim} is guaranteed to terminate in finite time and yields a policy that achieves a performance that is arbitrarily close to optimal performance, i.e., achieves ~\cref{obj:reward_maximization}. Note that once the termination criterion is satisfied, the agent can execute the policy $\pessiPolicy$ and does not need to explore the dynamics everywhere. 
Since exploration is essential to guarantee optimality, in the worst case, the algorithm may need to learn dynamics everywhere, and the exploration time for this scenario is bounded by $\ki \leq \n^\star \totHorizon$.

\looseness -1
\mypar{Proof sketch} Analogous to ~\cref{thm:maximum_dynamics_exploration}, safety is guaranteed by only executing pessimistically safe policies.
However, it is challenging to show that once the termination criteria is satisfied, we can find a policy that satisfies~\cref{obj:reward_maximization}. This is due to the fact that the state reached in the safe set will be different from the state optimized in Problem~\eqref{eq:pessi_obj} used to check the termination criteria. To overcome this, we use the fact that the dynamics is known up to $\epscollect$ tolerance in the safe set, and problem~\eqref{eq:pessi_obj} in the termination criterion uses a threshold $\epsilon'\geq0$ for the distance to the constraints.
Finally, the algorithm terminates in finite time, at the latest once the uncertainty is reduced below $\epsterminal$ everywhere, with the worst-case sample complexity shown in \cref{thm:maximum_dynamics_exploration}. The detailed proof can be found in~\cref{apxsec:rew_maxim}.


\change{The following corollary shows near-optimality of the pessimistic policy derived from \eqref{eq:pessi_obj}:
\begin{corollary}[Same horizon] Let~\cref{assump:q_RKHS,assump:safeSet,assump:lipschitz,assump:sublinear,assump:dynamics_tol_terminal} hold.
Consider $\n^\star$ as in \cref{thm:maximum_dynamics_exploration}. With probability at least $1\!-\!\delta$, there exists $\nfin\leq\n^\star$ and a constant $K'>0$ such that policy $\pessiPolicy$, state $\state^{\mathrm{p}}$ obtained from \eqref{eq:pessi_obj} satisfies,
\begin{align}
    \Jobj[]{\state^{\mathrm{p}}, \dyntrueVec }{\pessiPolicy} \geq  \max_{\state^\star \in \safeInit{\nfin}, \pi^\star \in {\truePolicySet[, \epsilon]{c,\nfin}}(\safeInit{\nfin};\Horizon)} \Jobj[]{\state^\star, \dyntrueVec}{\pi^\star}    -  K'\epsguarantee \label{eq:samehorizon_result}
\end{align}
\end{corollary}
In contrast to \cref{obj:reward_maximization}, this result evaluates the policy obtained from the pessimistic problem \eqref{eq:pessi_obj} with respect to optimal policies defined over the same horizon $\Horizon$.

Once the termination criterion comparing the optimistic and pessimistic problems is satisfied, the result follows naturally: 
The pessimistic objective~\eqref{eq:pessi_obj} provides a Lipschitz-based worst-case lower bound on the true return, while the optimistic objective provides an upper bound. If these bounds are sufficiently close, the resulting pessimistic policy is
guaranteed to achieve performance close to the true optimum. In the next section, we leverage this corollary to derive a sample complexity lower bound.}

\change{
\section{Lower bound on sample complexity for dynamics exploration}
\looseness -1 
This section presents a lower bound on the sample complexity of the dynamics exploration problem, quantifying the minimum number of samples required to guarantee near-optimality.
We establish these results by reducing our problem to known lower bounds on the simple regret in noisy Gaussian process bandit optimization~\citep{scarlett2017lower}. We show that even for a simple special case of state-independent dynamics, it is hard to solve it with better rates.
We focus on two widely used kernels: the squared exponential (SE) and the Mat\'ern kernels. 

\begin{restatable}[Sample complexity lower bound]{theorm}{restatelowerbound} Fix $\epsilon \in (0, \frac{1}{2})$, $\delta \in (0, \frac{1}{8})$ and $\nfin \in \R$. Suppose there exists an algorithm for system~\eqref{eq:system_dyn} where \mbox{$\dyntrue_i \in \RKHS[i]$} with \mbox{$\|\dyntrue_i\|_{\RKHS[i]} \leq B $}, $i \in \Intrange{1}{\statedim}$ and noise $\eta$ being bounded and conditionally $\sigma$-sub-Gaussian, that guarantees with probability at least $1-\delta$ that after at most $\nfin$ samples, \label{thm:lower_bound}
\begin{align}
    \Jobj[]{\state^{\mathrm{p}}, \dyntrueVec }{\pessiPolicy} \geq  \max_{\state^\star \in \safeInit{\nfin}, \pi^\star \in {\truePolicySet[, \epsilon]{c,\nfin}}(\safeInit{\nfin};\Horizon)} \Jobj[]{\state^\star, \dyntrueVec}{\pi^\star}    -  K'\epsguarantee'.  \label{eq:lb_objective}
\end{align}
Then, the following holds with $ \epsilon = \epsilon'/r > 0$ where $r = (1-8\delta)/(1-2\delta)$ and provided that $\epsilon/B$ is sufficiently small.
\begin{itemize}
    \item For the squared exponential kernel, it is necessary that
    \begin{align*}
        \nfin = \Omega\left(\frac{\sigma^2}{\epsilon^2} \left ( \log \frac{B}{\epsilon}\right)^{\statedim/2} \right).
    \end{align*}
    \item For the Mat\'ern kernel, it is necessary that
    \begin{align*}
        \nfin = \Omega\left(\frac{\sigma^2}{\epsilon^2} \left (\frac{B}{\epsilon}\right)^{\statedim/\nu} \right).
    \end{align*}
\end{itemize}
\end{restatable}

\mypar{Proof Sketch} 
We reduce our problem to known lower bounds for simple regret in the noisy Gaussian process bandit optimization setting \citep{scarlett2017lower}. 
The results in~\citet{scarlett2017lower} assume Gaussian noise. 
By applying an algorithm for bounded noise to the GP bandit setting with Gaussian noise and removing the instances where the noise exceeds the truncation bound (failure event) in probability, we extend the lower bound to a truncated Gaussian noise distribution. 
The reduction of the dynamics exploration problem is as follows: consider a special instance with state-independent dynamics, $\state_1 = \dyntrueVec(\coninput_0) + \noise_n$, and reward function $r(\state) = \state$. Similar to the bandit setting, this construction ensures that taking an action yields a noisy dynamics observation as a reward.
Since \dynExplor solves this setup within $\n^\star$ samples, the lower bound for Gaussian process bandit optimization also applies to dynamics exploration setup with probability $1-\delta$. The complete proof is in \cref{apxsex:lowerbound}.


For the squared exponential kernel, \dynExplor guarantees optimality within $\n^\star = \mathcal{O}^\star\left(\frac{1}{\epsilon^2} \left(\log \frac{1}{\epsilon}\right)^{2\statedim} \right)$ samples which matches the lower bound in \cref{thm:lower_bound} up to a factor 2 versus 1/2 in the exponent. In contrast, for the Mat\'ern kernel, the gap is larger; see~\citet{scarlett2017lower} for a discussion. 
These lower-bound results indicate that, for certain classes of kernel functions, the exploration strategy of \dynExplor is close to optimal. However, note that this result characterizes worst-case optimality; tighter upper bounds may still be achievable for specific subclasses of problems.

}
\section{Experiments}
\label{sec:experiments}

\looseness -1 
We empirically evaluate the performance of \dynExplor in multiple environments. Here we present experiments on two representative tasks: car racing and drone navigation under aerodynamic effects. 
We compare \dynExplor against two baselines: 
i) No learning, which attempts to solve the task using only prior data without gathering new information (measurements),
ii) Two-stage, which first explores the dynamics by gathering online data (similar to \cref{alg:full_domain_exploration_basic}) and then optimizes the objective with the learned dynamics. 
This baseline is akin to \citet{as2024actsafe}, which pessimistically explores and then optimizes, although the considered baseline requires no resets.
Additionally, we compare performance against a clairvoyant agent that has exact knowledge of the dynamics, representing the optimal performance. 

\looseness -1
We implement the optimization problems as described in ~\cref{rem:implementation} with a sampling-based approach. The implementation is based on \texttt{Python} interfaces of \texttt{acados}~\citep{verschueren_acadosmodular_2022} and~\texttt{CasADi}~\citep{andersson_casadi_2019}; for efficient GP sampling, we employ~\texttt{GPyTorch}~\citep{gardner_gpytorch_2018}. For each environment, we report the average with a confidence bound of 1 standard deviation computed on 10 different random seeds. 
The computation of control input is fast (real-time feasible), for instance, on an i7-11800H processor (2.30GHz) with RTX A2000, the computation of a single control sequence took 32.0 $\pm$ 7.7 ms for the car environment. 
While this section highlights the core experiments, additional results and implementation details are provided in~\cref{apxsec:experiments}.
Below we explain our main observations for each environment:

\begin{figure}
    \begin{subfigure}[t]{0.52\columnwidth}
      \centering
  	\includegraphics[scale=0.63]{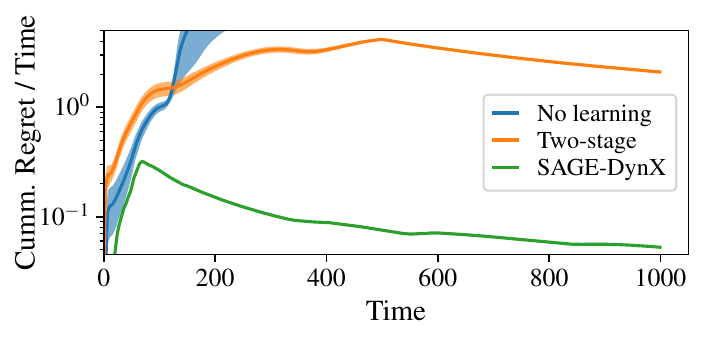}
    \caption{Drone navigation}
    \label{fig:drone}
    \end{subfigure}
~
    \begin{subfigure}[t]{0.52\columnwidth}
  	\centering
  	\includegraphics[scale=0.58]{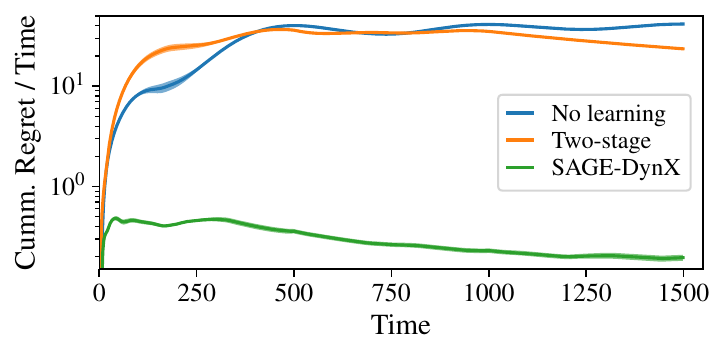}
    \caption{Car Racing}
    \label{fig:car_racing}
    \end{subfigure}
\caption{
\looseness -1
Cumulative regret over time (averaged across runs) in different environments. \dynExplor achieves an order of magnitude lower regret compared to the baselines in both experiments.
} 
\end{figure}

\looseness -1
\mypar{Drone navigation} The dynamics model is a 6-dimensional nonlinear system, representing position, angular orientation, linear velocities, and angular rates from ~\citep{singh2023robust,sasfi2023robust}. 
The control inputs are two-dimensional, corresponding to the thrust force produced by the propellers. The dynamics also include disturbances corresponding to aerodynamic effects.
The task requires the drone to track a heart-shaped trajectory 
while satisfying box constraints in the state space. The rewards correspond to the accuracy of tracking a time-varying heart-shaped reference trajectory.
\cref{fig:drone} shows the performance of \dynExplor compared to the two baselines.
We plot cumulative regret over time, where regret is computed with the position difference from the clairvoyant system at any time step. 
The no-learning baseline, due to high model uncertainty, quickly fails to maintain accurate tracking. 
For the two stage baseline, we explore dynamics for the first 500 time steps and then maximize the rewards. 
We observe a high initial regret due to random exploration of the dynamics, and then it starts to decay after 500 steps. In contrast, \dynExplor initially deviates in the beginning to gather data and then quickly comes closer to the optimal behaviour, as evident by the steady decline in the regret shown in \cref{fig:drone}.

\looseness -1
\textbf{Car racing} is an interesting application in which, while racing, the model can be learned online. 
The car dynamics model is given by a 4D nonlinear system representing position, orientation, and velocity from~\citep{7225830}.  
The control commands are two-dimensional, representing throttle and steering.  
The task requires the car to follow a reference path while satisfying the track constraints. Similarly to the drone task, the car tries to track a reference, and the rewards are assigned as per the accuracy of tracking. 
\cref{fig:car_racing} shows cumulative regret over time. Both the no-learning and two-stage algorithms have high regret due to uncertain dynamics and random exploration, respectively. After 1000 steps, the two-stage algorithm tries to maximize reward, and thus, we see a decrease in the regret. In contrast to baselines, \dynExplor attains an order of magnitude lower regret. Initially, \dynExplor's regret increases attributed to exploring the dynamics, and then comes closer to the optimal behaviour as shown in \cref{fig:car_racing}.

\section{Conclusion}
\label{sec:conclusion}
We proposed a general framework for safe, non-episodic online learning of unknown dynamics, introducing the notion of guaranteed exploration in policy space.
To our knowledge, this is the first theoretical analysis that successfully addresses this problem. 
Building on this, we proposed a theoretically grounded and efficient reward-maximization algorithm that achieves near-optimal performance while learning the dynamics only where necessary for optimality.
%
The proposed approach has two key limitations:
a) 
We utilize a robust bound on the noise $\eta(k)\in\mathcal{W}$, which may be conservative.
This can be naturally addressed by replacing the high-probability safety guarantees with (weaker) probabilistic safety guarantees using standard stochastic prediction methods~\citep{mesbah2016stochastic,ao2025stochastic}. 
b) Although we ensure safety for all times $k\in\mathbb{N}$, our performance bound is only valid for rewards over a finite horizon $\Horizon$. 
Providing similar performance bounds for infinite-horizon performance would require more involved tools, such as (stochastic) turnpikes~\citep{schiessl2024relationship}.





\newpage

\appendix

\section{Extended related works}
\label{apxsec:extended_related_works}

\begin{table}[h]
\centering
\centering
\setlength{\tabcolsep}{3pt} 
\renewcommand{\arraystretch}{1.4} 
\small
\begin{tabular}{lccccc}
\hline \rule{0pt}{4.5ex}
& \makecell{Safety\\(training)}
 & \makecell{Safety\\(final $\pi$)} 
 & Optimality 
& \makecell{Nonepisodic \\ (without resets)}
& \makecell{Unknown \\ dynamics}  \\[0.9em] \hline\hline
\makecell{RL \citep{curi2020efficient}    }                 & \xmark   & \xmark   & \checkmark & \xmark   & \checkmark \rule{0pt}{1.75em}\\[0.45em] \hline 
\makecell{CMDP \citep{efroni2020exploration}}                 & \xmark   & ($\checkmark$) & \checkmark & \xmark   & \checkmark \rule{0pt}{1.75em}\\[0.45em] \hline
\makecell{Safe exploration CMDP \\ \citep{as2024actsafe}} 
& ($\checkmark$) & ($\checkmark$) & \checkmark & \xmark   & \checkmark \rule{0pt}{1.75em}\\[0.45em] \hline
\makecell{Safety Augmentation \\ \citep{wabersich2023data}  } & $\checkmark$   & $\checkmark$   & \xmark   & \checkmark & \checkmark \rule{0pt}{1.75em}\\[0.45em] \hline
\makecell{Safe constraint exploration \\ \citep{prajapat2025safe}} & $\checkmark$   & $\checkmark$   & ($\checkmark$) & \checkmark & \xmark \rule{0pt}{1.75em}\\[0.45em] \hline
\makecell{\textbf{SageDynX (Ours)} }        & $\checkmark$   & $\checkmark$   & \checkmark & \checkmark & \checkmark \rule{0pt}{1.75em}\\[0.45em] \hline
\end{tabular}
\vspace{1em}
\caption{Comparison of different approaches on safety, optimality, non-episodicity, and handling unknown dynamics. In the table above, the bracket symbol ($\checkmark$) denotes that the property holds but in a different form, e.g., safety holds in expectation for CMDPs. As shown in the table above, \dynExplor is the only algorithm that has all the properties, i.e., it guarantees optimality and safety during the entire execution while operating in a non-episodic setting with unknown dynamics.}
\label{tab:sagedynXrelated}
\end{table}

\looseness -1 
\mypar{Model-based RL and constrained MDPs} 
Model-based reinforcement learning (MBRL) approaches first learn a predictive model of the environment's dynamics, which is then leveraged for decision-making (either control, planning, or policy optimization). To effectively learn dynamics, MBRL methods employ various exploration strategies, for example, optimism-based~\citep{kakade2020information, curi2020efficient}, posterior (Thompson) sampling~\citep{osband2013more}, and deep ensemble-based \citep{chua2018deep}, among others.
Once the model is updated, a range of planning and policy optimization strategies can be applied to maximize cumulative rewards,
such as the standard policy gradient~\citep{curi2020efficient,prajapatsubmodular}, as well as sampling-based planners like cross-entropy method \citep{BOTEV201335}, MPPI \citep{kakade2020information}. These approaches can be combined with receding horizon control and re-planning techniques~\citep{lowrey2018plan}, for a general overview, see \citep{moerland2023model}. All these approaches are designed for unconstrained environments and do not account for state or action constraints.

Particularly appealing regret bounds have recently been shown in~\citet{kakade2020information}, which exhibit a polynomial dependence on the planning horizon $H$. In contrast, our analysis yields an exponential dependence on the horizon length when the Lipschitz constant $L > 1$.
This is due to different assumptions: we assume Lipschitz continuity, whereas \citet{kakade2020information} assumes Gaussian noise and a bounded second moment of cost under the worst policy, for an unbounded (unconstrained) domain.
To clarify the differences, consider the linear quadratic regulator (LQR) setting, where (see \cite[Remark 3.6]{kakade2020information}) 
a stabilizing policy is required to satisfy the bounded second moment of cost assumption. This sufficient condition corresponds to the (weighted \citep[Remark 2]{prajapat2025finite}) Lipschitz constant $L<1$, under which our analysis also shows a similar polynomial dependence on the horizon. 

A more closely related line of work involves constrained Markov decision processes (CMDPs)~\citep{altman2021constrained}, which enforce safety through constraints on the expected cumulative cost~\citep{achiam2017constrained,ding2020natural}. 
Due to the similarity in structure between cumulative rewards and costs, these approaches often leverage policy gradient methods, typically solved using Lagrangian multipliers or primal-dual techniques. 
In contrast, our work has state-wise (hard) constraints~\citep{zhao2023state,wachi2020safe}, which are more strict; that is, a policy trained with state-wise constraints is also safe in the CMDP sense; however, the converse is not assured~\citep{zhao2023state}. 
A large body of literature on model-based CMDPs focuses on empirically reducing constraint violations \citep{zanger2021safe,as2022constrained}, while theoretical guarantees are often limited to simplified settings such as finite (tabular) MDPs \citep{efroni2020exploration,muller2024truly}. 
Much of this work primarily addresses safety after learning (see a survey \citep{gu2024review}), with a few exceptions. 
For example, \citet{wachi2020safe} consider state-wise constraints and provides safety guarantees during exploration, but only within the tabular setting. 
In contrast, \citet{as2024actsafe} consider continuous spaces and proposes an algorithm that first uses pessimism for full exploration and then optimism for maximization of rewards in the explored region.
However, all of these methods
rely on \emph{episodic} environment resets in CMDPs, making their applicability to real-world scenarios challenging.




\mypar{Safety augmentation} There are also safety layer techniques that modify a standard (unconstrained) RL algorithm to enforce safety, such as safety filters~\citep{wabersich2023data}, control barrier functions~\citep{ames2016control}, and shielding~\citep{alshiekh2018safe}. 
 Although these approaches can sometimes also ensure safety without resets, they usually cannot maintain the optimality and regret guarantees of the original RL algorithm.

\mypar{Safe exploration for unknown constraints} Another line of work focuses on guaranteed exploration with unknown safety-critical constraints~\citep{safe-bo-sui15, sui2018stagewise,turchetta2016safemdp,prajapat2022near,tokmak2025safe}, primarily in finite (discrete) input spaces, and is typically used for safe Bayesian optimization (BO). 
The methodology has been applied to policy tuning~\citep{berkenkamp2023bayesian}, cost function tuning~\citep{puigjaner2025performance}, and more recently scaled to continuous spaces~\citep{fiedler2024safety} and continuous dynamical systems~\citep{prajapat2025safe}. 
However, a common assumption in these approaches is to exactly choose (and reach) where to collect information.
This is particularly relevant, as uncertain dynamics make it impossible to exactly reach the desired informative states, since we cannot precisely control the unknown system. 
Our approach addresses this problem using optimistic exploration objectives and the analysis in \cref{prop:suffi_info_epscollect}. 

\mypar{Model predictive control}
The problem of ensuring safety when controlling an uncertain dynamical system is commonly studied in the field of optimal control using model predictive control (MPC)~\citep{rawlings2017model}. 
MPC-based methods use online numerical optimization and model-based predictions to ensure safe operation. 
These approaches typically do not rely on resets, but ensure persistent safety with a receding horizon implementation and a terminal set constraint, which is similar to the considered safe set $\mathbb{X}_n$ (\cref{assump:safeSet})~\citep{rawlings2017model}. 
To handle model uncertainty, robust and stochastic MPC techniques predict robust or probabilistic reachable sets
 to ensure safe operation, see~\citep{singh2023robust,mesbah2016stochastic,ao2025stochastic,houska2019robust,koller_learning_based_2018,prajapat2025finite}. 
In addition, to improve performance, online model updates are also incorporated in the framework of learning-based MPC or robust adaptive MPC, which typically uses dynamics model sets similar to $\mathcal{F}_n$ in \cref{lem:beta}, see \citep{hewing2020learning,sasfi2023robust,brunke2021safe,lew2022safe,soloperto2023guaranteed}.
In general, these approaches can ensure safe operation, but typically lack the exploration guarantees needed to achieve~\cref{obj:maximum_exploration} and \cref{obj:reward_maximization} (optimality results). 
The problem of active learning in MPC is largely based on heuristics and lacks strong guarantees of exploration; see the review papers~\citep{mesbah2018stochastic,heirung2018model}.
Closer to our work, \citet{lew2022safe} proposed a sequential exploration-exploitation framework that ensures safety, but the proposed active learning component provides no guarantees on uncertainty reduction, and the analysis simply assumes that the task is eventually feasible. 
In~\citep{soloperto2023guaranteed}, an approach is developed that also provides exploration guarantees. 
However, the theoretical guarantees are only asymptotic and cannot account for probabilistic noise. 
Furthermore, while \cref{prop:suffi_info_epscollect} shows that we can ensure informative measurements based on optimistic exploration objectives, this approach relies on a worst-case exploration objective, i.e., the more uncertain the system is, the less the approach can provide informative data.



\section{Proof for safe dynamics exploration}
\label{apxsec:dyn_exploration_proof}
All theoretical results leverage the uncertainty bound from \cref{lem:beta}, which holds jointly with probability $1-\delta$, and hence all the intermediate claims are only valid with the same probability of $1-\delta$. 
For simplicity of exposition, we will not state the probabilistic nature of the guarantees in the intermediate lemmas and propositions.

Given the dataset $\D$, the posterior mean $\mu_{\n,i}(z)$ and variance $\gppostvar^2_{\n,i}(z)$ of unknown dynamics at test inputs $z$, $z^{\prime}$ for any component $i\in \Intrange{1}{\statedim}$ are given by:
\begin{align}
    \mu_{\n,i}(z) &= \bm{\alpha}^\top_i \bm{k}^i_{D_\n}(z), \label{eq:mean_update}\\
    \kernelfunc^i_{D_\n}(z,z') &= \kernelfunc^i(z,z) - {\bm{\kernelfunc}^i_{D_\n}}^{\!\!\!\!\!\top}(z)   
    (K^i_{D_\n} + \sigma^2 \identity)^{-1}
    \bm{\kernelfunc}^i_{D_\n}(z'), \label{eq:kernel_posterior_update}\\ 
    \gppostvar^2(z) &= \kernelfunc^i_{D_\n}(z,z),
\end{align}
where $\bm{\alpha}_i = [\alpha^i_1, \hdots, \alpha^i_{D_\n}]^\top \coloneqq (K^i_{D_\n} + \sigma^2 \identity)^{-1} Y^i$, $\bm{\kernelfunc}^i_{D_\n}(z) = [k^i(z_1,z), . . . , k^i(z_{\datasetdim}, z)]^{\top}$, where $\kernelfunc^i$ is the kernel (\cref{assump:q_RKHS}), $Y^i\coloneqq e_i^\top Y$ are the measurements of the $i^{th}$ GP component ($e_i\in \R^\statedim$ is a basis vector)
and $K^i_{D_\n}$ is the resulting kernel matrix $[k^i_{D_\n}(z,z^{\prime})]_{z,z^{\prime} \in Z}$. 
For ease of notation, we use the shorthand $w_n(x,u)\coloneqq\max_{i\in\Intrange{1}{n}} w_{n,i}(x,u)$ to denote the maximal uncertainty across all state components.

\mypar{Choice of tolerance $\epsguarantee,\epsterminal,\epscollect$} 
As mentioned in~\cref{sec:full_expl_theory}, the tolerance $\epsguarantee>0$ is a user-chosen constant that can in principle be chosen arbitrarily small, subject to a constraint related to the noise magnitude $\noisebound$. In particular, the constant $\epscollect>0$ determines the threshold for informative measurements that are used to update the GP and can be chosen to an arbitrarily small positive constant. 
The threshold $\epsterminal$ used for planning and termination in \cref{alg:full_domain_exploration_basic} can then be chosen according to~\eqref{eq:JK_formula_epsilon_d}. 
Lastly, the tolerance $\epsguarantee>0$ for the theoretical guarantees in Theorem~\ref{thm:maximum_dynamics_exploration} can be chosen according to~\eqref{eq:JK_espilon_formula} and \eqref{eq:JK_bound_epsilon_2}, where $\epsclose\geq0$ is arbitrarily small. 
In combination, these formulas provide a lower bound on the achievable guarantees $\epsguarantee>0$ that depends linearly on the noise magnitude $\noisebound$.

We will first prove that the optimistic and pessimistic policy sets~\cref{eq:def_pessi_set,eq:def_opti_set} are outer and inner approximations of the true policy set~\eqref{eq:true_policy_set}, respectively.
\begin{proposition} \label{prop:over_under_approx}
    Suppose $\dyntrueVec\in\dynSet[n]$ then $\forall \n \in \Intp, X \subseteq \X$ it holds that $\pessiSet[]{\n}(X;\Horizon)\subseteq \truePolicySet[]{\n}(X;\Horizon)  \subseteq \optiSetwe[]{\n}(X;\Horizon)$.
\end{proposition}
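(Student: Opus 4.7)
The plan is to reduce both inclusions to elementary manipulations of the existential/universal quantifiers in the three set definitions, exploiting the single structural hypothesis $\dyntrueVec \in \dynSet[\n]$. The first thing I would make explicit is that $\truePolicySet[]{\n}$ and $\optiSetwe[]{\n}$ (without any $\epsguarantee$ superscript) should be read as the zero-margin versions of \eqref{eq:true_policy_set} and \eqref{eq:def_opti_set}, i.e., the Pontryagin differences $\ominus \ball{\epsguarantee}$ are removed. This aligns all three sets at the same constraint level (namely $\state_h\in\X$, $\coninput_h\in\inputSpace$, $\state_\Horizon\in\safeInit{\n}$) and makes them directly comparable.

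For the first inclusion $\pessiSet[]{\n}(X;\Horizon) \subseteq \truePolicySet[]{\n}(X;\Horizon)$, I would fix $\pi \in \pessiSet[]{\n}(X;\Horizon)$, take the witness $\state_0 \in X$ supplied by \eqref{eq:def_pessi_set}, and specialize the universal quantifier $\forall \dynVec \in \dynSet[\n]$ to $\dynVec = \dyntrueVec$. This specialization is legitimate because $\dyntrueVec \in \dynSet[\n]$ by hypothesis (which itself holds with probability at least $1-\delta$ via \cref{lem:beta}). The resulting true-dynamics trajectory $\state_{\h+1} = \dyntrueVec(\state_\h,\coninput_\h) + \noise_\h$ still satisfies $\state_\h \in \X$, $\coninput_\h \in \inputSpace$, and $\state_\Horizon \in \safeInit{\n}$ for every $\noise_\h \in \W$, and $\coninput_\h = \pi_\h(\state_\h)$ by construction. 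Hence $\pi \in \truePolicySet[]{\n}(X;\Horizon)$ with the same initial state.

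For the second inclusion $\truePolicySet[]{\n}(X;\Horizon) \subseteq \optiSetwe[]{\n}(X;\Horizon)$, the direction of quantifier use reverses: I would fix $\pi \in \truePolicySet[]{\n}(X;\Horizon)$ with witness $\state_0 \in X$, and present $\dynVec = \dyntrueVec$ as the existential witness required by the definition of $\optiSetwe[]{\n}$. Again this uses only $\dyntrueVec \in \dynSet[\n]$ together with the fact that the constraints are enforced at zero margin on both sides, so the same trajectory immediately certifies membership in $\optiSetwe[]{\n}(X;\Horizon)$.

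The only real subtlety, and therefore the step I would emphasize, is the notational distinction between the $\epsguarantee$-superscripted sets used elsewhere in the paper and the zero-margin sets appearing in this proposition: the inclusions would fail if one mixed margins across the two sides, since $\pessiSet[]{\n}$ is pessimistic at \emph{zero} margin, not at margin $\epsguarantee$. Once this bookkeeping is fixed, each inclusion collapses to a single quantifier specialization anchored on $\dyntrueVec \in \dynSet[\n]$, and no geometric, dynamical, or probabilistic argument beyond \cref{lem:beta} is required.
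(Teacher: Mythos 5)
Your proof is correct and follows essentially the same route as the paper: both inclusions are obtained by instantiating the universal quantifier over $\dynSet[\n]$ at $\dyntrueVec$ (for the pessimistic set) and supplying $\dyntrueVec$ as the existential witness (for the optimistic set), reading all three sets at zero constraint margin. Your explicit remark about not mixing the $\epsguarantee$-tightened and zero-margin versions of the sets is a useful clarification but does not change the argument, which matches the paper's proof of \cref{prop:over_under_approx}.
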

\begin{proof}
    Consider some $\pessiPolicy \in \pessiSet[]{\n}(X;\Horizon)$. Then by definition~\eqref{eq:def_pessi_set}, $\exists \state_0 \in X$: 
    $\state_\h \in \X$,  $\coninput_{\h}\coloneqq \pessiPolicy[\h](\state_\h) \in \inputSpace$, $\state_\Horizon \in \safeInit{\n}$, $\forall \dynVec \in \dynSet[\n],\forall \noise_\h \in \W, \forall \h \in \Intrange{0}{\Horizon-1}$ with $\state_{\h+1} = \dynVec(\state_\h, \coninput_\h) + \noise_\h$. 

    Note the difference in dynamics used in the definition of the true policy set~\eqref{eq:true_policy_set} and the pessimistic policy set~\eqref{eq:def_pessi_set}. Since the above expression holds for all $\dynVec\in\dynSet[\n]$ and thus also for $\dyntrueVec \in \dynSet[\n]$, it follows that $\pessiPolicy \in \truePolicySet[]{\n}(X;\Horizon)$. This proves $\pessiSet[]{\n}(X;\Horizon)\subseteq \truePolicySet[]{\n}(X;\Horizon)$.

    Similarly, suppose $\pi^\star \in \truePolicySet[]{\n}(X;\Horizon)$. 
    Then, 
    $$\exists \state_0 \in X:
    \state_\h \in \X,  \coninput_{\h}\coloneqq \pi^\star_\h(\state_\h) \in \inputSpace, \state_\Horizon \in \safeInit{\n}, \forall \noise_\h, \in \W, \forall \h \in \Intrange{0}{\Horizon-1}$$ with $\state_{\h+1} = \dyntrueVec(\state_\h, \coninput_\h) + \noise_\h.$
    Since $\dyntrueVec\in\dynSet[\n]$,  there exists a function $\dynVec = \dyntrueVec\in\dynSet[\n]$, which satisfies the expression above and thus $\pi^\star \in \optiSetwe[]{\n}(X;\Horizon)$ by definition \eqref{eq:def_opti_set}. 
\end{proof}

The following proposition shows that the proposed sampling strategy always yields at least one informative sample $|\mathcal{D}_c| \geq 1$, validating the intuition from \cref{fig:selecting_sampling_rule}. 
\begin{proposition}
\label{prop:suffi_info_epscollect}
Suppose Assumption~\ref{assump:lipschitz} holds and $\dyntrueVec\in\dynSet[n]$.
Let $\pessiPolicy$ be a feasible solution to the safe exploration Problem~\eqref{eq:sampling_rule_timex} for some $x(k)\in \mathcal{X}$, $n\in\mathbb{N}$ with 
\begin{align}
\label{eq:JK_formula_epsilon_d}    
\epsterminal \coloneqq \epscollect + \LipWidthPi\LipDyn_{w,\totHorizon} (\epscollect + \noisebound),
\end{align}
where $\LipDyn_{w,\totHorizon} = \sum_{k=0}^{\totHorizon-1}( \LipDyn+\LipWidthPi)^k$. Then, $\exists \h \in \Intrange{0}{\totHorizon-1}, i\in \Intrange{1}{\statedim} : \cwidth[\n,i](\state^\star_h,\coninput^\star_\h) \geq \epscollect$ where $\state^\star_{\h+1} = \dyntrueVec(\state^\star_\h, \coninput^\star_\h) + \eta_h$ with $\coninput^\star_\h = \pessiPolicy[\h](\state^\star_\h), \eta_h \in \W$, $\state^\star_0=\state(k)$.
\end{proposition}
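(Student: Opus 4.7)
The natural approach is proof by contradiction: assume that along the trajectory actually executed on the true system,
$\cwidth[n](\state^\star_h,\coninput^\star_h) < \epscollect$ for every $h \in \Intrange{0}{\totHorizon-1}$, and derive a contradiction with the choice of $\epsterminal$ in~\eqref{eq:JK_formula_epsilon_d}. The intuition, illustrated in~\cref{fig:selecting_sampling_rule}, is that if the dynamics were known to tolerance $\epscollect$ along the true rollout, then the optimistic rollout under $\dynVec \in \dynSet[n]$ could not drift too far, and so the large uncertainty predicted at the optimistic informative state would also be (almost) present on the true trajectory.

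The key technical step is a forward-error bound on $e_h := \|\state_h - \state^\star_h\|_\infty$, where $\state_h$ is the optimistic rollout from~\eqref{eq:sampling_rule_timex} driven by $\dynVec\in\dynSet[n]$ without noise, and $\state^\star_h$ is the true rollout under the same policy $\pessiPolicy$ with noise $\eta_h\in\W$. Writing
\begin{align*}
\state_{h+1}-\state^\star_{h+1} &= [\dynVec(\state_h,\coninput_h)-\dyntrueVec(\state_h,\coninput_h)] \\
&\quad + [\dyntrueVec(\state_h,\coninput_h)-\dyntrueVec(\state^\star_h,\coninput^\star_h)] - \eta_h,
\end{align*}
I bound the first bracket by $\cwidth[n](\state_h,\coninput_h)$ using that both $\dynVec$ and $\dyntrueVec$ lie in $\dynSet[n]$ (so they agree pointwise up to the confidence width), the second bracket by $\LipDyn\,e_h$ via closed-loop Lipschitz continuity from~\cref{assump:lipschitz}, and the noise by $\noisebound$. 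Then I transport the width back to the true trajectory with the $\LipWidthPi$-Lipschitz property of $\cwidth[n](\cdot,\pi(\cdot))$, giving
\begin{equation*}
\cwidth[n](\state_h,\coninput_h) \le \cwidth[n](\state^\star_h,\coninput^\star_h) + \LipWidthPi\,e_h < \epscollect + \LipWidthPi\,e_h
\end{equation*}
under the contradiction hypothesis. Substituting yields the recursion $e_{h+1} \le (\LipDyn+\LipWidthPi)\,e_h + (\epscollect+\noisebound)$ with $e_0=0$, which unrolls to $e_h \le (\epscollect+\noisebound)\,\LipDyn_{w,h}$ using the definition $\LipDyn_{w,h}=\sum_{i=0}^{h-1}(\LipDyn+\LipWidthPi)^i$.

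To close the argument, I invoke feasibility of $\pessiPolicy$ in~\eqref{eq:sampling_rule_timex}: there exists $h\in\Intrange{0}{\totHorizon-1}$ with $\cwidth[n](\state_h,\coninput_h)\geq\epsterminal$. Combining with the Lipschitz transport and the error bound gives
\begin{equation*}
\epsterminal \le \cwidth[n](\state^\star_h,\coninput^\star_h) + \LipWidthPi\,\LipDyn_{w,\totHorizon}(\epscollect+\noisebound) < \epscollect + \LipWidthPi\,\LipDyn_{w,\totHorizon}(\epscollect+\noisebound),
\end{equation*}
which directly contradicts the choice of $\epsterminal$ in~\eqref{eq:JK_formula_epsilon_d}. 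Hence the assumption fails, and some $(\state^\star_h,\coninput^\star_h)$ must satisfy $\cwidth[n,i](\state^\star_h,\coninput^\star_h)\geq\epscollect$ for at least one component $i$. The main obstacle I expect is purely bookkeeping — in particular making sure the Lipschitz constants applied to $\dynVec$ (as opposed to $\dyntrueVec$) are legitimate; the clean decomposition above sidesteps this by only using Lipschitz continuity of $\dyntrueVec$ and using the confidence-set inclusion for the $\dynVec-\dyntrueVec$ term.
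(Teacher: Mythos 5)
Your proposal is correct and follows essentially the same route as the paper's proof: a contradiction argument with the identical three-term decomposition (confidence width for $\dynVec-\dyntrueVec$, Lipschitz continuity of the true closed loop, and the noise bound), the same recursion $e_{h+1}\le(\LipDyn+\LipWidthPi)e_h+(\epscollect+\noisebound)$ yielding $e_h\le\LipDyn_{w,h}(\epscollect+\noisebound)$, and the same final transport of the width from the optimistic informative state back to the true trajectory to contradict the definition of $\epsterminal$ in~\eqref{eq:JK_formula_epsilon_d}.
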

\begin{proof} We prove this by contradiction. Suppose $\pessiPolicy \in \pessiSet[]{\n}(\state(k);\totHorizon)$ is a feasible solution of Problem~\eqref{eq:sampling_rule_timex} and under this policy $\cwidth[\n,i](\state^\star_h,\coninput^\star_\h) < \epscollect, \forall \h \in \Intrange{0}{\totHorizon-1}, i \in \Intrange{1}{\statedim}$. 
Let $\state_{\h+1} = \dynVec(\state_\h, \coninput_\h)$ with $\coninput_\h = \pessiPolicy[\h](\state_\h)$ and $\state(\ki) = \state_0 = \state^\star_0$ be the corresponding noise-free trajectory with some $\dynVec\in\dynSet[n]$. 
Lipschitz continuity (\cref{assump:lipschitz}) implies
\begin{align}
\cwidth[\n,i](\state_h,\pessiPolicy(\state_\h))&\leq \cwidth[\n,i](\state^\star_h,\pessiPolicy(\state_\h^\star))+ \LipWidthPi \|\state_h - \state^\star_h\|   \nonumber\\
&< \LipWidthPi \|\state_h - \state^\star_h\|+\epsilon_c,
\label{eq:JK_bound_w_sequential}
\end{align}
\begin{align}
\|\state_{h+1}-\state_{h+1}^\star\|\! &=
\|\dynVec(\state_\h, \pessiPolicy(\state_h))-\dynVec^\star(\state^\star_\h, \pessiPolicy(\state^\star_\h))-\eta_h\|\nonumber\\
&\leq \|\eta_h\|\!+\!\|\dynVec^\star(\state_\h, \pessiPolicy(\state_h))\!-\!\dynVec^\star(\state^\star_\h, \pessiPolicy(\state^\star_\h))\|
\!+\!\|\dynVec(\state_\h, \pessiPolicy(\state_h))\!-\!\dynVec^\star(\state_\h, \pessiPolicy(\state_\h))\|\nonumber\\
&\stackrel{\eqref{eq:dyn_set}}{\leq} \noisebound + \LipDyn\|\state_h- \state_h^\star\|
+\cwidth[\n,i](\state_h,\pessiPolicy(\state_\h)),\nonumber\\
&\stackrel{\eqref{eq:JK_bound_w_sequential}}{<} \noisebound + \LipDyn\|\state_h- \state_h^\star\|
+\LipWidthPi \|\state_h - \state^\star_h\|+\epscollect, \label{eq:JK_bound_x_sequential}
\end{align}
which, on using recursion, implies,
\begin{align}
\label{eq:JK_bound_x_total}
\|\state_h-\state_h^\star\|< &\underbrace{\sum_{k=0}^{h-1}( \LipDyn+\LipWidthPi)^k}_{\eqqcolon\LipDyn_{w,h}} (\epscollect+\noisebound).
\end{align}
Considering the point $h'\in\Intrange{0}{\totHorizon-1}$ with $w_{n,i}(x_{h'},u_{h'})\geq \epsterminal$ from Problem~\eqref{eq:sampling_rule_timex}, we have
\begin{align*}
\cwidth[\n,i](\state^\star_{h'},\coninput^\star_{\h'})&\geq \cwidth[\n,i](\state_{h'},\coninput_{\h'})-\LipWidthPi  \|\state_{h'}-\state^\star_{h'}\| \\
&\geq \epsterminal-\LipWidthPi\LipDyn_{w,h'}(\epscollect+\noisebound)\\
&\geq \epsterminal-\LipWidthPi\LipDyn_{w,\totHorizon}(\epscollect+\noisebound)\stackrel{\eqref{eq:JK_formula_epsilon_d}}{=}\epscollect,
\end{align*}
which yields a contradiction.
\end{proof}
To study sample complexity, we denote by $\numdataPts_{\n}$ the number of new measurements (of all $\statedim$ components) collected to be used for the $\n^{th}$ model update. \change{Let $\numdataPts^\epsilon_\n$ be the set of indices of the informative measurements, i.e., $\{ \ d \ | \ \cwidth[\n-1, i](\stateAct_{\numdataPts, \n})\geq \epscollect\}$}. Note that Proposition~\ref{prop:suffi_info_epscollect} 
ensures $|\numdataPts^\epsilon_{\n}| \geq1, \forall \n \in \Intp$ and hence $\numdataPts_{\n}\geq1$. 
Define $D_\n = \sum_{j=1}^{\n} \numdataPts_j + D_0$ with $D_0$ as the prior data. If there are no prior data, $D_0 = 0$.
Furthermore, we define the posterior kernel matrix $k^i_{D_{n-1}}(\cdot)$ evaluated for the $d_n$ measurements corresponding to the $i$-th GP by $K_{d_{n},d_{n}}^i\in\mathbb{R}^{d_n\times d_n}$ and the eigenvalues of this matrix by $\lambda_{r,n,i}$, $r=1,\dots,d_n$.

The following lemma provides a sample complexity bound, ensuring that Algorithm~\ref{alg:full_domain_exploration_basic} terminates in finite time.
\begin{lemma}[Sample complexity, adapted from~\citet{prajapat2025safe}] \label{lem:sample_complexity} Let \cref{assump:q_RKHS,assump:safeSet,assump:lipschitz,assump:sublinear} hold and $\n^\star$ be the largest integer satisfying
$\frac{\n^\star}{\beta_{\n^\star} \gamma_{\n^\star}} \leq \frac{C_1}{\epscollect^2}$
with $C_1 =  8 \totHorizon/ \log (1 + \totHorizon\noiseconst)$.
Consider Algorithm~\ref{alg:full_domain_exploration_basic}, which collects all measurements along the path returned by Problem \eqref{eq:sampling_rule_timex}.
There exists an iteration $\bar{n}\leq n^\star$, such that at the end of the iteration with $x(k)\in\mathbb{X}_n$ satisfies:
\begin{align}
    \cwidth[\nfin, i](\state_\h, \coninput[\h]) < \epsterminal, \forall i \in \Intrange{1}{\statedim}, \h \in \Intrange{0}{\totHorizon-1}, \dynVec \in \dynSet[\n], \pi \in \pessiSet[]{\nfin}(\state(k), \totHorizon) \label{eqn:lemma_termination_condition}
\end{align}
where $\state_{\h+1} = \dynVec(\state_\h,\coninput_\h), \coninput_\h = \pi_\h(\state_\h)$ and $\state_0=\state(k)$, i.e., \cref{alg:full_domain_exploration_basic} terminates in $\nfin$ iterations.
\end{lemma}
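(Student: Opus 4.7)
The plan is to argue by contradiction, lower bounding the total squared posterior widths of collected measurements by $\n^\star\epscollect^2$ and upper bounding the same quantity by $C_1\,\betaconst[\n^\star]\gammaconst{\n^\star}$ via the standard information-theoretic argument of \cite{beta_srinivas}, adapted to our batched setting as in \cite{prajapat2025safe}.

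Suppose for contradiction that Problem~\eqref{eq:sampling_rule_timex} is feasible for every $n\in\Intrange{1}{\n^\star}$. By \cref{prop:suffi_info_epscollect}, at each such iteration the pessimistic policy collects at least one measurement $z_k$ from the true trajectory satisfying $\cwidth[n-1](z_k)\geq\epscollect$, hence $d_n\geq 1$ and $\sum_{i=1}^{\statedim}\cwidth[n-1,i]^2(z_k)\geq\epscollect^2$. Summing over all collected points and using monotonicity $\betaconst[n-1,i]\leq\betaconst[\n^\star]$ gives
\begin{align*}
\n^\star\epscollect^2 \;\leq\; \sum_{n=1}^{\n^\star}\sum_{k=1}^{d_n}\sum_{i=1}^{\statedim} \cwidth[n-1,i]^2(z_k) \;\leq\; 4\statedim\,\betaconst[\n^\star]\,\max_{i}\sum_{n=1}^{\n^\star}\mathrm{tr}(K^i_{d_n,d_n}),
\end{align*}
where $K^i_{d_n,d_n}$ is the posterior kernel matrix of the $d_n$ freshly collected points conditioned on $\D_{n-1}$, whose diagonal reproduces the variances $\sigma_{n-1,i}^2(z_k)$.

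For the upper bound I would use that $k^i(z,z)\leq 1$ implies every eigenvalue of $K^i_{d_n,d_n}$ is at most $d_n\leq\totHorizon$, and apply the elementary inequality $x\leq\tfrac{c}{\log(1+c)}\log(1+x)$ for $0\leq x\leq c$ to each eigenvalue with $c=\totHorizon\noiseconst$, yielding
\begin{align*}
\mathrm{tr}(K^i_{d_n,d_n})\;\leq\;\tfrac{\totHorizon}{\log(1+\totHorizon\noiseconst)}\log\det(I+\noiseconst K^i_{d_n,d_n}) \;=\; \tfrac{2\totHorizon}{\log(1+\totHorizon\noiseconst)}\,I(Y^i_{d_n};f_i^\star\mid\D_{n-1}).
\end{align*}
Summing over $n$ telescopes into $I(Y^i_{D_{\n^\star}};f_i^\star)\leq\gammaconst{\n^\star}$, and combining the two displays yields $\n^\star/(\betaconst[\n^\star]\gammaconst{\n^\star})\leq C_1/\epscollect^2$ up to constants absorbed in $\betaconst[\n^\star]$ and $\gammaconst{\n^\star}$. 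This contradicts the defining property of $\n^\star$ as the \emph{largest} such integer, so Problem~\eqref{eq:sampling_rule_timex} must become infeasible at some $\nfin\leq\n^\star$, which is exactly the conclusion~\eqref{eqn:lemma_termination_condition}.

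The main obstacles are (i) correctly handling the batched model update, since multiple informative points can be collected within a single executed trajectory before any GP refit, which forces us to reason about the spectrum of the posterior kernel matrix $K^i_{d_n,d_n}$ rather than sequential single-point variances; and (ii) certifying that the informative points actually arise on the \emph{true} trajectory despite the model being only optimistically planned — this is precisely what \cref{prop:suffi_info_epscollect} delivers through the tolerance coupling $\epsterminal=\epscollect+\LipWidthPi\LipDyn_{w,\totHorizon}(\epscollect+\noisebound)$, which is therefore essential to close the loop between the exploration objective in Problem~\eqref{eq:sampling_rule_timex} and the information-theoretic sample complexity bound.
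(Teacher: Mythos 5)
Your proposal follows essentially the same route as the paper's proof: lower-bound the accumulated squared confidence widths of the collected points by the sampling threshold, upper-bound them through the trace/eigenvalue argument $\mathrm{tr}(K^i_{d_n,d_n})\leq \tfrac{2\totHorizon}{\log(1+\totHorizon\noiseconst)}\, I(Y^i_{d_n};\dyntrue_i\mid \D_{n-1})$, and telescope the conditional mutual information into $\gamma_{n^\star}$; \cref{prop:suffi_info_epscollect} supplies $d_n\geq 1$ exactly as the paper uses it. There are, however, two issues. The first is a genuine logical flaw in how you close the argument: assuming feasibility of Problem~\eqref{eq:sampling_rule_timex} for all $n\in\Intrange{1}{\n^\star}$, you derive $\n^\star/(\beta_{\n^\star}\gamma_{\n^\star})\leq C_1/\epscollect^2$, but this is precisely the \emph{defining} property of $\n^\star$, not its negation, so no contradiction is obtained. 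You must either assume feasibility through iteration $\n^\star+1$ and derive $(\n^\star+1)/(\beta_{\n^\star+1}\gamma_{\n^\star+1})\leq C_1/\epscollect^2$, which does contradict maximality, or argue directly as the paper does: whatever the terminating iteration $\nfin$ is, it satisfies $\nfin\leq\sum_{n=1}^{\nfin} d_n\leq C_1\beta_{\nfin}\gamma_{\nfin}/\epscollect^2$, hence $\nfin\leq\n^\star$ by maximality of $\n^\star$.

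The second issue is the bound $\sum_{i=1}^{\statedim}\mathrm{tr}(K^i_{d_n,d_n})\leq\statedim\max_i\mathrm{tr}(K^i_{d_n,d_n})$, which introduces a spurious factor of $\statedim$ that cannot simply be ``absorbed'' if you want the stated constant $C_1=8\totHorizon/\log(1+\totHorizon\noiseconst)$. The exact constant is obtained by keeping the sum over components and using that the mutual information of the vector-valued model is the sum of the componentwise terms (\cref{lem:mutual_information}), so that $\sum_i\sum_n\mathrm{tr}(K^i_{d_n,d_n})\leq\tfrac{2\totHorizon}{\log(1+\totHorizon\noiseconst)}\,I(Y_{\D_{\nfin}};\dyntrueVec_{\D_{\nfin}})\leq\tfrac{2\totHorizon}{\log(1+\totHorizon\noiseconst)}\,\gamma_{\nfin}$. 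Both fixes are one line; the rest of your argument, in particular the treatment of batched model updates via the spectrum of the posterior kernel matrix of each freshly collected batch, matches the paper's proof.
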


\begin{proof} 
\looseness -1 
This analysis follows from \citet{beta_srinivas}, and we show that only a limited number of measurements can be gathered until \eqref{eq:sampling_rule_timex} becomes infeasible. As defined earlier, confidence width $\cwidth[\n, i](\stateAct)= 2 \sqrt{\betaconst[\n]} \sigconst[\n, i](\stateAct)$. 
First, note that a finite $n^\star$ exists due to the assumed sublinear growth of $\beta_n\gamma_n$, cf. Assumption~\ref{assump:sublinear}.

A key difference from \citet{beta_srinivas,prajapat2025safe} is that we consider $\statedim$ vector-valued GPs and allow the model to be updated with up to $\totHorizon$ measurements simultaneously. 
Following analysis from \citet{prajapat2022near}, we can bound the uncertainty at location $z_{\numdataPts,\n, i}$ after $\n-1$ model updates as,
\begin{align*}
        \sum_{i=1}^{\statedim}
        \sum_{\numdataPts \in \numdataPts^\epsilon_{\n}}  \cwidth[\n-1, i]^2(\stateAct_{\numdataPts, \n})  \! 
        &= \sum_{i=1}^{\statedim}
        \sum_{\numdataPts \in \numdataPts^\epsilon_{\n}} 4 \betaconst[\n] \sigconst[\n-1,i]^2(\stateAct_{\numdataPts, \n}) \labelrel\leq{step:all_meas_bound} \sum_{i=1}^{\statedim}
        \sum_{\numdataPts=1}^{\numdataPts_{\n}} 4 \betaconst[\n] \sigconst[\n-1,i]^2(\stateAct_{\numdataPts, \n}) \\
        &= 4 \betaconst[\n]  
        \sum_{i=1}^{\statedim}  \mathrm{trace}(K^{i}_{d_n,d_n}) = 4 \betaconst[\n]  
        \sum_{i=1}^{\statedim}  \sum_{\numdataPts=1}^{\numdataPts_{\n}} 
        \lambda_{\numdataPts, \n, i}\\
        &\labelrel\leq{step:eig_bound}  4 \betaconst[\n]  \sum_{i=1}^{\statedim}  \sum_{\numdataPts=1}^{\numdataPts_{\n}} \sigma^2 C_2 \log(1+\noiseconst \lambda_{\numdataPts, \n, i}) \\ 
        &\leq  C_1\betaconst[\n] \sum_{i=1}^{\statedim}  \sum_{\numdataPts=1}^{\numdataPts_{\n}}\frac{1}{2} \log(1+\noiseconst \lambda_{\numdataPts, \n, i}) \numberthis \label{eq:cw_bound}.
\end{align*}
\noindent Step~\eqref{step:all_meas_bound} follows since $\sigconst[\n-1,i]^2(\stateAct) \geq 0, \forall z$ and $|\numdataPts^\epsilon_{\n}| \leq \numdataPts_{\n}$. 
Step~\eqref{step:eig_bound} uses the fact that $s \leq C_2 \log(1+s)$ where $C_2 = \totHorizon \noiseconst/\log(1+\totHorizon\noiseconst) \geq 1$ $ \forall s \in [0, \noiseconst\totHorizon]$ with $s=\sigma^{-2}\lambda_{d,n,i}$. 
Note that $\lambda_{d,n,i}\leq \mathrm{trace}(K^{i}_{d_n,d_n})\leq d_n\leq \totHorizon$, using the fact that  $k_{D_n}(x,x)\leq k(x,x)\leq 1$, see Assumption~\ref{assump:q_RKHS}.
Finally, substituting $C_1 = 8\totHorizon/\log(1+ \totHorizon\noiseconst)$ results in~\cref{eq:cw_bound}.

The sampling rule~\eqref{eq:sampling_rule_timex}, i.e.,  collecting measurements 
until any one of the
components has uncertainty above $\epscollect$, implies
\begin{align}
\sum_{\numdataPts=1}^{|\numdataPts^\epsilon_{\n}|} \epscollect^2 \leq  \sum_{i=1}^{\statedim}
        \sum_{\numdataPts \in \numdataPts^\epsilon_{\n}} \! \cwidth[\n-1, i]^2(\stateAct_{\numdataPts, \n}) &\leq 
        C_1 \betaconst[\n] \sum_{i=1}^{\statedim}  \sum_{\numdataPts=1}^{\numdataPts_{\n}}\frac{1}{2} \log(1+\noiseconst \lambda_{\numdataPts, \n, i})) 
\end{align}
Analogous to \citet{prajapat2025safe}, suppose we sample for $\nfin$ iterations before terminating, which implies
\begin{align}
    \sum_{n=1}^{\nfin} |d^\epsilon_n| \epscollect^2 \leq  C_1 \sum_{\n=1}^{\nfin} \sum_{i=1}^{\statedim} \betaconst[\n,i] 
    &\sum_{\numdataPts=1}^{\numdataPts_{\n}}
    \frac{1}{2} \log(1+\noiseconst \lambda_{\numdataPts, \n, i})) \leq C_1 \betaconst[\nfin] I(Y_{\D_\nfin};\dyntrueVec_{\D_\nfin})) \leq C_1 \betaconst[\nfin] \gamma_{\nfin},
    \label{eqn:mono-gamma}
\end{align}
where we used monotonicity of $\beta_{n,i}, \forall i \in \Intrange{1}{\statedim}$, denote  $\beta_{n} = \max_{i} \beta_{n,i}$, the definition of the maximal information capacity $\gamma_n$ and the mutual information according to \cref{lem:mutual_information}:
\begin{align*}
    I(Y_{\D_\nfin};\dyntrueVec_{\D_\nfin})= \sum_{\n=0}^{\nfin} \sum_{i=1}^{\statedim} 
    \sum_{\numdataPts=1}^{\numdataPts_{\n}}
    \frac{1}{2} \log(1+\noiseconst \lambda_{\numdataPts, \n, i})).
\end{align*}
This implies \vspace{-2em}
\begin{align}
  &\frac{\nfin}{\betaconst[\nfin] \gamma_{\nfin}}\leq \frac{1}{\betaconst[\nfin] \gamma_{\nfin}} \sum_{n=1}^{\nfin} |\numdataPts^\epsilon_n| \leq \frac{C_1 }{\epscollect^2} \label{eqn:sample_complexity}
\end{align}
which ensures $\nfin\leq n^\star$ by definition of $n^\star$. 


Note that the condition~\eqref{eqn:lemma_termination_condition} in the lemma corresponds to the termination criterion in \cref{alg:full_domain_exploration_basic} and hence the algorithm terminates after at most $\nfin$ iterations.
\end{proof}

The following intermediate lemma uses Lipschitz continuity (\cref{assump:lipschitz}) to bound the effect of different dynamics and policies and derive the tolerance $\epsguarantee$. 
\begin{lemma} \label{lem:contra_eps} 
Let \cref{assump:q_RKHS,assump:safeSet,assump:lipschitz,assump:sublinear} hold. 
Then \cref{alg:full_domain_exploration_basic} terminates at some state $\state(\ki)=\state_s\in\safeInit{\nfin}$ with $\nfin\leq \n^\star$. 
Consider two policies $\pi^b \in \optiSetwe[,\frac{\epsguarantee}{2}]{c,\nfin}(\state_s;\Horizon + \delta \h) \backslash \pessiSet[]{\nfin}(\state_s;\Horizon + \delta \h)$, $\pessiPolicy \in \pessiSet[]{\nfin}(\state_s;\Horizon + \delta \h)$ such that $\|\pi^b(x) - \pessiPolicy(x)\| \leq \epsclose$ $\forall x\in\X$ with arbitrary small $\epsclose>0$ and some $\delta \h \in \Intrange{0}{\Delta \Horizon}$. 
%
Then it holds that $\forall \dynVec_1,\dynVec_2 \in \dynSet[\nfin]$: $\| \state_\h^{s}  - \state_\h^o\| \leq \epsguarantee/2$, $\|\coninput_\h^s-\coninput_\h^o\|\leq \epsguarantee/2, \forall \h \in \Intrange{0}{\Horizon + \delta \h-1}$, with $\state^s_{\h+1} = \dynVec_1(\state^s_{\h}, \coninput_\h^s) + \noise_\h$, $\coninput_\h^s \coloneqq \pi^b(\state^s_{\h})$ and $\state^o_{\h+1} = \dynVec_2(\state^o_{\h}, \coninput_\h^o)) + \noise_\h,\coninput_\h^o\coloneqq\pi^b(\state^o_{\h}), \state^o_0= \state^s_0=\state_s$, $\eta_h\in\W$. 
\end{lemma}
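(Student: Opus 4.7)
The plan is a two-stage Gronwall-style comparison that leverages the termination criterion of Algorithm~\ref{alg:full_domain_exploration_basic} to control the model uncertainty even along non-pessimistic trajectories by introducing an auxiliary anchor trajectory. First, I would invoke Lemma~\ref{lem:sample_complexity} to conclude that the algorithm terminates after $\nfin \leq \n^\star$ iterations at some $\state_s = \state(k) \in \safeInit{\nfin}$ with $\cwidth[\nfin](\state_h,\coninput_h) < \epsterminal$ for every pessimistic policy $\pi \in \pessiSet[]{\nfin}(\state_s;\totHorizon)$ and every $\dynVec \in \dynSet[\nfin]$ along the induced trajectory. Applied to $\pessiPolicy$ with $\dynVec_1$, this yields a nominal anchor trajectory $(\bar{\state}_h, \pessiPolicy_h(\bar{\state}_h))$ with $\bar{\state}_0 = \state_s$ and uniformly small uncertainty.

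The main technical obstacle is that $\pi^b \notin \pessiSet[]{\nfin}$, so the termination guarantee does not directly apply to its trajectory; the role of the $\epsclose$-closeness to $\pessiPolicy$ is precisely to transfer the small-uncertainty property to the $\pi^b$-trajectory driven by $\dynVec_1$. I would do this via a one-step Lipschitz estimate
\[
\|\state^s_{h+1} - \bar{\state}_{h+1}\| \leq L_{\mathrm{f}}(1+L_\pi)\|\state^s_h - \bar{\state}_h\| + L_{\mathrm{f}}\epsclose,
\]
obtained from Lipschitz continuity of $\dyntrueVec$ and $\pi^b$ together with $\|\pi^b(\bar{\state}_h) - \pessiPolicy(\bar{\state}_h)\| \leq \epsclose$. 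Unrolling the recursion with $\state^s_0 = \bar{\state}_0 = \state_s$ gives $\|\state^s_h - \bar{\state}_h\| \leq L_{\mathrm{f}}\epsclose L_h$, and Lipschitz continuity of $\cwidth[\nfin]$ in $(\state,\coninput)$ yields $\cwidth[\nfin](\state^s_h,\coninput^s_h) < \epsterminal + c_1 \epsclose$ for a constant $c_1$ depending only on $L_{\cwidth},L_\pi,L_\mathrm{f}$ and $L_{\totHorizon}$.

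Finally, I would carry out the analogous Gronwall argument between the two $\pi^b$-trajectories driven by $\dynVec_1$ and $\dynVec_2$. Since both share the noise realizations $\noise_h$ and $\dynVec_1, \dynVec_2 \in \dynSet[\nfin]$ differ pointwise by at most $\cwidth[\nfin]$, a one-step bound
\[
\|\state^s_{h+1} - \state^o_{h+1}\| \leq L\|\state^s_h - \state^o_h\| + \cwidth[\nfin](\state^s_h, \coninput^s_h),
\]
with $L \leq L_\mathrm{f}(1+L_\pi)$, combined with the forcing-term bound from the previous step, yields $\|\state^s_h - \state^o_h\| \leq (\epsterminal + c_1 \epsclose) L_h$ for all $h \in \Intrange{0}{\totHorizon-1}$. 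Using the choice of $\epsterminal$ in~\eqref{eq:JK_formula_epsilon_d} together with the lower bound on $\epsguarantee$ (which ensures $\max\{1,L_\pi\}\epsterminal L_{\totHorizon} \leq \epsguarantee/2$) and the freedom to take $\epsclose$ arbitrarily small, we obtain $\|\state^s_h - \state^o_h\| \leq \epsguarantee/2$; the input bound $\|\coninput^s_h - \coninput^o_h\| \leq L_\pi \|\state^s_h - \state^o_h\| \leq \epsguarantee/2$ then follows by Lipschitz continuity of $\pi^b$, completing the proof.
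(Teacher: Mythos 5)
Your overall strategy is the same as the paper's: use the termination criterion to get a small confidence width along a pessimistic anchor trajectory, transfer that bound to the $\pi^b$-trajectory via an $\epsclose$-perturbation argument, and then run a Lipschitz/Gronwall recursion between the two dynamics $\dynVec_1,\dynVec_2$ with the width as the forcing term. Your decomposition is in fact slightly cleaner than the paper's (you first establish a uniform width bound along the $\state^s$-trajectory and then compare $\state^s$ to $\state^o$ directly with growth rate $L$, whereas the paper routes both comparisons through the true-dynamics trajectory under $\pi^b$ and pays an extra $(\LipDyn+\LipWidthPi)$ growth and a factor of $2$). Two bookkeeping points you skip: the termination criterion is stated for horizon $\totHorizon$, so you need the invariance property of \cref{assump:safeSet} to append $\pi_{\mathrm{f}}$ and extend the width bound to all $h\in\Intrange{0}{\Horizon+\delta\h-1}$ for a policy in $\pessiSet[]{\nfin}(\state_s;\Horizon+\delta\h)$; this is needed and not free.

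The genuine gap is the missing noise terms. The width guarantee from \cref{lem:sample_complexity} (eq.~\eqref{eqn:lemma_termination_condition}) holds along \emph{noise-free} propagations $\state_{\h+1}=\dynVec(\state_\h,\coninput_\h)$ of pessimistic policies, while the trajectories $\state^s_\h,\state^o_\h$ in the lemma are driven by $\noise_\h\in\W$. Your anchor $\bar\state_\h$ therefore either must be noise-free (to inherit $\cwidth[\nfin](\bar\state_\h,\pessiPolicy[\h](\bar\state_\h))<\epsterminal$), in which case your one-step estimate must read
$\|\state^s_{\h+1}-\bar\state_{\h+1}\|\leq L_{\mathrm{f}}(1+L_\pi)\|\state^s_\h-\bar\state_\h\|+L_{\mathrm{f}}\epsclose+\noisebound$,
or it is noisy and then has no width guarantee. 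Consequently the uniform forcing bound becomes $\cwidth[\nfin](\state^s_\h,\coninput^s_\h)<\epsterminal+c_1\epsclose+c_2\noisebound$ and the final bound picks up $O(\noisebound)$ terms that your stated condition $\max\{1,L_\pi\}\epsterminal L_{\totHorizon}\leq\epsguarantee/2$ does not absorb; this is precisely why the paper's admissible tolerance \eqref{eq:JK_espilon_formula} carries the $\LipWidthPi\LipDyn_{\h}\noisebound$ contributions and why $\epsguarantee$ cannot be taken arbitrarily small when $\noisebound>0$. With the noise terms restored in both recursions, your argument goes through and yields a bound that is implied by the paper's choice of $\epsguarantee$.
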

\begin{proof} From the termination criterion, we know that $\cwidth[\nfin, i](\state_\h, \coninput[\h]) < \epsterminal,\forall \h\in \Intrange{0}{\totHorizon-1}$, $\dynVec \in \dynSet[\nfin]$, $\pi \in \pessiSet[]{\nfin}(\safeInit{\nfin};\totHorizon)$ with $\state_{\h+1} = \dynVec(\state_\h,\coninput_\h)$, $\coninput_\h = \pi_\h(\state_\h)$, $\state_0=\state_s\in\safeInit{n}$. 
Using the invariance property of \cref{assump:safeSet}, we know that for all $\delta\h \in \Intrange{0}{\Delta \Horizon}, \pessiPolicy \in \pessiSet[]{\nfin}(\state_s;\Horizon + \delta \h),~\exists \pi_f \in \Pi_{\Delta \Horizon - \delta \h}: [\pessiPolicy,\pi_f] \in \pessiSet[]{\nfin}(\state_s;\totHorizon)$ , i.e., after reaching in the terminal set with $\pessiPolicy$, the appended policy $\pi_f$ ensures that $\forall \dynVec \in \dynSet[\nfin]$ with noise, the propagated state still remains in the safe set. 
Hence, $\cwidth[\n, i](\state_\h, \coninput[\h]) < \epsterminal,\forall \h\in \Intrange{0}{\Horizon + \delta \h-1}$, $\dynVec \in \dynSet[\nfin]$, $\pi \in \pessiSet[]{\nfin}(\state_s;\Horizon + \delta \h)$ with $\state_{\h+1} = \dynVec(\state_\h,\coninput_\h)$, $\coninput_\h = \pi_\h(\state_\h)$, $\state_0=\state_s\in\safeInit{\nfin}$
and thus $\cwidth[n](\statetrue[,p]_{\h})<\epsterminal$ $\forall \h\in \Intrange{0}{\Horizon + \delta \h}$. 

Define the true dynamics subject to the two different policies by
$\statetrue[,b]_{\h+1} \coloneqq \dyntrueVec(\statetrue[,b]_{\h}, \pi^b_{\h}(\statetrue[,b]_{\h})) + \eta_\h$ and $\statetrue[,p]_{\h+1} \coloneqq \dyntrueVec(\statetrue[,p]_{\h}, \pessiPolicy[\h](\statetrue[,p]_{\h}))~\forall \h \in \Intrange{0}{\Horizon + \delta \h-1}$, where only $\statetrue[,b]_{\h}$ is subject to noise $\eta_h$. 

The remainder of this proof repeatedly uses Lipschitz continuity arguments similar to \cref{prop:suffi_info_epscollect}. 
The true dynamics subject to the two policies satisfies
\begin{align}
\|\statetrue[,b]_{\h+1}-\statetrue[,p]_{\h+1}\|& \leq\LipDyn\|\statetrue[,b]_{\h}-\statetrue[,p]_{\h}\|+L_{\mathrm{f}} \epsclose+\noisebound
\leq \dots\leq \sum_{j=0}^{h}\LipDyn^j (L_{\mathrm{f}}\epsclose+\noisebound),\label{eqn:two_policy_recursion}
\end{align}
\begin{align*}
    \cwidth[\nfin](\statetrue[,b]_{\h}, \pi^b_{\h}(\statetrue[,b]_{\h})) - &\cwidth[\nfin](\statetrue[,p]_{\h}, \pessiPolicy[\h](\statetrue[,p]_{\h}))\\ 
   &\leq \LipWidth \|\statetrue[,b]_{\h}-\statetrue[,p]_{\h}\| + \LipWidth \| \pi^b_{\h}(\statetrue[,b]_{\h}) - \pi^b_{\h}(\statetrue[,p]_{\h}) + \pi^b_{\h}(\statetrue[,p]_{\h}) - \pessiPolicy[\h](\statetrue[,p]_{\h}) \|\\
   &\leq \LipWidth  (1+\LipPi) \|\statetrue[,b]_{\h}-\statetrue[,p]_{\h}\| + \LipWidth \epsclose\\
   &\stackrel{\eqref{eqn:two_policy_recursion}}{\leq} \underbrace{\LipWidth\left(1+L_{\mathrm{f}}(1+\LipPi)\sum_{j=0}^{h-1}\LipDyn^j\right)}_{\eqqcolon K_{\epsilon,h}}\epsclose+\underbrace{\LipWidthPi\sum_{j=0}^{h-1}\LipDyn^j}_{=\LipWidthPi \LipDyn_{\h}}\noisebound,
   \end{align*}
which ensures
\begin{align*}
 \cwidth[\nfin](\statetrue[,b]_{\h}, \pi^b_{\h}(\statetrue[,b]_{\h})) \leq \cwidth[\nfin](\statetrue[,p]_{\h}, \pi^p_{\h}(\statetrue[,b]_{\h}))+ K_{\epsilon,h}\epsclose+\LipWidthPi \LipDyn_{\h}\noisebound<\epsterminal + K_{\epsilon,h}\epsclose+\LipWidthPi \LipDyn_{\h}\noisebound.
\end{align*}
Next, we bound the difference between the true dynamics and some dynamics $\dynVec_1\in\dynSet[n]$, both subject to the same policy $\pi^b$ and same noise $\eta_h$:
\begin{align*}
 \cwidth[\nfin](\state^s_\h, \pi^b_{\h}(\state^s_\h))
 & \leq \cwidth[\nfin](\statetrue[,b]_{\h}, \pi^b_{\h}(\statetrue[,b]_{\h}))+\LipWidthPi\|\statetrue[,b]_{\h}-\state^s_\h\|\\
 & < \epsterminal+K_{\epsilon,h}\epsclose+\LipWidthPi \LipDyn_{\h}\noisebound+\LipWidthPi\|\statetrue[,b]_{\h}-\state^s_\h\|, \numberthis \label{eq:diff_f1_with_true_dyn}\\
\|\statetrue[,b]_{\h+1}-\state^s_{\h+1}\| & \leq\LipDyn\|\statetrue[,b]_{\h}-\state^s_{\h}\|+  \cwidth[\nfin](\state^s_\h, \pi^b_{\h}(\state^s_\h))\\
& \stackrel{\eqref{eq:diff_f1_with_true_dyn}}{\leq} (\LipDyn+\LipWidthPi)\|\statetrue[,b]_{\h}-\state^s_{\h}\|+  \epsterminal+K_{\epsilon,h}\epsclose+\LipWidthPi \LipDyn_{\h}\noisebound, 
 \end{align*}
which using recursion until $\statetrue[,b]_{0}=\state^s_{0}$ implies, 
$$\|\statetrue[,b]_{\h}-\state^s_{\h}\| \leq\sum_{j=0}^{ h-1}(\LipDyn+\LipWidthPi)^j (\epsterminal+K_{\epsilon,h-j-1}\epsclose+ \LipWidthPi \LipDyn_{h-j-1}\noisebound).$$

Analogously, we have for $\state^o$ with dynamics $\dynVec_2\in\dynSet[n]$:
\begin{align*}
\|\statetrue[,b]_{\h}-\state^o_{\h}\|& \leq\sum_{j=0}^{ h-1}(\LipDyn+\LipWidthPi)^j (\epsterminal+K_{\epsilon,h-j-1}\epsclose+ \LipWidthPi \LipDyn_{h-j-1}\noisebound),
 \end{align*}
 which yields
\begin{align*}
\|\state^s_{h}-\state^o_{\h}\|\leq 2\sum_{j=0}^{ h-1}(\LipDyn+\LipWidthPi)^j (\epsterminal+K_{\epsilon,h-j-1}\epsclose+ \LipWidthPi \LipDyn_{h-j-1}\noisebound)\leq \epsguarantee/2,\\
\|\coninput^s_{h}-\coninput^o_{\h}\|\leq 2\LipPi\sum_{j=0}^{ h-1}(\LipDyn+\LipWidthPi)^j(\epsterminal+K_{\epsilon,h-j-1}\epsclose+ \LipWidthPi \LipDyn_{h-j-1}\noisebound)\leq \epsguarantee/2\,
 \end{align*} 
with
\begin{align}
\label{eq:JK_espilon_formula}
\epsguarantee > \max\{1,\LipPi\} 4 \sum_{j=0}^{ \totHorizon-1}(\LipDyn+\LipWidthPi)^j (\epsterminal+ \LipWidthPi \LipDyn_{\totHorizon-j-1}\noisebound), 
 \end{align} 
and $\epsclose>0$ sufficiently small. 
\end{proof}

\begin{figure}
\begin{subfigure}[b]{0.48\columnwidth}
    \centering
    \scalebox{0.5}{\tikzset{every picture/.style={line width=0.75pt}} 

\begin{tikzpicture}[x=0.75pt,y=0.75pt,yscale=-1,xscale=1]

\draw  [fill={rgb, 255:red, 128; green, 128; blue, 128 }  ,fill opacity=0.3 ] (90,118.2) .. controls (90,91.86) and (111.36,70.5) .. (137.7,70.5) -- (433.8,70.5) .. controls (460.14,70.5) and (481.5,91.86) .. (481.5,118.2) -- (481.5,261.3) .. controls (481.5,287.64) and (460.14,309) .. (433.8,309) -- (137.7,309) .. controls (111.36,309) and (90,287.64) .. (90,261.3) -- cycle ;
\draw  [fill={rgb, 255:red, 245; green, 166; blue, 35 }  ,fill opacity=0.72 ] (126,122) .. controls (138,111.5) and (205,105.6) .. (227,113.2) .. controls (249,120.8) and (357,70.4) .. (391.4,76) .. controls (425.8,81.6) and (470.74,270.31) .. (453.4,282.8) .. controls (436.06,295.29) and (369.66,303.65) .. (339.8,304.8) .. controls (309.94,305.95) and (221,278.8) .. (210.2,276.8) .. controls (199.4,274.8) and (132.1,255.6) .. (125,230) .. controls (117.9,204.4) and (114,132.5) .. (126,122) -- cycle ;
\draw  [fill={rgb, 255:red, 108; green, 215; blue, 108 }  ,fill opacity=0.6 ] (126,122) .. controls (138,111.5) and (194.2,85.4) .. (241,94) .. controls (287.8,102.6) and (391,193.67) .. (386,213.67) .. controls (381,233.67) and (226.6,272) .. (210.2,276.8) .. controls (193.8,281.6) and (132.1,255.6) .. (125,230) .. controls (117.9,204.4) and (114,132.5) .. (126,122) -- cycle ;
\draw  [line width=3.75]  (401.24,175.05) .. controls (401.24,174.34) and (401.81,173.77) .. (402.52,173.77) .. controls (403.22,173.77) and (403.8,174.34) .. (403.8,175.05) .. controls (403.8,175.75) and (403.22,176.32) .. (402.52,176.32) .. controls (401.81,176.32) and (401.24,175.75) .. (401.24,175.05) -- cycle ;
\draw  [line width=3.75]  (362.64,179.25) .. controls (362.64,178.54) and (363.21,177.97) .. (363.92,177.97) .. controls (364.62,177.97) and (365.2,178.54) .. (365.2,179.25) .. controls (365.2,179.95) and (364.62,180.52) .. (363.92,180.52) .. controls (363.21,180.52) and (362.64,179.95) .. (362.64,179.25) -- cycle ;
\draw [line width=1.5]  [dash pattern={on 5.63pt off 4.5pt}]  (367.89,178.81) -- (398.54,175.48) ;
\draw [shift={(402.52,175.05)}, rotate = 173.79] [fill={rgb, 255:red, 0; green, 0; blue, 0 }  ][line width=0.08]  [draw opacity=0] (8.13,-3.9) -- (0,0) -- (8.13,3.9) -- cycle    ;
\draw [shift={(363.92,179.25)}, rotate = 353.79] [fill={rgb, 255:red, 0; green, 0; blue, 0 }  ][line width=0.08]  [draw opacity=0] (8.13,-3.9) -- (0,0) -- (8.13,3.9) -- cycle    ;

\draw (229,173.13) node [anchor=north west][inner sep=0.75pt]  [font=\LARGE]  {$\upPi _{n}^{p}$};
\draw (372.07,252.87) node [anchor=north west][inner sep=0.75pt]  [font=\LARGE]  {$\upPi _{n}^{o,\frac{\epsilon}{2}}$};
\draw (416.33,171.93) node [anchor=north west][inner sep=0.75pt]  [font=\LARGE]  {$\pi ^{b}$};
\draw (330.53,171.53) node [anchor=north west][inner sep=0.75pt]  [font=\LARGE]  {$\pi ^{p}$};
\draw (370.62,150.07) node [anchor=north west][inner sep=0.75pt]  [font=\LARGE]  {$\epsclose$};
\draw (100.43,88) node [anchor=north west][inner sep=0.75pt]  [font=\Huge]  {$\upPi $};

\end{tikzpicture}}
    \vspace{0.2em}
\caption{\looseness -1 Contradiction: pick $\pi^b : \|\pi^b - \pessiPolicy\| <\epsclose$} 
    \label{fig:dyn_opti_in_pessi} 
\end{subfigure}
~
\begin{subfigure}[b]{0.48\columnwidth}
    \centering
    \scalebox{0.6}{\tikzset{every picture/.style={line width=0.75pt}} 

\begin{tikzpicture}[x=0.75pt,y=0.75pt,yscale=-1,xscale=1]

\draw  [fill={rgb, 255:red, 80; green, 227; blue, 194 }  ,fill opacity=1 ] (1179.75,147) .. controls (1179.75,128.06) and (1195.11,112.7) .. (1214.05,112.7) .. controls (1233,112.7) and (1248.35,128.06) .. (1248.35,147) .. controls (1248.35,165.94) and (1233,181.3) .. (1214.05,181.3) .. controls (1195.11,181.3) and (1179.75,165.94) .. (1179.75,147) -- cycle ;
\draw [color={rgb, 255:red, 245; green, 166; blue, 35 }  ,draw opacity=1 ][line width=1.5]  [dash pattern={on 5.63pt off 4.5pt}]  (1046.7,251.85) .. controls (1087.39,257.67) and (1090.05,244.33) .. (1098.05,222.33) .. controls (1106.05,200.33) and (1146.05,185.67) .. (1164.05,196.33) .. controls (1182.05,207) and (1201.39,182.33) .. (1214.05,147) ;
\draw [shift={(1086,246.35)}, rotate = 143.42] [fill={rgb, 255:red, 245; green, 166; blue, 35 }  ,fill opacity=1 ][line width=0.08]  [draw opacity=0] (13.4,-6.43) -- (0,0) -- (13.4,6.44) -- (8.9,0) -- cycle    ;
\draw [shift={(1132.92,195.19)}, rotate = 161.94] [fill={rgb, 255:red, 245; green, 166; blue, 35 }  ,fill opacity=1 ][line width=0.08]  [draw opacity=0] (13.4,-6.43) -- (0,0) -- (13.4,6.44) -- (8.9,0) -- cycle    ;
\draw [shift={(1200.35,176.44)}, rotate = 122.96] [fill={rgb, 255:red, 245; green, 166; blue, 35 }  ,fill opacity=1 ][line width=0.08]  [draw opacity=0] (13.4,-6.43) -- (0,0) -- (13.4,6.44) -- (8.9,0) -- cycle    ;
\draw  [fill={rgb, 255:red, 0; green, 0; blue, 0 }  ,fill opacity=1 ][line width=3.75]  (1046.7,251.85) .. controls (1046.7,251.14) and (1047.28,250.57) .. (1047.98,250.57) .. controls (1048.69,250.57) and (1049.26,251.14) .. (1049.26,251.85) .. controls (1049.26,252.55) and (1048.69,253.12) .. (1047.98,253.12) .. controls (1047.28,253.12) and (1046.7,252.55) .. (1046.7,251.85) -- cycle ;
\draw [color={rgb, 255:red, 0; green, 0; blue, 0 }  ,draw opacity=1 ][line width=1.5]  [dash pattern={on 5.63pt off 4.5pt}]  (1047.98,250.57) .. controls (1091.2,237.26) and (1091.78,229.65) .. (1088.79,210.72) .. controls (1085.8,191.79) and (1130.01,171.21) .. (1155.6,172.6) .. controls (1181.19,173.99) and (1184.46,156.83) .. (1190.18,128.65) ;
\draw [shift={(1083.99,234.51)}, rotate = 142.74] [fill={rgb, 255:red, 0; green, 0; blue, 0 }  ,fill opacity=1 ][line width=0.08]  [draw opacity=0] (13.4,-6.43) -- (0,0) -- (13.4,6.44) -- (8.9,0) -- cycle    ;
\draw [shift={(1121.97,179.24)}, rotate = 157.25] [fill={rgb, 255:red, 0; green, 0; blue, 0 }  ,fill opacity=1 ][line width=0.08]  [draw opacity=0] (13.4,-6.43) -- (0,0) -- (13.4,6.44) -- (8.9,0) -- cycle    ;
\draw [shift={(1184.43,152.68)}, rotate = 111.06] [fill={rgb, 255:red, 0; green, 0; blue, 0 }  ,fill opacity=1 ][line width=0.08]  [draw opacity=0] (13.4,-6.43) -- (0,0) -- (13.4,6.44) -- (8.9,0) -- cycle    ;
\draw  [fill={rgb, 255:red, 0; green, 0; blue, 0 }  ,fill opacity=1 ][line width=3.75]  (1188.9,128.65) .. controls (1188.9,127.94) and (1189.48,127.37) .. (1190.18,127.37) .. controls (1190.89,127.37) and (1191.46,127.94) .. (1191.46,128.65) .. controls (1191.46,129.35) and (1190.89,129.92) .. (1190.18,129.92) .. controls (1189.48,129.92) and (1188.9,129.35) .. (1188.9,128.65) -- cycle ;
\draw  [color={rgb, 255:red, 245; green, 166; blue, 35 }  ,draw opacity=1 ][fill={rgb, 255:red, 245; green, 166; blue, 35 }  ,fill opacity=1 ][line width=3.75]  (1212.78,147) .. controls (1212.78,146.29) and (1213.35,145.72) .. (1214.05,145.72) .. controls (1214.76,145.72) and (1215.33,146.29) .. (1215.33,147) .. controls (1215.33,147.71) and (1214.76,148.28) .. (1214.05,148.28) .. controls (1213.35,148.28) and (1212.78,147.71) .. (1212.78,147) -- cycle ;
\draw [line width=1.5]  [dash pattern={on 5.63pt off 4.5pt}]  (1214.05,116.7) -- (1214.05,143) ;
\draw [shift={(1214.05,147)}, rotate = 270] [fill={rgb, 255:red, 0; green, 0; blue, 0 }  ][line width=0.08]  [draw opacity=0] (8.13,-3.9) -- (0,0) -- (8.13,3.9) -- cycle    ;
\draw [shift={(1214.05,112.7)}, rotate = 90] [fill={rgb, 255:red, 0; green, 0; blue, 0 }  ][line width=0.08]  [draw opacity=0] (8.13,-3.9) -- (0,0) -- (8.13,3.9) -- cycle    ;

\draw (1110.06,229.7) node [anchor=north west][inner sep=0.75pt]  [font=\LARGE,color={rgb, 255:red, 245; green, 166; blue, 35 }  ,opacity=1 ]  {$x_{h+1}^{o} =\dynVec^{o}\left( x_{h}^{o} ,\pi _{h}^{b}\left( x_{h}^{o}\right)\right)$};
\draw (964.06,101.1) node [anchor=north west][inner sep=0.75pt]  [font=\LARGE,color={rgb, 255:red, 0; green, 0; blue, 0 }  ,opacity=1 ]  {$x_{h+1}^{s} =\dynVec^{s}\left( x_{h}^{s} ,\pi _{h}^{b}\left( x_{h}^{s}\right)\right)$};
\draw (1226.42,85.0) node [anchor=north west][inner sep=0.75pt]  [font=\LARGE]  {$\frac{\epsilon}{2}$};

\end{tikzpicture}} 
    \caption{\looseness -1 Closeness of trajectories of $\dynVec^s \in \dynSet[\nfin]$ under $\pi^b$} 
    \label{fig:dyn_traj_ball_eps} 
\end{subfigure}
\caption{ 
\looseness -1 
Illustrations of the ingredients required in the proof of \cref{obj:maximum_exploration}. As shown in \cref{fig:dyn_opti_in_pessi}, we pick an optimistically safe policy $\pi^b$ arbitrarily close to the boundary of the pessimistic policy set in a proof by contradiction. As highlighted in \cref{fig:dyn_traj_ball_eps}, we show that applying the policy $\pi^b$ will keep all the $\dynVec^s \in \dynSet[\nfin]$ in a small ball of radius $\epsguarantee/2$ around the optimistic trajectory.} 
\end{figure}

The following lemma relates the termination criterion in \cref{alg:full_domain_exploration_basic} to the uncertainty in the optimistic constraint set to prove exploration of the optimistic set $\optiSetwe[,\frac{\epsguarantee}{2}]{c,\nfin}(\state_s;\Horizon + \delta \h)$.
\begin{lemma}\label{lem:uniform_exploration}
Let \cref{assump:q_RKHS,assump:safeSet,assump:lipschitz,assump:sublinear} hold. 
Then, \cref{alg:full_domain_exploration_basic} terminates in $\nfin$ iteration and the current state $\state(\ki)=\state_s\in\safeInit{\nfin}$ satisfies 
$\optiSetwe[,\frac{\epsguarantee}{2}]{c,\nfin}(\state_s;\Horizon+\delta \h) \subseteq \pessiSet[]{\nfin}(\state_s;\Horizon + \delta \h)$. 
\end{lemma}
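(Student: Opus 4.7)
The plan is to argue by contradiction, exploiting Lemma~\ref{lem:contra_eps} together with the connectedness of $\optiSetwe[,\frac{\epsguarantee}{2}]{c,\nfin}$ to $\pessiSet[]{\nfin}$. First I would invoke Lemma~\ref{lem:sample_complexity} to record that once \cref{alg:full_domain_exploration_basic} has terminated, $\cwidth[\nfin,i](\state_h,\coninput_h)<\epsterminal$ along every trajectory generated by a pessimistic policy $\pi\in\pessiSet[]{\nfin}(\state_s;\totHorizon)$ and any $\dynVec\in\dynSet[\nfin]$. Using invariance in \cref{assump:safeSet}, any $\pi\in\pessiSet[]{\nfin}(\state_s;\Horizon+\delta\h)$ can be padded with the terminal policy $\pi_\mathrm{f}$ to form an element of $\pessiSet[]{\nfin}(\state_s;\totHorizon)$, so the same width bound transfers to the shorter planning horizon $\Horizon+\delta\h$, providing the hypothesis needed to apply Lemma~\ref{lem:contra_eps} at $\state_s$.

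Next, I would suppose for contradiction that there exists $\pi^b\in\optiSetwe[,\frac{\epsguarantee}{2}]{c,\nfin}(\state_s;\Horizon+\delta\h)\setminus \pessiSet[]{\nfin}(\state_s;\Horizon+\delta\h)$. The connectedness assumption yields a continuous curve $\rho:[0,1]\to\optiSetwe[,\frac{\epsguarantee}{2}]{c,\nfin}(\state_s;\Horizon+\delta\h)$ with $\rho(0)=\pi^b$ and $\rho(1)\in\pessiSet[]{\nfin}(\state_s;\Horizon+\delta\h)$. Set $t^\star\coloneqq\inf\{t\in[0,1]:\rho(t)\in\pessiSet[]{\nfin}\}$; by the definition of infimum I can pick $\pi^\star=\rho(t_1)$ with $t_1<t^\star$ lying outside $\pessiSet[]{\nfin}$, together with $\pessiPolicy=\rho(t_2)$ satisfying $t_2>t^\star$ and $\rho(t_2)\in\pessiSet[]{\nfin}$, with $t_2-t_1$ arbitrarily small. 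Continuity of $\rho$ makes $\|\pi^\star(x)-\pessiPolicy(x)\|$ uniformly at most some prescribed $\epsclose>0$.

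The heart of the argument is then to apply Lemma~\ref{lem:contra_eps} to the pair $(\pi^\star,\pessiPolicy)$: for every $\dynVec_1,\dynVec_2\in\dynSet[\nfin]$ and every admissible noise sequence, the closed-loop trajectories from $\state_s$ under the common policy $\pi^\star$ satisfy $\|\state_h^s-\state_h^o\|\leq\epsguarantee/2$ and $\|\coninput_h^s-\coninput_h^o\|\leq\epsguarantee/2$ for all $h\in\Intrange{0}{\Horizon+\delta\h-1}$. Since $\pi^\star\in\optiSetwe[,\frac{\epsguarantee}{2}]{c,\nfin}(\state_s;\Horizon+\delta\h)$, there is in particular some $\dynVec_2\in\dynSet[\nfin]$ whose trajectory obeys $\state_h^o\in\X\ominus\ball{\epsguarantee/2}$, $\coninput_h^o\in\inputSpace\ominus\ball{\epsguarantee/2}$ and $\state_{\Horizon+\delta\h}^o\in\safeInit{\nfin}\ominus\ball{\epsguarantee/2}$; combining this with the $\epsguarantee/2$ Lipschitz gap yields $\state_h^s\in\X$, $\coninput_h^s\in\inputSpace$ and $\state_{\Horizon+\delta\h}^s\in\safeInit{\nfin}$ for \emph{every} $\dynVec_1\in\dynSet[\nfin]$. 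By the definition \eqref{eq:def_pessi_set} this is precisely $\pi^\star\in\pessiSet[]{\nfin}(\state_s;\Horizon+\delta\h)$, contradicting $t_1<t^\star$ and finishing the proof.

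The main obstacle I anticipate is making the ``boundary crossing'' step rigorous: Lemma~\ref{lem:contra_eps} supplies its $\epsguarantee/2$ bound only for $\epsclose$ sufficiently small relative to the constants in \eqref{eq:JK_espilon_formula}, so the continuity of $\rho$ must be exploited to drive $\|\pi^\star-\pessiPolicy\|$ below that threshold while simultaneously keeping $\pi^\star$ outside and $\pessiPolicy$ inside $\pessiSet[]{\nfin}$. A second, minor care point is to transfer the $\epsterminal$-width bound from horizon $\totHorizon$ to $\Horizon+\delta\h$ cleanly via the invariance/controllability clauses of \cref{assump:safeSet}, so that the hypotheses of Lemma~\ref{lem:contra_eps} are available along the full reduced horizon; after that, the conclusion follows directly from the Minkowski-difference definitions in \eqref{eq:def_pessi_set}--\eqref{eq:def_opti_set}.
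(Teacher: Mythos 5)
Your proposal is correct and follows essentially the same route as the paper: a contradiction argument that uses path-connectedness to select a policy $\pi^b$ in the optimistic set arbitrarily close (in sup norm) to a pessimistic policy, invokes \cref{lem:contra_eps} to confine all trajectories from $\dynSet[\nfin]$ within an $\epsguarantee/2$-ball of the optimistic trajectory, and cancels the $\ball{\epsguarantee/2}$ tightening via the Minkowski operations in \eqref{eq:def_pessi_set}--\eqref{eq:def_opti_set}. Your explicit infimum-based boundary-crossing construction and the horizon-padding step via the invariance clause of \cref{assump:safeSet} are slightly more formal renderings of steps the paper states informally (the latter is carried out inside the proof of \cref{lem:contra_eps}), but they do not change the argument.
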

\begin{proof} 
%
For contradiction, assume $\optiSetwe[,\frac{\epsguarantee}{2}]{c,\nfin}(\state_s;\Horizon + \delta \h) \backslash \pessiSet[]{\nfin}(\state_s;\Horizon + \delta \h)\neq \emptyset$.
Due to path connectedness of $\optiSetwe[,\frac{\epsguarantee}{2}]{c,\nfin}(\state_s;\Horizon + \delta \h)$ (cf. \cref{fig:dyn_exploration_convergence}), $\exists \pi^b \in \optiSetwe[,\frac{\epsguarantee}{2}]{c,\nfin}(\state_s;\Horizon + \delta \h) \backslash \pessiSet[]{\nfin}(\state_s;\Horizon + \delta \h)$ arbitrary close to the boundary of $\pessiSet[]{\nfin}(\state_s;\Horizon + \delta \h)$ (cf. \cref{fig:dyn_opti_in_pessi}). 
Hence,we consider a policy $\pi^b$ such that $\|\pi^b(x) - \pessiPolicy(x)\| \leq \epsclose$ with $\epsclose>0$ arbitrarily small. 

Now, since $\pi^b \in \optiSetwe[,\frac{\epsguarantee}{2}]{c,\nfin}(\state_s;\Horizon + \delta \h)$, this implies  
$\exists \dynVec^o \in \dynSet[\nfin]: \state^o_\h \in \X \ominus \ball{\frac{\epsguarantee}{2}}, \coninput_\h  \in \inputSpace \ominus \ball{\frac{\epsguarantee}{2}}, \forall \h \in \Intrange{0}{\Horizon + \delta \h-1}, \eta_\h \in \W$, where $\state^o_{\h+1} = \dynVec^o(\state^o_\h,\coninput_\h) + \eta_\h, \coninput_\h \coloneqq \pi^{b}_\h(\state^o_\h)$ and  $\state^o_{\Horizon + \delta \h} \in \safeInit{\nfin} \ominus \ball{\frac{\epsguarantee}{2}}.$ 

From~\cref{lem:contra_eps}, 
we know that $\forall \dynVec^s \in \dynSet[\nfin], \eta_\h \in \W, \h \in \Intrange{0}{\Horizon + \delta \h-1}, \state_\h^{s} \in \state_\h^{o} \oplus \ball{\frac{\epsguarantee}{2}}$,  $ \coninput_\h^{s} \in \coninput_\h^{o} \oplus \ball{\frac{\epsguarantee}{2}}$
where $\state^s_{\h+1} = \dynVec^s(\state^s_{\h},\coninput_h^s) + \eta_\h$, $\coninput_h^s=\pi^b(\state^s_{\h})$, $\state^o_{\h+1} = \dynVec^o(\state^o_{\h},\coninput_h^o ) + \eta_\h$, $\coninput_h^o=\pi^b(\state^o_{\h})$ $\state_s = \state^o_0= \state^s_0$;  
(as shown in ~\cref{fig:dyn_traj_ball_eps}) and both dynamics are driven by the same noise sequence $\noise_\h, \h\in\Intrange{0}{\Horizon + \delta \h-1}$.


This implies that $\forall \dynVec^s \in \dynSet[\nfin]: \state^s_\h \in \X \ominus \ball{\frac{\epsguarantee}{2}} \oplus \ball{\frac{\epsguarantee}{2}} \subseteq \X, \forall \h \in \Intrange{0}{\Horizon + \delta \h}, \state^o_{\Horizon + \delta \h}\in\safeInit{\nfin}\ominus \ball{\frac{\epsguarantee}{2}} \oplus \ball{\frac{\epsguarantee}{2}} \subseteq \safeInit{\nfin}$  and $\pi^b_\h(\state^s_{\h}) \in \inputSpace \ominus \ball{\frac{\epsguarantee}{2}} \oplus \ball{\frac{\epsguarantee}{2}} \subseteq \inputSpace$. 

Hence $\pi^b \in \pessiSet[]{\nfin}(\safeInit{\nfin};\Horizon + \delta \h)$. However, this is a contradiction. This implies~$\optiSetwe[,\frac{\epsguarantee}{2}]{c,\nfin}(\state_s;\Horizon + \delta \h) \subseteq \pessiSet[]{\nfin}(\state_s;\Horizon + \delta \h)$.
\end{proof}

The following lemma utilizes controllability to relate the set of safe policies with a free initial condition in the safe set $\safeInit{n}$ to those starting at some fixed initial condition but with a larger horizon $\Horizon + \delta \h$.
\begin{lemma} \label{lem:horizon_diff}
   Let \cref{assump:q_RKHS,assump:safeSet,assump:lipschitz} hold. $\forall  \state' \in  \safeInit{\n}$, $\pi \in \truePolicySet[,\epsguarantee]{c,\n}(\state'; \Horizon), \exists \hat{\pi} \in\Pi_{\delta \h}: 
[\hat{\pi}, \pi ] \in \truePolicySet[,\frac{\epsguarantee}{2}]{c,\n}(\state_s;\Horizon + \delta \h)$ $\forall \n \in \Intp$.
\end{lemma}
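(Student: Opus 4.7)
The plan is to construct $\hat{\pi}$ via the controllability property of \cref{assump:safeSet}, verify it is safe under noise, and then show that appending $\pi$ yields a policy in $\truePolicySet[,\epsguarantee/2]{c,\n}(\state_s;\Horizon+\delta \h)$, where the reduction from $\epsguarantee$ to $\epsguarantee/2$ absorbs all noise-induced deviations. Since $\dyntrueVec\in\dynSet[\n]$ holds with high probability by \cref{lem:beta}, we can invoke controllability with $\dynVec=\dyntrueVec$ to obtain some $\delta \h\in\Intrange{0}{\Delta\Horizon}$ and $\hat{\pi}\in\Pi_{\delta\h}$ steering the noise-free trajectory $\state_\h$ from $\state_s$ to $\state_{\delta\h}=\state'$ with $\state_\h\in\safeInit{\n}\ominus\ball{\epsguarantee}$ and $\hat{\pi}_\h(\state_\h)\in\inputSpace\ominus\ball{\epsguarantee}$.

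Next, I would compare this nominal trajectory to the true noisy trajectory $\tilde\state_\h$ obtained by applying $\hat{\pi}$ starting from $\state_s$ under arbitrary noise $\noise_j\in\W$. By Lipschitz continuity of the closed-loop dynamics (\cref{assump:lipschitz}) and induction, $\|\tilde\state_\h-\state_\h\|\leq \LipDyn_\h\noisebound$ with $\LipDyn_\h=\sum_{j=0}^{\h-1}\LipDyn^j$. Combined with Lipschitz continuity of $\hat{\pi}$, this places $\tilde\state_\h\in\X\ominus\ball{\epsguarantee-\LipDyn_\h\noisebound}$ and $\hat{\pi}_\h(\tilde\state_\h)\in\inputSpace\ominus\ball{\epsguarantee-\LipPi\LipDyn_\h\noisebound}$. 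The lower bound on $\epsguarantee$ from~\eqref{eq:JK_espilon_formula} ensures both terms exceed $\epsguarantee/2$, so the first $\delta\h$ steps satisfy the tighter tolerance.

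At step $\delta\h$ we land at $\tilde\state_{\delta\h}$ with $\|\tilde\state_{\delta\h}-\state'\|\leq \LipDyn_{\delta\h}\noisebound$. From there we apply $\pi$ and compare the resulting trajectory $\tilde\state_{\delta\h+\h}$ to the nominal trajectory $\bar\state^\star_\h$ generated by $\pi$ from $\state'$ under the shifted noise sequence $\{\noise_{\delta\h+j}\}$. Closed-loop Lipschitz continuity gives $\|\tilde\state_{\delta\h+\h}-\bar\state^\star_\h\|\leq \LipDyn^\h\LipDyn_{\delta\h}\noisebound$. Because $\pi\in\truePolicySet[,\epsguarantee]{c,\n}(\state';\Horizon)$ ensures $\bar\state^\star_\h\in\X\ominus\ball{\epsguarantee}$, $\pi_\h(\bar\state^\star_\h)\in\inputSpace\ominus\ball{\epsguarantee}$ for all $\h\in\Intrange{0}{\Horizon-1}$, and $\bar\state^\star_\Horizon\in\safeInit{\n}\ominus\ball{\epsguarantee}$, the triangle inequality places $\tilde\state_{\delta\h+\h}$ and the corresponding inputs in the respective sets tightened by $\epsguarantee-\max\{1,\LipPi\}\LipDyn^\h\LipDyn_{\delta\h}\noisebound$. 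Again,~\eqref{eq:JK_espilon_formula} guarantees this exceeds $\epsguarantee/2$ uniformly over $\h\in\Intrange{0}{\Horizon}$. Since this holds for every admissible noise realization, $[\hat{\pi},\pi]$ satisfies all constraints of $\truePolicySet[,\epsguarantee/2]{\n}(\state_s;\Horizon+\delta\h)$.

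Finally, I would handle the connectedness requirement to conclude $[\hat{\pi},\pi]\in\truePolicySet[,\epsguarantee/2]{c,\n}$: since $\pi\in\truePolicySet[]{c,\n}(\state';\Horizon)$ there exists a continuous curve $\rho:[0,1]\to\truePolicySet[]{c,\n}(\state';\Horizon)$ with $\rho(0)=\pi$ and $\rho(1)\in\pessiSet[]{\n}(\state';\Horizon)$; concatenating the fixed prefix $\hat{\pi}$ yields a continuous curve $\tilde\rho(t)\coloneqq[\hat{\pi},\rho(t)]$ whose endpoint $[\hat{\pi},\rho(1)]\in\pessiSet[]{\n}(\state_s;\Horizon+\delta\h)$, because $\hat{\pi}$ drives any dynamics in $\dynSet[\n]$ into $\safeInit{\n}$ (up to noise), after which $\rho(1)$ acts pessimistically. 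The main obstacle is the bookkeeping of tolerances in steps two and three: one has to simultaneously control the deviation from the noise-free steering trajectory during $[0,\delta\h]$ and the deviation from the reference trajectory of $\pi$ during $[\delta\h,\delta\h+\Horizon]$, and show both are absorbed by the $\epsguarantee/2$ slack under the quantitative choice of $\epsguarantee$ already fixed in~\eqref{eq:JK_espilon_formula}.
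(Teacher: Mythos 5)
Your proof follows essentially the same route as the paper's: controllability (\cref{assump:safeSet}) supplies $\hat{\pi}$, the noise-induced deviation of the landing state is bounded by $\LipDyn_{\delta\h}\noisebound$, this is propagated through the closed-loop Lipschitz constant of $\pi$, and the result is absorbed into the $\epsguarantee/2$ slack; your extra checks on constraint satisfaction during the steering phase and on the connectedness condition only make explicit what the paper leaves implicit. The one slip is the reference for the lower bound on $\epsguarantee$: what is needed here is \eqref{eq:JK_bound_epsilon_2}, i.e.\ $\epsguarantee \geq \max\{1,\LipPi\}\,2\max\{\LipDyn^{\Horizon},1\}\,\LipDyn_{\delta\h}\noisebound$, not \eqref{eq:JK_espilon_formula}, which is the separate condition arising in \cref{lem:contra_eps}.
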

\begin{proof}
Pick $\pi\in\truePolicySet[,\epsguarantee]{c,\n}(\state';\Horizon)$ with initial state $\state'=\state^\star_0 \in \safeInit{\n}$. 
Using the controllability property in \cref{assump:safeSet}, $\exists \hat{\pi} \in\Pi_{\delta \h}$ such that $\dyntrueVec$ can be controlled from $\state'$ to $\state_0^\star \in \safeInit{\n}$ in steps $\delta \h$, without accounting for noise $\eta$. 
We will prove that $[\hat{\pi}, \pi ] \in \truePolicySet[,\frac{\epsguarantee}{2}]{c,\n}(\state_s;\Horizon + \delta \h)$, i.e., the trajectory resulting from applying the appended policy $[\hat{\pi}, \pi ]$ satisfies the constraints. 
\cref{fig:dyn_init_set_appending} illustrates the two involved trajectories. 
%
\begin{figure}
    \centering
    \scalebox{0.7}{\tikzset{every picture/.style={line width=0.75pt}} 

\begin{tikzpicture}[x=0.75pt,y=0.75pt,yscale=-1,xscale=1]

\draw  [fill={rgb, 255:red, 108; green, 215; blue, 108 }  ,fill opacity=0.6 ] (149.18,101.26) .. controls (216.18,113.16) and (281.18,97.85) .. (352.18,106.35) .. controls (423.18,114.85) and (418.18,202.85) .. (419.18,275.35) .. controls (420.18,347.85) and (236.58,307.55) .. (220.18,312.35) .. controls (203.78,317.15) and (112.78,292.95) .. (105.68,267.35) .. controls (98.58,241.75) and (82.18,89.35) .. (149.18,101.26) -- cycle ;
\draw  [fill={rgb, 255:red, 128; green, 128; blue, 128 }  ,fill opacity=0.3 ] (72,129.72) .. controls (72,103.28) and (93.43,81.85) .. (119.87,81.85) -- (387.31,81.85) .. controls (413.75,81.85) and (435.18,103.28) .. (435.18,129.72) -- (435.18,273.33) .. controls (435.18,299.77) and (413.75,321.2) .. (387.31,321.2) -- (119.87,321.2) .. controls (93.43,321.2) and (72,299.77) .. (72,273.33) -- cycle ;
\draw  [fill={rgb, 255:red, 108; green, 215; blue, 108 }  ,fill opacity=0.6 ] (152,111) .. controls (212.5,126) and (271.68,104.35) .. (342.68,112.85) .. controls (413.68,121.35) and (409.68,195.85) .. (410.68,268.35) .. controls (411.68,340.85) and (230.4,298.2) .. (214,303) .. controls (197.6,307.8) and (124.6,286.6) .. (117.5,261) .. controls (110.4,235.4) and (91.5,96) .. (152,111) -- cycle ;
\draw  [fill={rgb, 255:red, 108; green, 215; blue, 108 }  ,fill opacity=0.6 ] (164.68,118.26) .. controls (225.18,133.26) and (264.01,117.6) .. (333.68,122.85) .. controls (403.35,128.09) and (393.68,193.35) .. (397.18,260.35) .. controls (400.68,327.35) and (246.68,289.26) .. (220.68,294.26) .. controls (194.68,299.26) and (130.18,278.76) .. (121.68,237.26) .. controls (113.18,195.76) and (104.18,103.26) .. (164.68,118.26) -- cycle ;
\draw  [fill={rgb, 255:red, 80; green, 227; blue, 194 }  ,fill opacity=0.67 ] (152.33,255.57) .. controls (138.06,225.15) and (144.88,188.23) .. (167.56,173.1) .. controls (190.24,157.96) and (220.18,170.35) .. (234.45,200.76) .. controls (248.72,231.17) and (241.9,268.09) .. (219.22,283.23) .. controls (196.54,298.36) and (166.6,285.98) .. (152.33,255.57) -- cycle ;
\draw  [line width=3.75]  (211.18,261.66) .. controls (211.18,260.45) and (212.16,259.48) .. (213.36,259.48) .. controls (214.57,259.48) and (215.54,260.45) .. (215.54,261.66) .. controls (215.54,262.86) and (214.57,263.84) .. (213.36,263.84) .. controls (212.16,263.84) and (211.18,262.86) .. (211.18,261.66) -- cycle ;
\draw [line width=1.5]    (203.18,270.76) .. controls (253.18,303.26) and (301.68,312.76) .. (335.68,298.76) .. controls (369.68,284.76) and (380.46,202.27) .. (348.18,172.76) .. controls (316.38,143.68) and (260.39,127.74) .. (206.63,192.73) ;
\draw [shift={(204.18,195.76)}, rotate = 308.51] [fill={rgb, 255:red, 0; green, 0; blue, 0 }  ][line width=0.08]  [draw opacity=0] (11.61,-5.58) -- (0,0) -- (11.61,5.58) -- cycle    ;
\draw    (171.78,238.5) .. controls (173.77,237.23) and (175.41,237.57) .. (176.71,239.51) .. controls (178.24,241.39) and (179.89,241.53) .. (181.66,239.93) .. controls (183.51,238.3) and (185.16,238.48) .. (186.62,240.48) .. controls (187.71,242.59) and (189.28,243.13) .. (191.34,242.12) .. controls (193.6,241.29) and (195.09,242.05) .. (195.82,244.42) .. controls (195.84,246.63) and (196.95,247.86) .. (199.15,248.1) .. controls (201.36,249.09) and (201.85,250.68) .. (200.64,252.85) .. controls (199.17,254.61) and (199.33,256.25) .. (201.11,257.78) -- (201.16,258.72) -- (201.62,266.61) ;
\draw [shift={(201.9,269.48)}, rotate = 263.55] [fill={rgb, 255:red, 0; green, 0; blue, 0 }  ][line width=0.08]  [draw opacity=0] (8.93,-4.29) -- (0,0) -- (8.93,4.29) -- cycle    ;
\draw  [line width=3.75]  (201.75,195.76) .. controls (201.75,194.41) and (202.84,193.33) .. (204.18,193.33) .. controls (205.52,193.33) and (206.61,194.41) .. (206.61,195.76) .. controls (206.61,197.1) and (205.52,198.18) .. (204.18,198.18) .. controls (202.84,198.18) and (201.75,197.1) .. (201.75,195.76) -- cycle ;
\draw [line width=1.5]  [dash pattern={on 5.63pt off 4.5pt}]  (212.18,260.76) .. controls (255.43,300.76) and (351.68,330.76) .. (367.18,230.26) .. controls (382.37,131.77) and (240.36,118.77) .. (207.38,174.76) ;
\draw [shift={(205.5,178.28)}, rotate = 295.8] [fill={rgb, 255:red, 0; green, 0; blue, 0 }  ][line width=0.08]  [draw opacity=0] (11.61,-5.58) -- (0,0) -- (11.61,5.58) -- cycle    ;
\draw   (194.67,260.76) .. controls (194.67,250.38) and (203.08,241.96) .. (213.46,241.96) .. controls (223.84,241.96) and (232.25,250.38) .. (232.25,260.76) .. controls (232.25,271.13) and (223.84,279.55) .. (213.46,279.55) .. controls (203.08,279.55) and (194.67,271.13) .. (194.67,260.76) -- cycle ;
\draw  [line width=3.75]  (199.47,269.48) .. controls (199.47,268.14) and (200.56,267.05) .. (201.9,267.05) .. controls (203.24,267.05) and (204.33,268.14) .. (204.33,269.48) .. controls (204.33,270.82) and (203.24,271.9) .. (201.9,271.9) .. controls (200.56,271.9) and (199.47,270.82) .. (199.47,269.48) -- cycle ;
\draw  [line width=3.75]  (203.75,177.76) .. controls (203.75,176.41) and (204.84,175.33) .. (206.18,175.33) .. controls (207.52,175.33) and (208.61,176.41) .. (208.61,177.76) .. controls (208.61,179.1) and (207.52,180.18) .. (206.18,180.18) .. controls (204.84,180.18) and (203.75,179.1) .. (203.75,177.76) -- cycle ;
\draw  [line width=3.75]  (167.75,237.76) .. controls (167.75,236.41) and (168.84,235.33) .. (170.18,235.33) .. controls (171.52,235.33) and (172.61,236.41) .. (172.61,237.76) .. controls (172.61,239.1) and (171.52,240.18) .. (170.18,240.18) .. controls (168.84,240.18) and (167.75,239.1) .. (167.75,237.76) -- cycle ;
\draw [line width=1.5]    (389.58,288.25) -- (403.63,299.25) ;
\draw [shift={(406.78,301.72)}, rotate = 218.07] [fill={rgb, 255:red, 0; green, 0; blue, 0 }  ][line width=0.08]  [draw opacity=0] (6.97,-3.35) -- (0,0) -- (6.97,3.35) -- cycle    ;
\draw [shift={(386.43,285.78)}, rotate = 38.07] [fill={rgb, 255:red, 0; green, 0; blue, 0 }  ][line width=0.08]  [draw opacity=0] (6.97,-3.35) -- (0,0) -- (6.97,3.35) -- cycle    ;
\draw [line width=1.5]    (386.68,305.45) -- (390.26,308.5) ;
\draw [shift={(393.31,311.09)}, rotate = 220.41] [fill={rgb, 255:red, 0; green, 0; blue, 0 }  ][line width=0.08]  [draw opacity=0] (4.64,-2.23) -- (0,0) -- (4.64,2.23) -- cycle    ;
\draw [shift={(383.63,302.86)}, rotate = 40.41] [fill={rgb, 255:red, 0; green, 0; blue, 0 }  ][line width=0.08]  [draw opacity=0] (4.64,-2.23) -- (0,0) -- (4.64,2.23) -- cycle    ;

\draw (154.43,195.53) node [anchor=north west][inner sep=0.75pt]  [font=\large]  {$\mathbb{X}_{n}$};
\draw (79.23,96.87) node [anchor=north west][inner sep=0.75pt]  [font=\Huge]  {$\mathcal{X}$};
\draw (152.63,236.83) node [anchor=north west][inner sep=0.75pt]  [font=\large]  {$x_{s}$};
\draw (213.63,246.83) node [anchor=north west][inner sep=0.75pt]  [font=\large]  {$x'$};
\draw (169.83,259.23) node [anchor=north west][inner sep=0.75pt]  [font=\large]  {$x_{0}^{\star ,l}$};
\draw (200.33,200.33) node [anchor=north west][inner sep=0.75pt]  [font=\large]  {$x_{H}^{\star ,l}$};
\draw (176.13,170.33) node [anchor=north west][inner sep=0.75pt]  [font=\large]  {$x_{H}^{\star }$};
\draw (257.73,271.63) node [anchor=north west][inner sep=0.75pt]  [font=\large]  {$\pi $};
\draw (180.93,223.33) node [anchor=north west][inner sep=0.75pt]  [font=\large]  {$\hat{\pi }$};
\draw (396.02,273.23) node [anchor=north west][inner sep=0.75pt]  [font=\LARGE]  {$\epsilon $};
\draw (357.25,305.62) node [anchor=north west][inner sep=0.75pt]  [font=\LARGE,rotate=-359.37]  {$\epsilon /2$};

\end{tikzpicture}}
    \caption{
 Illustration in state space of the trajectories used to relate the policy sets under two different horizons in \cref{lem:horizon_diff}. 
    The policy $\pi \in \truePolicySet[,\epsguarantee]{c,\n}(\safeInit{\n}; \Horizon)$ drives the system from a state $\state'$ to $\state^\star_{\Horizon}\in \safeInit{\n}$, while ensuring that all intermediate states satisfy $\state^\star_\h \in \X \ominus \ball{\epsguarantee}$,  $\h \in \Intrange{0}{\Horizon}$ (shown with the dashed line). 
    We show that the concatenated policy $[\hat{\pi}, \pi ] \in \truePolicySet[,\frac{\epsguarantee}{2}]{c,\n}(\state_s;\Horizon + \delta \h)$ control the true system $\dyntrueVec$ starting at $\state_s$, lands in a ball around $\state'$ (marked with circle) due to noise and then follows $\state^\star_\h, \h \in \Intrange{0}{\Horizon}$ closely. The resulting trajectory $\state^{\star,l}_\h$ satisfies the constraints $\state^{\star,l}_\h \in \X \ominus \ball{\frac{\epsguarantee}{2}}$, $\h \in \Intrange{0}{\Horizon}$. 
    A similar argument applies to the input constraints, though not shown in the state space figure above.}
    \label{fig:dyn_init_set_appending}
\end{figure}
Given \cref{assump:lipschitz}, the closed-loop dynamics of $\dyntrueVec$ with policy $\pi$ is $\LipDyn$-Lipschitz continuous.
Let $\state^{\star,l}_h$ and $\state^\star_h$ denote the state sequences when applying the policy $\pi$ and the same noise sequence $\eta_h$ to dynamics $\dyntrueVec$ with initial conditions $x_0^{\star,l}$ and $x_0^\star$ , respectively.
Due to noise, 
the deviation satisfies $\|\state^\star_0 - \state^{\star,l}_0\| \leq 
\sum_{h=0}^{ \delta h - 1}\LipDyn^h\noisebound$ where $\state^{\star,l}_0$ is the state $\dyntrueVec$ ends after controlling with $\hat{\pi}$. 
Analogous to~\cref{prop:over_under_approx}, it holds that
\begin{align*}
     \|x_{h+1}^\star -x_{h+1}^{\star,l}\| &=    \| \dyntrueVec(\state^\star_{h}, \pi_\h(\state^{\star}_{h})) - \dyntrueVec(\state^{\star,l}_{\h}, \pi_\h(\state^{\star,l}_{\h}))\|
     \leq \LipDyn\|\state^{\star}_{\h}-\state^{\star,l}_{\h}\|
     \leq \LipDyn^h\|\state^\star_0-\state^{\star,l}_0\|\\
     & \leq \max_{h\in\Intrange{0}{H}} \{\LipDyn^h\} \sum_{i=0}^{ \delta \h - 1}\LipDyn^i \noisebound \leq \epsguarantee/2,
\end{align*}
and 
\begin{align*}
   \|\pi_h(\state^\star_\h) - \pi_h(\state^{\star,l}_\h) \|\leq \LipPi \| \state^\star_\h - \state^{\star,l}_\h \| \leq \epsguarantee/2 \quad  \forall \h \in \Intrange{0}{\Horizon-1}.
\end{align*}
with 
\begin{align}
     \label{eq:JK_bound_epsilon_2}
     \epsilon \geq \max\{1, L_\pi\} 2 \max\{\LipDyn^\Horizon,1\} \LipDyn_{\delta 
     \h}\noisebound.
\end{align}
Thus, $[\hat{\pi}, \pi] \in \truePolicySet[,\frac{\epsguarantee}{2}]{c,\n}(\state_s;\Horizon + \delta \h)$. This completes the proof.\end{proof}
Given, these lemmas, we are ready to prove our main result.
\restatedynexploration*
\begin{proof}
Safety is ensured since \cref{alg:full_domain_exploration_basic} applies the pessimistic policy $\pessiPolicy \in \pessiSet[]{\n}(\state(\ki);\totHorizon), \forall \n \geq 0$ which, by definition, ensures constraint satisfaction $\forall \dynVec \in \dynSet[\n]$. By \cref{assump:q_RKHS} and \cref{lem:beta}, the unknown system satisfies $\dyntrueVec \in \dynSet[\n]$, thereby guaranteeing constraint satisfaction for the unknown system~\eqref{eq:system_dyn}, with at least probability $1-\delta$.


Initially at $\n=0$, \cref{assump:safeSet} (control invariance) ensures that $\pessiSet[]{0}(\state_s;\totHorizon)$ is non-empty for any $x_s\in \safeInit{0}$. 
\cref{prop:suffi_info_epscollect} shows that the sampling rule~\eqref{eq:sampling_rule_timex} ensures collected measurements has at least one that satisfy $w_{n,i}(z(k))\geq \epsilon_c$. 
Now by \cref{lem:sample_complexity}, we know that $\exists \nfin \leq \n^\star, \state(\ki) \in \safeInit{\nfin} : \forall i \in \Intrange{1}{\statedim}, \h \in \Intrange{0}{\totHorizon-1}, \dynVec \in \dynSet[\n], \pi \in \pessiSet[]{\nfin}(\state_s, \totHorizon)$, $\cwidth[\nfin, i](\state_\h, \coninput[\h]) < \epsterminal$ where $\state_{\h+1} = \dynVec(\state_\h,\coninput_\h), \coninput_\h = \pi_\h(\state_\h)$. This implies that sampling rule~\eqref{eq:sampling_rule_timex} will become infeasible in $\nfin\leq \n^\star$ model updates, which implies that \cref{alg:full_domain_exploration_basic} will terminate in $\nfin$ iterations.

Finally, \cref{lem:uniform_exploration} shows that for the final state $\state(\ki)= \state_s \in \safeInit{\n} $
we have $\optiSetwe[,\frac{\epsguarantee}{2}]{c,\nfin}(\state_s;\Horizon + \delta \h) \subseteq \pessiSet[]{\nfin}(\state_s;\Horizon + \delta \h)$ which using  \cref{prop:over_under_approx} implies $\truePolicySet[,\frac{\epsguarantee}{2}]{c,\nfin}(\state_s;\Horizon + \delta \h) \subseteq \optiSetwe[,\frac{\epsguarantee}{2}]{c,\nfin}(\state_s;\Horizon + \delta \h) \subseteq \pessiSet[]{\nfin}(\state_s;\Horizon + \delta \h)$.


Thus, \cref{lem:horizon_diff} ensures that~\cref{obj:maximum_exploration} holds:\\ 
$\pi^\star\in\truePolicySet[,\epsguarantee]{c,\nfin}(\safeInit{\nfin},\Horizon)$ implies $\exists \hat{\pi} \in\Pi_{\delta \h}: [\hat{\pi},\pi^\star]\in \truePolicySet[,\frac{\epsguarantee}{2}]{c,\nfin}(\state_s,\Horizon + \delta \h) \subseteq \pessiSet[]{\nfin}(\state_s,\Horizon + \delta\h)$. 
\end{proof}

The following lemma characterizes the mutual information, which is utilized for the sample complexity bound in Lemma~\ref{lem:sample_complexity}.
\begin{lemma}[{\citep[Lemma 5]{prajapat2022near}}]\label{lem:mutual_information} 
Let $\entropy(Y_{\D_\n})$ be the Shannon entropy for noisy samples $Y_{\D_\n}$ collected at the set $\D_{\n}$ \citep{beta_srinivas}.
 Then, the mutual information $I(Y_{\D_\nfin};\dyntrueVec_{\D_\nfin}) \coloneqq  \entropy(Y_{\D_\nfin}) - \entropy(Y_{\D_\nfin}| \dyntrueVec_{\D_\nfin})$ is given by
\begin{align*}
    I(Y_{\D_\nfin};\dyntrueVec_{\D_\nfin})= \sum_{\n=0}^{\nfin} \sum_{i=1}^{\statedim} 
    \sum_{\numdataPts=1}^{\numdataPts_{\n}}
    \frac{1}{2} \log(1+\noiseconst \lambda_{\numdataPts, \n, i})) 
\end{align*}
\end{lemma}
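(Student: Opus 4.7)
\textbf{Proof plan for \cref{lem:mutual_information}.} The plan is to compute the mutual information in three reductions: decompose across the $\statedim$ GP components, then apply the chain rule across the $\nfin$ iterative model updates, and finally reduce each per-iteration conditional mutual information to a log-determinant of a posterior kernel Gram matrix, which is rewritten in terms of eigenvalues.

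First, I would invoke the modeling assumption that the components $\dyntrue_i$, $i\in\Intrange{1}{\statedim}$, are modeled as independent GPs with independent Gaussian observation noise.
Under independence, the Shannon entropies split additively over components, giving
\begin{align*}
I(Y_{\D_\nfin};\dyntrueVec_{\D_\nfin}) = \sum_{i=1}^{\statedim} I(Y^i_{\D_\nfin};\dyntrue_{i,\D_\nfin}),
\end{align*}
where $Y^i_{\D_\nfin}$ is the vector of scalar measurements for the $i$-th component. So it suffices to fix a component $i$ and prove that the corresponding mutual information equals $\sum_{\n=0}^{\nfin}\sum_{\numdataPts=1}^{\numdataPts_\n} \tfrac{1}{2}\log(1+\noiseconst \lambda_{\numdataPts,\n,i})$.

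Next, partition the dataset $\D_\nfin$ into the batches $\D_1,\dots,\D_\nfin$ of sizes $\numdataPts_1,\dots,\numdataPts_\nfin$ collected between successive model updates (plus the prior data $\D_0$). Applying the chain rule for mutual information yields
\begin{align*}
I(Y^i_{\D_\nfin};\dyntrue_{i,\D_\nfin}) = \sum_{\n=0}^{\nfin} I\!\left(Y^i_{\D_\n};\dyntrue_{i,\D_\n}\,\big|\, Y^i_{\D_{0:\n-1}}\right).
\end{align*}
For each term, I would use the standard fact that for a Gaussian prior with additive independent Gaussian noise of variance $\sigma^2$, conditioning on prior observations yields again a Gaussian posterior whose covariance at the new inputs is the posterior kernel Gram matrix $K^i_{\numdataPts_\n}\in\R^{\numdataPts_\n\times \numdataPts_\n}$ as defined in~\eqref{eq:kernel_posterior_update}. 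For a Gaussian $f\sim\N(0,K)$ with observations $y = f+\eta$, $\eta\sim\N(0,\sigma^2 I)$, the mutual information has the closed form $\tfrac{1}{2}\log\det(I + \sigma^{-2} K)$ (see \cite{beta_srinivas}). Hence
\begin{align*}
I\!\left(Y^i_{\D_\n};\dyntrue_{i,\D_\n}\,\big|\, Y^i_{\D_{0:\n-1}}\right) = \tfrac{1}{2}\log\det\!\bigl(I_{\numdataPts_\n}+\noiseconst K^i_{\numdataPts_\n,\numdataPts_\n}\bigr).
\end{align*}

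Finally, diagonalizing $K^i_{\numdataPts_\n,\numdataPts_\n}$ gives eigenvalues $\{\lambda_{\numdataPts,\n,i}\}_{\numdataPts=1}^{\numdataPts_\n}$, and $I+\noiseconst K^i_{\numdataPts_\n,\numdataPts_\n}$ shares the same eigenbasis with eigenvalues $1+\noiseconst\lambda_{\numdataPts,\n,i}$, so $\log\det(I+\noiseconst K^i_{\numdataPts_\n,\numdataPts_\n}) = \sum_{\numdataPts=1}^{\numdataPts_\n}\log(1+\noiseconst\lambda_{\numdataPts,\n,i})$. Summing over $\n$ and $i$ yields the claim. The only mild obstacle is bookkeeping the conditioning: one must verify that the posterior kernel Gram matrix $K^i_{\numdataPts_\n,\numdataPts_\n}$ appearing in the eigenvalue sum is exactly the one obtained after incorporating all prior batches $\D_0,\dots,\D_{\n-1}$, which follows directly from the recursive GP update formula~\eqref{eq:kernel_posterior_update}.
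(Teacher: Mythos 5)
Your proposal is correct and follows essentially the same route as the paper's proof: decompose over the $\statedim$ independent GP components, apply the chain rule across the $\nfin$ update batches, identify each per-batch term with $\tfrac{1}{2}\log\det(I+\noiseconst K^i_{\numdataPts_\n,\numdataPts_\n})$ for the posterior Gram matrix from~\eqref{eq:kernel_posterior_update}, and finish by eigendecomposition. The only (cosmetic) difference is that you apply the chain rule to the mutual information directly, so the noise-entropy terms cancel implicitly, whereas the paper expands $\entropy(Y_{\D_\nfin})$ recursively and subtracts $\entropy(Y_{\D_\nfin}\mid\dyntrueVec_{\D_\nfin})=\sum_{\n}\sum_i\frac{\numdataPts_\n}{2}\log(2\pi e\sigma^2)$ explicitly at the end.
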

\begin{proof}
The data set $\D_\n = \{d_{0}, \hdots, \numdataPts_{\n}\}$
\begin{align} 
    \entropy(Y_{\D_\n}) &= \entropy(Y_{\numdataPts_{\n}} | Y_{\D_{\n-1}}) + \entropy(Y_{\D_{\n-1}}) \\
    &= \sum_{i=1}^{\statedim} \left( \frac{1}{2} \log (\det (2 \pi e (\sigma^2 I  + K^i_{\numdataPts_{\n},\numdataPts_{\n}}) ) ) \right) + \entropy(Y_{d_{\n-1}}|Y_{\D_{\n-1}}) + \hdots \label{eqn:MV-Gaussian}\\
    &= \sum_{i=1}^{\statedim} \left( \frac{\numdataPts_{\n}}{2} \log (2 \pi e \sigma^2) + \frac{1}{2} \log (\det (I  + \sigma^{-2} K^i_{\numdataPts_{\n},\numdataPts_{\n}}) ) ) \right) + \entropy(Y_{d_{\n-1}}|Y_{\D_{\n-1}}) + \hdots \label{eqn:refactoring-det}\\
    &= \sum_{\n=0}^{\nfin} \sum_{i=1}^{\statedim} \frac{\numdataPts_{\n}}{2} \log (2 \pi e \sigma^2) + \frac{1}{2} \log (\det (I  + \sigma^{-2} K^i_{\numdataPts_{\n},\numdataPts_{\n}}) ) ) \label{eqn:recursion}
\end{align}
For~\eqref{eqn:MV-Gaussian}, we used that each component $i$ is independently sampled as $Y_{\numdataPts_{\n}} \sim \mathcal{N}(Z_{\numdataPts_{\n}}, \sigma^2 I  + K^i_{\numdataPts_{\n},\numdataPts_{\n}})$ is jointly a multivariate Gaussian. 
\cref{eqn:refactoring-det} follows by 
\begin{align*}
\log \left(\det \left(2 \pi e \left(\sigma^2 I  + K^i_{\numdataPts_{\n},\numdataPts_{\n}}\right)\right)\right) &=  \log\left({\left(2\pi e\sigma^2\right)}^{\numdataPts_{\n}}\det \left(\sigma^2 I  + K^i_{\numdataPts_{\n},\numdataPts_{\n}}\right)\right) \\
&= {\numdataPts_{\n}}\log{\left(2\pi e\sigma^2\right)}+\log\left(\det \left(\sigma^2 I  + K^i_{\numdataPts_{\n},\numdataPts_{\n}}\right)\right).
\end{align*}
Finally, \cref{eqn:recursion} follows by repeating the above 2 steps recursively until $\n=0$.
$\entropy(Y_{\D_\n}| \dyntrueVec_{\D_\n}) = \sum_{\n=0}^{\nfin} \sum_{i=1}^{\statedim} \frac{\numdataPts_{\n}}{2} \log (2 \pi e \sigma^2)$ is the entropy due to noise. On substituting this together with \cref{eqn:recursion} in the mutual information definition, we obtain
\begin{align*}
I(Y_{\D_\nfin};\dyntrueVec_{\D_\nfin}) &= \frac{1}{2} \sum_{\n=0}^{\nfin} \sum_{i=1}^{\statedim}  \log (\det (I  + \sigma^{-2} K^i_{\numdataPts_{\n},\numdataPts_{\n}}) )    \\
&\labelrel={step:Log_mat_inequality}  \frac{1}{2} \sum_{\n=0}^{\nfin} \sum_{i=1}^{\statedim} \log(\prod_{d=1}^{\numdataPts_{\n}}( 1 + \sigma^{-2} \lambda_{\numdataPts, \n, i})) \\
&= \frac{1}{2} \sum_{\n=0}^{\nfin} \sum_{i=1}^{\statedim} \sum_{d=1}^{\numdataPts_{\n}} \log( 1 + \sigma^{-2} \lambda_{\numdataPts, \n, i}). \numberthis \label{eqn: macopt-information-gain}
\end{align*}
Step~\eqref{step:Log_mat_inequality} follows by orthonormal eigen decomposition of the kernel matrix, where $\lambda_{d,n,i}$ are the eigenvalues of $K^i_{d_n,d_n}$.
\end{proof}

\section{Reward maximization with intrinsic exploration}
\label{apxsec:rew_maxim}

This section provides a more detailed explanation of the proposed reward maximization algorithm \dynExplor in~\cref{apxsec:rew_maxim_1}, and then we provide the proof of the theoretical guarantees in~\cref{apxsec:proof_of_rew_maxim}.

\mypar{Choice of tolerance $\epsguarantee,\epspessi,\epsterminal,\epscollect$} 
As in the maximum dynamics exploration setting (\cref{sec:full_expl_theory}), in \cref{thm:rew_maxim}, tolerance $\epsguarantee>0$ can be chosen arbitrarily small, constrained only by the noise magnitude $\noisebound$.
The constants $\epscollect$ and $\epsterminal$ serve analogous roles — $\epscollect>0$ sets the threshold for informative measurement, while $\epsterminal$, used in \cref{alg:rew_maxim} for planning is selected as in~\eqref{eq:JK_formula_epsilon_d}.
The tolerance $\epspessi$ that tightens the constraints for the pessimistic policy set can be chosen satisfying \eqref{eq:epsilon_lower_bound_convergence}, \eqref{eq:eps_dash_formula}, and \eqref{eq:pessi_eps_dash_shrink}.
Lastly, the tolerance $\epsguarantee>0$ for certifying~\cref{obj:reward_maximization} can be chosen according to~\eqref{eq:esp_diff_formula}, and the \cref{obj:reward_maximization} holds with constant $K$ defined as per~\eqref{eq:constantK}. 
Together, these formulas provide a lower bound on the achievable performance with $\epsguarantee>0$ that scales linearly with the noise magnitude $\noisebound$. In the noise-free case, $\noisebound=0$, the tolerance $\epsguarantee>0$ can be chosen arbitrarily small.

\subsection{Detailed explanation on \dynExplor algorithm}
\label{apxsec:rew_maxim_1}
\dynExplor revolves mainly around~\cref{lin:solve_relax_prob}, where at each time step $\ki$, the agent solves Problem~\eqref{eq:slack_there_exists_goal} using it's current state $\state(\ki)$ and a task-driven objective $J^{\mathrm{any}}$. 
The agent executes the resulting policy $\pessiPolicy[e]$, collects new measurements until having sufficient information, and then updates the model. It instantaneously replans (resolves Problem~\eqref{eq:slack_there_exists_goal}) with the updated model at the current state $\state(\ki)$.  
As shown in~\cref{fig:dyn_receding_main}, a key component is that the agent always ensures the existence of a safe return path without needing to execute it explicitly.
Over time, with data, the agent learns and gets closer to the optimal behavior with iterations.
The agent continues to explore until the dynamics is sufficiently well learned to achieve the objective of the task. 
Finally, when the agent is guaranteed to achieve close-to-optimal behavior, it returns and executes the safe close-to-optimal policy. 

The approach does an intrinsic exploration while maximizing a task-driven objective $J^{\mathrm{any}}$. This objective can be heuristically designed, as discussed in ~\cref{rem:Jany}. It can naturally blend exploration and task-oriented behavior, for example, minimizing distance to the optimistic trajectory. 

To determine when the dynamics are sufficiently learned (guaranteed to exhibit close to optimal behavior), we need a termination criterion. There are particularly two challenges for termination criteria: \\
i) should \emph{not} require a uniform reduction in uncertainty (less than $\epsterminal$) throughout\\
ii) should guarantee bounded regret (with returned policy) despite not uniformly knowing the dynamics along the executed trajectory.

To overcome this, we define a termination criterion that involves solving the pessimistic problem~\eqref{eq:pessi_obj} with the following objective,
\begin{align} \label{eq:pessi_objective}
    \Jobj[\mathrm{p}]{\state, \bm{\mu}_{\n}}{\pi} \coloneqq \Jobj[]{\state, \bm{\mu}_\n}{\pi} - \LiprewPi  \sum_{\h=0}^{\Horizon-1}  \sum_{i=0}^{\h} L^{i} \cwidth[\n](\state_{i}, \pi_{i}(\state_{i})),
\end{align}
representing a pessimistic estimate (lower bound) of rewards (c.f. \cref{lem:optimality_with_return_policy}), ensuring that on termination, even for the worst realization (in terms of objective) of the dynamics will achieve close to $\Jobj[\mathrm{p}]{\state, \bm{\mu}_{\n}}{\pi}$ of the true trajectory. With this, we define the termination criteria as follows,
\begin{align}\label{eq:termination_criteria}
\Jobj[\mathrm{p}]{\statePessi, \bm{\mu}_\n}{\pessiPolicy} \geq  \Jobj[]{\state^{\mathrm{o}}, \dynVec^o }{\optiPolicy} - 3 \LiprewPi \sum_{\h=0}^{\Horizon-1} \LipDyn_{\h} \epsterminal,
\end{align}
where $\state^\mathrm{p}$,$\pessiPolicy$ are the optimal solution of the pessimistic problem~\eqref{eq:pessi_obj} and $\state^{\mathrm{o}}, \dynVec^o,\optiPolicy$ are the optimal solution of the optimistic problem~\eqref{eq:opti_obj}.
Notably, while the small uncertainty (smaller than tolerance $\epsterminal$) in the true trajectory after executing the returned policy is a sufficient stopping condition, 
the termination criterion~\eqref{eq:termination_criteria} does not necessarily require this.


One may also employ alternative termination criteria, which may improve computational efficiency by involving fewer optimization problems to be solved. For example, if the model uncertainty along the optimistic trajectory is uniformly below $\epsterminal$, then executing the corresponding optimistic policy is sufficient to ensure that the true system performs close to the optimistic estimate—and therefore close to the optimal behaviour. This termination criterion avoids solving the pessimistic problem ~\eqref{eq:pessi_obj}, but may require more iterations to satisfy the uncertainty condition compared to the termination criterion in \eqref{eq:termination_criteria}. 

While the termination criterion in \eqref{eq:termination_criteria} requires solving both the optimistic and pessimistic problems, these problems do not directly influence the exploration process and are only used for checking termination.
As a result, it need not be evaluated at every iteration 
$\n$; instead, it can be checked periodically or heuristically, depending on the task and available computational budget.

\subsection{Proof of \texorpdfstring{\cref{thm:rew_maxim}}{Lg}}
\label{apxsec:proof_of_rew_maxim}
The following lemma bounds the deviation between the optimistic and the true dynamics.
\begin{lemma}\label{lem:deviation_with_uncertainity}
Let \cref{assump:lipschitz} hold and suppose that $\dyntrueVec\in\dynSet[_n]$. 
Consider any policy $\pi \in \Pi_{\Horizon}$, dynamics $\dynVec\in\dynSet[n]$, initial condition $x_0^\star=x_0^o\in\mathcal{X}$, and the two trajectories $\state^\star_{\h+1} = \dynVec^\star(\state^\star_\h, \pi_\h(\state^\star_\h))$, $\stateOpti_{\h+1} = \dynVec(\stateOpti_\h, \pi_\h(\stateOpti_\h))$. 
Then, for any $\h \in \Intp$ it holds that,
 \begin{align}
     \|\state^{\star}_{\h+1} - \state^{\mathrm{o}}_{\h+1} \| \leq \sum_{i=0}^{\h}\LipDyn^{i} \cwidth[\n-1](\state_{i}, \pi_{i}(\state_{i})) 
 \end{align}
\end{lemma}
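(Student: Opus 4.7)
The plan is a straightforward induction on $h \geq 0$, where the key tool is the triangle inequality combined with two ingredients: (i) closed-loop Lipschitz continuity of the true dynamics (Assumption~\ref{assump:lipschitz}, giving constant $\LipDyn$), and (ii) the fact that both $\dyntrueVec$ and $\dynVec$ lie in $\dynSet[\n]$, so by the definition of the dynamics set~\eqref{eq:dyn_set} they cannot differ by more than $\cwidth[\n-1]$ at any test point (since $\dynSet[\n]\subseteq\dynSet[\n-1]$).

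For the base case $h=0$, since $\state_0^\star=\state_0^o$, the difference reduces to $\|\dyntrueVec(\state_0,\pi_0(\state_0))-\dynVec(\state_0,\pi_0(\state_0))\|\leq \cwidth[\n-1](\state_0,\pi_0(\state_0))$, matching the claim. For the inductive step, I would split via triangle inequality:
\begin{align*}
\|\state^\star_{\h+1}-\stateOpti_{\h+1}\|
&\leq \|\dyntrueVec(\state^\star_\h,\pi_\h(\state^\star_\h))-\dyntrueVec(\stateOpti_\h,\pi_\h(\stateOpti_\h))\| \\
&\quad + \|\dyntrueVec(\stateOpti_\h,\pi_\h(\stateOpti_\h))-\dynVec(\stateOpti_\h,\pi_\h(\stateOpti_\h))\|,
\end{align*}
where the first term is bounded by $\LipDyn\|\state^\star_\h-\stateOpti_\h\|$ via closed-loop Lipschitz continuity, and the second by $\cwidth[\n-1](\stateOpti_\h,\pi_\h(\stateOpti_\h))$ via the GP confidence bound. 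Substituting the inductive hypothesis and reindexing the resulting telescoping sum yields the claimed geometric-series bound.

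The main obstacle is bookkeeping rather than conceptual: one must consistently apply the closed-loop Lipschitz constant $\LipDyn$ (not the open-loop $L_{\mathrm{f}}$) implied by Assumption~\ref{assump:lipschitz}, and carefully track that $\cwidth[\n-1]$ is evaluated along one of the two trajectories (e.g.\ the optimistic one $\stateOpti_i$), which is consistent with how the bound is later used in the pessimistic cost definition~\eqref{eq:pessi_objective} and in \cref{lem:optimality_with_return_policy}. No additional assumptions beyond \cref{assump:q_RKHS,assump:lipschitz} and $\dyntrueVec,\dynVec\in\dynSet[\n]$ are required.
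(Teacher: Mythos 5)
Your proposal is correct and matches the paper's own proof essentially step for step: the same triangle-inequality decomposition into a closed-loop Lipschitz term $\LipDyn\|\state^\star_\h-\stateOpti_\h\|$ plus a model-error term bounded by $\cwidth[\n-1](\stateOpti_\h,\pi_\h(\stateOpti_\h))$ (valid since both dynamics lie in $\dynSet[\n]\subseteq\dynSet[\n-1]$), followed by unrolling the recursion. Your remark that the confidence width is evaluated along the $\stateOpti$ trajectory also correctly resolves the notational ambiguity in the lemma statement, consistent with how the paper uses the bound later.
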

\begin{proof} Consider,
    \begin{align*}
    \|\state^{\star}_{\h+1} - \state_{\h+1} \| 
    &\leq \| \dyntrueVec(\state^{\star}_{\h}, \pi_{\h}(\state^{\star}_{\h})) - \dynVec(\state_{\h}, \pi_{\h}(\state_{\h})) \| \\
    &\leq  \| \dyntrueVec(\state^{\star}_{\h}, \pi_{\h}(\state^{\star}_{\h})) - \dyntrueVec(\state_{\h}, \pi_{\h}(\state_{\h})) \| + \| \dyntrueVec(\state_{\h}, \pi_{\h}(\state_{\h}))  -\dynVec(\state_{\h}, \pi_{\h}(\state_{\h})) \| \\
    &\labelrel\leq{step:epstermbounded} \LipDyn \|\state^{\star}_{\h} - \state_{\h} \| + \cwidth[\n-1](\state_{\h}, \pi_{\h}(\state_{\h}))  \\
    &\vdots \\
    &\leq \sum_{i=0}^{\h} \LipDyn^{i} \cwidth[\n-1](\state_{i}, \pi_{i}(\state_{i})) \label{eq:convergence_x_bound} \numberthis
\end{align*}
Step \eqref{step:epstermbounded} follows since $\| \dyntrueVec(\state_{\h}, \pi_{\h}(\state_{\h})) - \dynVec(\state_{\h}, \pi_{\h}(\state_{\h})) \| \leq \cwidth[\n](\state_{\h}, \pi_{\h}(\state_{\h}))$ and Lipschtiz continuity of the closed-loop system $\dyntrueVec$. The final inequality follows using recursion.
\end{proof}



The following lemma demonstrates that the policy $[\hat{\pi},\pessiPolicy]$ returned by \cref{alg:rew_maxim} ensures the safety of the unknown system.

\begin{lemma}[Safety with returned policy] \label{lem:safety_with_return_policy}
Let~\cref{assump:q_RKHS,assump:safeSet,assump:lipschitz,assump:sublinear,assump:dynamics_tol_terminal} hold. 
Let $\hat{\pi} \in \Pi_{\delta \h}$ control the mean dynamics from $\state(\ki)$ to $\statePessi$ and $\pessiPolicy\in \pessiSet[,\epssafeset]{\n}(\statePessi; \Horizon)$ obtained by solving \eqref{eq:opti_obj}. Then the returned policy $\retPolicy \coloneqq [\hat{\pi}, \pessiPolicy] \in \truePolicySet[]{\n}(\state(\ki); \Horizon + \delta \h)$, that is, the resulting closed-loop system from the returned policy
satisfies constraints and ends in the safe set.
\end{lemma}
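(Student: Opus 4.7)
The plan is to decompose the execution of the returned policy $\retPolicy = [\hat{\pi}, \pessiPolicy]$ into two phases: a steering phase of length $\delta \h$ inside the safe set using $\hat{\pi}$, followed by a pessimistic planning phase of length $\Horizon$ using $\pessiPolicy$. I would prove safety separately on each phase, with the key technical ingredient being a bound on the deviation between the true closed-loop trajectory and the nominal mean-dynamics trajectory that $\hat{\pi}$ is designed for.

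First, I would handle the steering phase. By construction, $\state(\ki) \in \safeInit{\n}$ and \cref{assump:dynamics_tol_terminal} ensures $\cwidth[\n](\state,\coninput) \leq \epscollect$ for all $\state \in \safeInit{\n}$, $\coninput \in \inputSpace$. Combined with the noise bound $\noise_h \in \W$ and closed-loop Lipschitz continuity (\cref{assump:lipschitz}), a recursion analogous to~\eqref{eq:JK_bound_x_sequential}--\eqref{eq:JK_bound_x_total} yields a bound on the deviation between the true state after $\delta \h$ steps, call it $\tilde{\state}$, and the mean-model state $\statePessi$: $\|\tilde{\state} - \statePessi\| \leq \sum_{i=0}^{\delta\h-1} \LipDyn^i (\epscollect + \noisebound) \eqqcolon \Delta_1$. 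In parallel, robust invariance of $\safeInit{\n}$ (\cref{assump:safeSet}) guarantees that the intermediate states during this phase remain in $\safeInit{\n}$ and that the inputs lie in $\inputSpace$, since $\hat{\pi}$ is essentially a controllability policy inside the safe set (and can be taken as the terminal policy $\pi_{\mathrm{f}}$ past step $\delta \h$ if needed).

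Second, I would handle the pessimistic phase. By definition~\eqref{eq:def_pessi_set}, $\pessiPolicy \in \pessiSet[,\epssafeset]{\n}(\statePessi; \Horizon)$ guarantees that, starting from $\statePessi$ and applying $\pessiPolicy$ under any $\dynVec \in \dynSet[\n]$ and any noise realization, the state-input trajectory satisfies the $\epssafeset$-tightened constraints and terminates in $\safeInit{\n} \ominus \ball{\epssafeset}$. Since $\dyntrueVec \in \dynSet[\n]$ with probability at least $1-\delta$ by \cref{lem:beta}, the true system starting from $\statePessi$ satisfies these tightened constraints. To transfer this guarantee to the actual initial condition $\tilde{\state}$, I would use Lipschitz continuity of the closed-loop dynamics, paralleling the calculation in \cref{lem:deviation_with_uncertainity}, to bound the state deviation at step $h$ by $\LipDyn^h \Delta_1$ and the input deviation by $\LipPi \LipDyn^h \Delta_1$.

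The crux of the argument is then to select $\epssafeset$ large enough so that the $\epssafeset$-margin absorbs this initial-condition mismatch, i.e., $\epssafeset \geq \max\{1,\LipPi\} \LipDyn^{\Horizon} \Delta_1$, while remaining in $(0,\epsguarantee)$ to be consistent with the main theorem. Under such a choice, the true trajectory from $\tilde{\state}$ under $\pessiPolicy$ satisfies the original (untightened) constraints and ends in $\safeInit{\n}$. Concatenating both phases yields $\retPolicy \in \truePolicySet[]{\n}(\state(\ki); \Horizon + \delta \h)$ with high probability. The main obstacle I anticipate is the careful tuning of $\epssafeset$: it must simultaneously absorb the propagation of the steering error $\Delta_1$ through the pessimistic horizon, remain compatible with the controllability tolerance from \cref{lem:horizon_diff}, and be small enough to keep $\pessiSet[,\epssafeset]{\n}(\statePessi;\Horizon)$ non-empty so that the pessimistic optimization in \cref{lin:solve_pessi_prob} is itself feasible.
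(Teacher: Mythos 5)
Your proposal is correct and follows essentially the same route as the paper's proof: it bounds the steering-phase mismatch $\|\tilde{\state}-\statePessi\|\leq \LipDyn_{\delta\h}(\epscollect+\noisebound)$ using \cref{assump:dynamics_tol_terminal} and the noise bound, propagates it through the pessimistic horizon via closed-loop Lipschitz continuity, and absorbs it with the $\epssafeset$-tightening chosen as in~\eqref{eq:epsilon_lower_bound_convergence}. The only cosmetic difference is that the paper runs the comparison between the two true-dynamics trajectories started at $\statePessi$ and at $\tilde{\state}$ under the same noise sequence rather than phrasing it as a transfer of the tightened-constraint guarantee, but the resulting bounds and the condition on $\epssafeset$ are the same.
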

\begin{proof}
Given \cref{assump:lipschitz}, the closed-loop dynamics of $\dyntrueVec$ with policy $\pi$ is $\LipDyn$-Lipschitz continuous.
Similarly to \cref{lem:horizon_diff},
let $\state^{\star,l}_h$ and $\state^\star_h$ denote the state sequences when applying the policy $\pessiPolicy$, and the same noise sequence $\eta_h$ to the dynamics $\dyntrueVec$ with initial conditions $x_0^{\star,l}$ and $x_0^\star$ , respectively.
Due to controllability within the safe set (\cref{assump:safeSet}), unknown dynamics up to $\epsdiff$ tolerance in the safe set (\cref{assump:dynamics_tol_terminal}) and noise, 
the deviation satisfies $\|\state^\star_0 - \state^{\star,l}_0\| \leq 
\sum_{h=0}^{ \delta h - 1}\LipDyn^h (\epsdiff + \noisebound)$ where $\state^{\star,l}_0$ is the state at which $\dyntrueVec$ ends after controlling with $\hat{\pi}$ (policy $\hat{\pi}$ is optimized with mean dynamics $\bm{\mu}_\n$). 

Analogous to~\cref{prop:over_under_approx}, it holds that
\begin{align*}
     \|x_{h+1}^\star - x_{h+1}^{\star,l}\| &=    \| \dyntrueVec(\state^\star_{h}, \pi_\h(\state^{\star}_{h})) - \dyntrueVec(\state^{\star,l}_{\h}, \pi_\h(\state^{\star,l}_{\h}))\|
     \leq \LipDyn\|\state^{\star}_{\h}-\state^{\star,l}_{\h}\|
     \leq \LipDyn^h\|\state^\star_0-\state^{\star,l}_0\|\\
     & \leq \max_{h\in\Intrange{0}{H}} \{\LipDyn^h\} \sum_{i=0}^{ \delta \h - 1}\LipDyn^i (\epsdiff + \noisebound) \leq \epspessi
\end{align*}
and 
\begin{align*}
     \|\pi_h(\state^\star_\h) - \pi_h(\state^{\star,l}_\h) \|\leq \LipPi \| \state^\star_\h - \state^{\star,l}_\h \| \leq \epspessi,\quad \forall \h \in \Intrange{0}{\Horizon-1}.
\end{align*}
with 
\begin{align}
     \label{eq:epsilon_lower_bound_convergence}
     \epspessi \geq \max\{1, L_\pi\}  \max\{\LipDyn^{\Horizon},1\} \LipDyn_{\Delta \Horizon} (\epscollect + \noisebound).
\end{align}
Since tightening $\epspessi$ is used in \eqref{eq:opti_obj} while optimizing for $\pessiPolicy$ implies $\pessiPolicy \in \pessiSet[, \epspessi]{\n}(\state(\ki); \Horizon) \subseteq \truePolicySet[, \epspessi]{\n}(\state(\ki); \Horizon)$,  the appended policy satisfies $[\hat{\pi}, \pessiPolicy] \in \truePolicySet[]{\n}(\state(\ki); \Horizon + \delta \h)$. 

Furthermore, since $[\hat{\pi}, \pessiPolicy] \in \truePolicySet[]{\n}(\state(\ki); \Horizon + \delta \h)$, this implies that the resulting closed-loop system $\dyntrueVec \in \dynSet[\n]$ with $[\hat{\pi}, \pessiPolicy]$ satisfies constraints~\eqref{eq:constraint_set}.
\end{proof}

\begin{lemma}[Optimality with return policy] \label{lem:optimality_with_return_policy} 
Let~\cref{assump:q_RKHS,assump:safeSet,assump:lipschitz,assump:sublinear,assump:dynamics_tol_terminal} hold. 
Suppose the termination criterion \eqref{eq:termination_criteria} is satisfied.
Then the resulting closed-loop system from 
the return policy $[\hat{\pi}, \pessiPolicy]$ 
guarantees~\cref{obj:reward_maximization} with constant 
\begin{align} \label{eq:constantK}
    K \epsguarantee =3\LiprewPi \sum_{\h=0}^{\Horizon-1} \LipDyn_{\h}\epsterminal + \LiprewPi \LipDyn_{\Horizon } \LipDyn_{\delta \h} (\epscollect+\noisebound).
\end{align}
\end{lemma}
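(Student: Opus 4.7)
The plan is to chain four inequalities that, together, relate the reward collected by the returned policy $[\hat\pi,\pi^{\mathrm p}]$ applied to $\dyntrueVec$ to the reward of the best safe policy $\pi^\star\in\truePolicySet[,\epsguarantee]{c,\nfin}$. Write $J^\star\coloneqq \max_{\state^\star\in\safeInit{n},\pi^\star\in\truePolicySet[,\epsilon]{c,n}}\Jobj[]{\state^\star,\dyntrueVec}{\pi^\star}$ and let $x_h^{\mathrm p}$ be the $\bm\mu_n$-trajectory from $x^{\mathrm p}$ under $\pi^{\mathrm p}$ used to evaluate $J^{\mathrm p}$, $\tilde x_h$ the $\dyntrueVec$-trajectory from $x^{\mathrm p}$ under $\pi^{\mathrm p}$, and $x_h^{\mathrm r}$ the $\dyntrueVec$-trajectory from $x(k)$ under $[\hat\pi,\pi^{\mathrm p}]$.

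\emph{Step 1 (optimistic majorises true optimum).} Since $\dyntrueVec\in\dynSet[n]$ by \cref{lem:beta} and $\truePolicySet[,\epsguarantee]{c,n}\subseteq\optiSetwe[,\epsilon/2]{n}\subseteq\optiSetwe[]{n}$ in the sense of \cref{prop:over_under_approx}, any $(\state^\star,\pi^\star)$ attaining $J^\star$ is feasible in the optimistic problem~\eqref{eq:opti_pessi_obj} with $\dynVec=\dyntrueVec$, so $\Jobj[]{\state^o,\dynVec^o}{\optiPolicy}\geq J^\star$.

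\emph{Step 2 (termination criterion).} The termination check~\eqref{eq:termination_criteria} then yields
\[
\Jobj[\mathrm p]{\state^{\mathrm p},\bm\mu_n}{\pessiPolicy}\;\geq\; J^\star \;-\; 3\LiprewPi\sum_{h=0}^{\Horizon-1}\LipDyn_h\epsterminal.
\]

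\emph{Step 3 ($J^{\mathrm p}$ is a pessimistic estimate of the true reward from $\state^{\mathrm p}$).} I apply \cref{lem:deviation_with_uncertainity} with $\dynVec=\bm\mu_n$ to compare the $\bm\mu_n$- and $\dyntrueVec$-trajectories starting at $\state^{\mathrm p}$ under $\pessiPolicy$, obtaining $\|\tilde x_{h}-x_h^{\mathrm p}\|\leq\sum_{i=0}^{h-1}\LipDyn^{i}\cwidth[n](x_i^{\mathrm p},\pessiPolicy[i](x_i^{\mathrm p}))$. Lipschitz continuity of $r$ (\cref{assump:lipschitz}) plus summation over $h$ gives exactly the correction term in~\eqref{eq:pessi_objective}, hence $\Jobj[]{\state^{\mathrm p},\dyntrueVec}{\pessiPolicy}\geq \Jobj[\mathrm p]{\state^{\mathrm p},\bm\mu_n}{\pessiPolicy}$.

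\emph{Step 4 (transport from $\state^{\mathrm p}$ to the actually reached state).} The policy $\hat\pi$ is planned on $\bm\mu_n$ to steer $\state(k)\mapsto\state^{\mathrm p}$ inside $\safeInit{n}$. By \cref{assump:dynamics_tol_terminal} the true dynamics differ from $\bm\mu_n$ by at most $\epscollect$ inside the safe set, and each step contributes an additional $\noisebound$. Iterating the Lipschitz bound as in the proof of \cref{lem:safety_with_return_policy} gives $\|x_{\delta h}^{\mathrm r}-\state^{\mathrm p}\|\leq\LipDyn_{\delta h}(\epscollect+\noisebound)$. Propagating this under $\pessiPolicy$ with $\dyntrueVec$ for the remaining $\Horizon$ steps, closed-loop Lipschitz continuity yields $\|x_{\delta h+h}^{\mathrm r}-\tilde x_h\|\leq \LipDyn^{h}\LipDyn_{\delta h}(\epscollect+\noisebound)$, and summing $r$ over the horizon gives
\[
\Jobj[]{\state(k),\dyntrueVec}{[\hat\pi,\pessiPolicy]}\;\geq\;\Jobj[]{\state^{\mathrm p},\dyntrueVec}{\pessiPolicy}\;-\;\LiprewPi\LipDyn_{\Horizon}\LipDyn_{\delta h}(\epscollect+\noisebound).
\]
Chaining Steps 1--4 produces \cref{obj:reward_maximization} with the constant $K\epsilon$ given in~\eqref{eq:constantK}, using the conventions $L_H=\sum_{i=0}^{H-1}L^i$ and $L_{\delta h}=\sum_{i=0}^{\delta h-1}L^i$.

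\emph{Main obstacle.} The delicate piece is Step 4: the two ``sources'' of deviation between the planned $\bm\mu_n$-trajectory used to evaluate $J^{\mathrm p}$ and the true executed trajectory, namely the mismatch inside the safe set (bounded by $\epscollect$ plus noise) and the mismatch outside (bounded by $\cwidth[n]\leq$ whatever is forced by the pessimistic feasibility, but \emph{not} a priori below $\epsterminal$ along the executed path), must be accounted for without double counting; in particular Step 3 only uses $\cwidth[n]$ along $x_h^{\mathrm p}$, so I must be careful that the Lipschitz transport in Step 4 does not reintroduce a $\sum\cwidth[n]$ term. The bookkeeping is made consistent by using the $\bm\mu_n$-trajectory $x_h^{\mathrm p}$ as the common reference trajectory and triangle inequality $\|x_{\delta h+h}^{\mathrm r}-x_h^{\mathrm p}\|\leq \|x_{\delta h+h}^{\mathrm r}-\tilde x_h\|+\|\tilde x_h-x_h^{\mathrm p}\|$, which cleanly separates the safe-set deviation (yielding the $L_r L_H L_{\delta h}(\epscollect+\noisebound)$ term) from the uncertainty-along-path deviation (absorbed into $J^{\mathrm p}$). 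The full bookkeeping, along with verification that $\hat\pi$ exists by the controllability clause of \cref{assump:safeSet}, is deferred to the detailed proof.
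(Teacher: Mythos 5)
Your proposal is correct and follows essentially the same route as the paper's proof: drop the reward accrued during the $\hat{\pi}$ segment via $r\geq 0$, bound the gap between the executed true trajectory and the $\bm{\mu}_\n$-trajectory from $\statePessi$ so that the $\sum_h\sum_i \LipDyn^i \cwidth[\n]$ part is absorbed into $J^{\mathrm{p}}$ and the safe-set/noise part gives $\LiprewPi\LipDyn_{\Horizon}\LipDyn_{\delta\h}(\epscollect+\noisebound)$, then chain the termination criterion~\eqref{eq:termination_criteria} with $\truePolicySet[,\epsguarantee]{c,\nfin}\subseteq\optiSet[]{\nfin}$. The only cosmetic difference is that you separate the comparison into two legs via a triangle inequality through $\tilde{x}_h$, whereas the paper runs a single recursion directly between the $\bm{\mu}_\n$-trajectory from $\statePessi$ and the true trajectory from the reached state; both yield the same constant $K\epsguarantee$.
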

\begin{proof} 
We will derive a lower bound for $\Jobj[]{\state(\ki), \dyntrueVec}{\retPolicy}$ which is given by,
\begin{align} \label{eq:lb_rew_ret_policy}
    \Jobj[]{\state(\ki), \dyntrueVec}{[\hat{\pi},\pessiPolicy]} = \Jobj[]{\state(\ki), \dyntrueVec}{\hat{\pi}} + \E_{\state^{\star}_0 \sim \hat{\pi}} \Jobj[]{\state^{\star}_0, \dyntrueVec}{\pessiPolicy} \labelrel{\geq}{step:rewgeq0} \E_{\state^{\star}_0 \sim \hat{\pi}} \Jobj[]{\state^{\star}_0, \dyntrueVec}{\pessiPolicy},
\end{align}
where $\state^{\star}_{0}$ is the state (random variable) of the true dynamics controlling from $\state(\ki)$ with the policy $\hat{\pi}$ under the noise distribution which is denoted by $\E_{\state^{\star}_0 \sim \hat{\pi}}$. Step~\eqref{step:rewgeq0} follows from $r(\cdot,\cdot)\geq0$. 

Next, we will bound the difference in cumulative rewards between mean dynamics $\bm{\mu}_\n$ and $\dyntrueVec$, using the policy $\pessiPolicy$, when starting, respectively, from $\statePessi$ and $\state^\star_0$ and realizing the same noise sequence, resulting in trajectories $\state_\h$ and $\state^\star_\h$ respectively.
\begin{align*}
\!\!\!\! \Jobj[]{\statePessi, \bm{\mu}_\n}{\pessiPolicy} &- \E_{\state^{\star}_0 \sim \hat{\pi}}\Jobj[]{\state^{\star}_0, \dyntrueVec}{\pessiPolicy} \\
&= \E_{\state^{\star}_0 \sim \hat{\pi}} \E_{\eta} \left(\sum_{\h=0}^{\Horizon-1} r(x_\h, \pessiPolicy[\h](x_\h)) - r(x^\star_\h, \pessiPolicy[\h](x^\star_\h)) \right)\\
&
\leq \E_{\state^{\star}_0 \sim \hat{\pi}} \E_{\eta} \LiprewPi \sum_{\h=0}^{\Horizon-1} \|\state_\h - \state^\star_\h \| \\
&\leq \LiprewPi \sum_{\h=0}^{\Horizon-1} \left( \LipDyn^{\h} \LipDyn_{\delta \h} (\epscollect + \noisebound) + \E_{\noise} \sum_{i=0}^{\h-1} \LipDyn^{i}  \cwidth[\n-1](\state_{i}, \pessiPolicy[i](\state_{i}))  \right)\\
&= \LiprewPi \LipDyn_{\Horizon} \LipDyn_{\delta \h} (\epscollect + \noisebound) + \E_{\noise} \LiprewPi \sum_{\h=0}^{\Horizon-1} \sum_{i=0}^{\h-1} \LipDyn^{i}  \cwidth[\n-1](\state_{i}, \pessiPolicy[i](\state_{i}))  \numberthis \label{eq:mean_true_bound}
\end{align*}
In the above, the first inequality follows from the Lipschitz continuity of rewards and the policies and for the second inequality, analogous to \cref{lem:deviation_with_uncertainity}, we used that 
\begin{align*}
    \|x_\h - x^\star_\h \| &\leq \LipDyn \|\state^\star_{\h-1} - \state_{\h-1}\| + \cwidth[\n-1](\state_{\h-1}, \pessiPolicy(\state_{\h-1}))\\
    &\leq \LipDyn \|\state^\star_{\h-2} - \state_{\h-2}\| + \LipDyn\cwidth[\n-1](\state_{\h-2}, \pessiPolicy(\state_{\h-2})) +  \cwidth[\n-1](\state_{\h-1}, \pessiPolicy(\state_{\h-1})) \\
    &\leq L^{\h} L_{\delta \h} (\epscollect + \noisebound) + \sum_{i=0}^{\h-1} \LipDyn^{i}  \cwidth[\n-1](\state_{i}, \pessiPolicy(\state_{i}))
\end{align*}
where the last inequality uses that $\|x_0 - x^\star_0\|\leq \LipDyn_{\delta \h} (\epscollect + \noisebound)$ by \cref{assump:dynamics_tol_terminal} of knowing dynamics upto $\epsdiff$ tolerance in the safe set. Since the upper bound is constant (does not depend on $\state^{\star}_0$), $\E_{\state^{\star}_0 \sim \hat{\pi}}$ disappears.



Finally, on substituting \cref{eq:mean_true_bound} in \cref{eq:lb_rew_ret_policy}, we get, 
\begin{align*}
    \Jobj[]{\state(\ki), \dyntrueVec}{[\hat{\pi},\pessiPolicy]}
    &\geq \Jobj[]{\statePessi, \bm{\mu}_\n}{\pessiPolicy} - \E_{\noise} \LiprewPi \sum_{\h=0}^{\Horizon-1} \sum_{i=0}^{\h-1} \LipDyn^{i}  \cwidth[\n-1](\state_{i}, \pessiPolicy(\state_{i})) - \LiprewPi \LipDyn_{\Horizon } \LipDyn_{\delta \h} (\epscollect + \noisebound) \\
    &\labelrel\geq{step:substermination} \Jobj[]{\state^{\mathrm{o}}, \dynVec^o }{\optiPolicy} - 3 \LiprewPi \sum_{\h=0}^{\Horizon-1} \LipDyn_{\h} \epsterminal - \LiprewPi \LipDyn_{\Horizon } \LipDyn_{\delta \h} (\epscollect + \noisebound) \\
    &\labelrel\geq{step:truesys} \Jobj[]{\state^{\star}, \dyntrueVec }{\pi^\star} - 3 \LiprewPi \sum_{\h=0}^{\Horizon-1} \LipDyn_{\h} \epsterminal - \LiprewPi \LipDyn_{\Horizon } \LipDyn_{\delta \h} (\epscollect + \noisebound)
\end{align*}

Step~\eqref{step:substermination} follows from the termination criterion~\eqref{eq:termination_criteria}.    Step~\eqref{step:truesys} follows since the optimistic policy set is always a subset of the true policy set, see~\cref{prop:over_under_approx}. The last inequality implies satisfying~\cref{obj:reward_maximization} with K as per \eqref{eq:constantK}.
\end{proof}

\begin{lemma}\label{lem:true_less_uncertainity_pessi_policy}
Let~\cref{assump:q_RKHS,assump:lipschitz,assump:safeSet,assump:sublinear} hold. 
Let $\nfin\leq\n^\star$ be as defined in \cref{thm:maximum_dynamics_exploration}.
Consider any $\state^\star_0 \in \safeInit{\nfin}$, $\pessiPolicy \in \pessiSet[,\epspessi]{\nfin}(\state^\star_0, \Horizon)$ and the resulting trajectory $\state^\star_{h+1} = \dyntrueVec(\state^\star, \pessiPolicy(\state^\star))$, $\forall \h \in \Intrange{0}{\Horizon-1}$. 
Then the following holds:
\begin{align} \label{eq:short_pessi_uncertainity_bound}
    \cwidth[\nfin](\state, \pessiPolicy[\h](\state)) \leq \LipWidthPi\|\state - \state^\star_{\h}\| + \epsterminal, \h \in \Intrange{0}{\Horizon}, \state \in \X.  
\end{align}
\end{lemma}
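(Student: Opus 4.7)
The plan is to split the bound into two contributions: the confidence width evaluated along the specific true-dynamics trajectory $\{\state^\star_h\}$, which I will bound by $\epsterminal$, and a Lipschitz correction term that accounts for the displacement $\|\state-\state^\star_h\|$ when evaluating $\cwidth[\nfin]$ at an arbitrary $\state\in\X$.

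For the first step, I would invoke the sample-complexity/termination machinery from \cref{thm:maximum_dynamics_exploration} (concretely \cref{lem:sample_complexity}). By the definition of $\nfin\leq\n^\star$ in that theorem, at that iteration the uncertainty has been reduced below $\epsterminal$ along every pessimistic trajectory rooted in $\safeInit{\nfin}$ and driven by any $\dynVec\in\dynSet[\nfin]$. Since $\pessiSet[,\epspessi]{\nfin}(\state^\star_0,\Horizon)\subseteq \pessiSet[]{\nfin}(\state^\star_0,\Horizon)$ (tightening only shrinks the set) and $\dyntrueVec\in\dynSet[\nfin]$ with probability at least $1-\delta$ by \cref{lem:beta}, the hypothesized trajectory $\state^\star_h=\dyntrueVec(\state^\star_{h-1},\pessiPolicy[h-1](\state^\star_{h-1}))$ is one of those pessimistic rollouts, yielding
\begin{equation*}
\cwidth[\nfin](\state^\star_h,\pessiPolicy[h](\state^\star_h))\leq \epsterminal,\quad \forall h\in\Intrange{0}{\Horizon}.
\end{equation*}

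For the second step, I would exploit the closed-loop Lipschitz continuity of $\cwidth[\nfin](\cdot,\pessiPolicy[h](\cdot))$ with constant $\LipWidthPi=\LipWidth(1+\LipPi)$ (cf.\ \cref{assump:lipschitz} and the discussion right after it). For any $\state\in\X$,
\begin{equation*}
\cwidth[\nfin](\state,\pessiPolicy[h](\state))
\leq \cwidth[\nfin](\state^\star_h,\pessiPolicy[h](\state^\star_h)) + \LipWidthPi\|\state-\state^\star_h\|
\leq \epsterminal + \LipWidthPi\|\state-\state^\star_h\|,
\end{equation*}
which is exactly~\eqref{eq:short_pessi_uncertainity_bound}.

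The main obstacle I expect is the first step: the termination criterion in \cref{lem:sample_complexity} is phrased for pessimistic trajectories emanating from the current state $\state(k)$, while here we need the same bound for pessimistic rollouts starting from an arbitrary $\state^\star_0\in\safeInit{\nfin}$. The clean way to close this gap is to argue via the invariance and controllability properties in \cref{assump:safeSet} (as was done implicitly in \cref{lem:contra_eps}): one can concatenate a safe-set-interior maneuver with $\pessiPolicy$ to lift it into a pessimistic policy of horizon $\totHorizon$ starting from $\state(k)$, and then apply the sample-complexity bound. Alternatively, one can directly re-run the argument of \cref{lem:sample_complexity} observing that the total mutual information the algorithm can accumulate is capped by $\beta_{\n^\star}\gamma_{\n^\star}$ regardless of where the informative states are located, so no state reachable by a pessimistic policy from anywhere in $\safeInit{\nfin}$ can still exhibit $\cwidth[\nfin]\geq\epsterminal$ after $\nfin$ iterations—otherwise that state would still be a valid sampling target and the algorithm would not have stopped growing the dataset.
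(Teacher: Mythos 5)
Your two-part skeleton (bound $\cwidth[\nfin]$ along the true-dynamics rollout by $\epsterminal$, then add a $\LipWidthPi\|\state-\state^\star_h\|$ Lipschitz correction) is exactly the paper's, and you correctly identify that the entire difficulty lives in the first part: the termination certificate from \cref{lem:sample_complexity} only covers trajectories rooted at the terminal state $\state(k)$ under policies in $\pessiSet[]{\nfin}(\state(k),\totHorizon)$, whereas the lemma needs the bound for $\pessiPolicy\in\pessiSet[,\epspessi]{\nfin}(\state^\star_0,\Horizon)$ with arbitrary $\state^\star_0\in\safeInit{\nfin}$ and the shorter horizon $\Horizon$. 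Unfortunately, neither of your proposed fixes closes this gap as stated. The information-capacity argument fails because $\beta_{\n^\star}\gamma_{\n^\star}$ only caps the \emph{number} of informative samples, not their spatial coverage: the sampling rule~\eqref{eq:sampling_rule_timex} searches exclusively over policies rooted at the current state, so a state reachable only from some other $\state^\star_0\in\safeInit{\nfin}$ is \emph{not} "still a valid sampling target" after termination — infeasibility of~\eqref{eq:sampling_rule_timex} says nothing about it. The concatenation argument is directionally right but under-specified: to claim $[\hat{\pi},\pessiPolicy]\in\pessiSet[]{\nfin}(\state(k);\totHorizon)$ you must show the concatenated policy keeps \emph{all} $\dynVec\in\dynSet[\nfin]$ feasible, yet controllability (\cref{assump:safeSet}) steers each dynamics only in the noise-free case, so the rollout lands in a ball around $\state^\star_0$ and $\pessiPolicy$ must absorb that perturbation via its tightening — this is precisely where $\epspessi$ must enter quantitatively, and your proof never uses $\epspessi$ beyond the trivial inclusion $\pessiSet[,\epspessi]{\nfin}\subseteq\pessiSet[]{\nfin}$ (the paper proves this concatenation claim separately as \cref{lem:horizon_diff_pessi}, with the explicit condition~\eqref{eq:pessi_eps_dash_shrink} on $\epspessi$, and notably does \emph{not} use it inside the present lemma).

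The paper instead closes the gap by routing through the \emph{true} policy set, where only $\dyntrueVec$ needs to be controlled. Concretely: choosing $\epspessi$ as in~\eqref{eq:eps_dash_formula}, \cref{lem:uniform_exploration} gives $\truePolicySet[,\frac{\epspessi}{2}]{c,\nfin}(\state(k);\totHorizon)\subseteq\pessiSet[]{\nfin}(\state(k);\totHorizon)$, so the termination certificate extends to true-policy rollouts from $\state(k)$; the invariance property lifts this to horizon $\Horizon+\delta\h$; \cref{lem:horizon_diff} then handles the initial-state and horizon shift entirely within the true policy set (the noise-induced mismatch is absorbed by relaxing the tightening from $\epspessi$ to $\epspessi/2$); and finally \cref{prop:over_under_approx} gives $\pessiSet[,\epspessi]{\nfin}(\state^\star_0,\Horizon)\subseteq\truePolicySet[,\epspessi]{c,\nfin}(\state^\star_0,\Horizon)$, placing the hypothesized $\pessiPolicy$ inside the set for which the uncertainty bound has been established. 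If you want to repair your proof, this inclusion chain — not the concatenation of pessimistic policies — is the missing ingredient, and it is also the step that dictates the lower bound~\eqref{eq:eps_dash_formula} on $\epspessi$.
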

\begin{proof}
Given the sampling criteria in \cref{alg:rew_maxim} and using \cref{lem:sample_complexity}, we obtain $\exists \nfin \leq \n, \state_s \in \safeInit{\nfin}:$ 
 \begin{align} \label{eq:pessi_uncertainity_threshold}
    \cwidth[\nfin](\state_\h, \pessiPolicy(\state_\h)) < \epsterminal, \forall \h \in \Intrange{0}{\totHorizon-1}, \dynVec \in \dynSet[\n], \pessiPolicy \in \pessiSet[]{\nfin}(\state_s, \totHorizon),
\end{align}
where $\state_{\h+1} = \dynVec(\state_{\h}, \pessiPolicy[\h](\state_{\h}))$ with $\state_0 = \state_s$. Suppose (cf.~\cref{eq:JK_espilon_formula})
\begin{align}
    \epspessi > \max \{ 1, \LipPi \} 4 \sum_{j=0}^{ \totHorizon-1 }(\LipWidthPi+\LipDyn)^j (\epsterminal+ \LipWidthPi \LipDyn_{\Horizon-j-1}\noisebound) \label{eq:eps_dash_formula}
\end{align}
which ensures  $\truePolicySet[,\frac{\epspessi}{2}]{c,\nfin}(\state_s;\totHorizon) \subseteq \pessiSet[]{\nfin}(\state_s;\totHorizon)$ 
by using \cref{lem:uniform_exploration}. Along with \eqref{eq:pessi_uncertainity_threshold}, this implies
\begin{align}\label{eq:long_true_traj}
    \cwidth[\nfin](\state^{\star'}_\h, \pi^\star_\h(\state^{\star'}_\h)) < \epsterminal, \forall \h \in \Intrange{0}{\totHorizon-1}, \pi^\star \in \truePolicySet[,\frac{\epspessi}{2}]{c,\nfin}(\state_s, \totHorizon),
\end{align}
with $\state^{\star'}_{\h+1} = \dynVec(\state^{\star'}_{\h}, \pi^\star_\h(\state^{\star'}_{\h}))$ with $\state^{\star'}_0 = \state_s$.
Using the invariance property of \cref{assump:safeSet}, for all $\delta\h \in \Intrange{0}{\Delta \Horizon}, \pi^\star \in \truePolicySet[,\frac{\epspessi}{2}]{c,\nfin}(\state_s, \Horizon + \delta \h),~\exists \pi_f \in \Pi_{\Delta \Horizon - \delta \h}: [\pi^\star,\pi_f] \in \truePolicySet[,\frac{\epspessi}{2}]{c,\nfin}(\state_s, \totHorizon)$ which from~\eqref{eq:long_true_traj} implies,
\begin{align}\label{eq:long_true_traj_dh}
    \cwidth[\nfin](\state^{\star'}_\h, \pi^\star_\h(\state^{\star'}_\h)) < \epsterminal, \forall \h \in \Intrange{0}{\totHorizon-1}, \pi^\star \in \truePolicySet[,\frac{\epspessi}{2}]{c,\nfin}(\state_s, \Horizon + \delta \h),
\end{align}
Since $\epspessi$ satisfies \eqref{eq:epsilon_lower_bound_convergence}, it also holds that $\epspessi\geq \max\{1,\LipPi\} 2 \max \{\LipDyn^\Horizon, 1\} \LipDyn_{\Delta \Horizon}\noisebound$. Then by using \cref{lem:horizon_diff} we get,
\begin{align} \label{eq:true_traj_bound}
    \cwidth[\nfin](\state^\star_\h, \pi^\star_\h(\state^\star_\h)) < \epsterminal, \forall \h \in \Intrange{0}{\Horizon-1}, 
     \state^\star_0 \in \safeInit{\nfin}, 
    \pi^\star \in \truePolicySet[,\epspessi]{c,\nfin}(\state^\star_0, \Horizon),
\end{align}
with $\state^{\star}_{\h+1} = \dynVec(\state^{\star}_{\h}, \pi^\star_\h(\state^{\star}_{\h}))$. In the above, optimistic trajectory $\state^\star_\h$ satisfies $\cwidth[\nfin](\state^\star_\h, \pi^\star_\h(\state^\star_\h)) < \epsterminal$,
since $\state^{\star'}_\h$ in \eqref{eq:long_true_traj} is a noiseless trajectory and thus with $\hat{\pi}$ we can control it from $\state_s$ to any location $\state^\star_0 \in \safeInit{\n}$ exactly.

Using \cref{prop:over_under_approx}, we know that $\pessiSet[,\epspessi]{\nfin}(\safeInit{\nfin}, \Horizon) \subseteq \truePolicySet[,\epspessi]{c,\nfin}(\safeInit{\nfin}, \Horizon)$ and thus the noise-free trajectory from true dynamics under pessimistic policy $\pessiPolicy$ satisfies,
\begin{align}
    \cwidth[\nfin](\state^\star_\h, \pessiPolicy[\h](\state^\star_\h)) < \epsterminal, \forall \h \in \Intrange{0}{\Horizon-1}, 
    \state^\star_0 \in \safeInit{\nfin}, 
    \pessiPolicy \in \pessiSet[,\epspessi]{\nfin}(\state^\star_0, \Horizon). 
\end{align}
Finally, using the Lipschitz continuity of $\cwidth$ and $\pi$ with any point $\state \in \X$ yields \eqref{eq:short_pessi_uncertainity_bound}, which concludes the proof.
\end{proof}

In the following, we show that the sampling rule~\eqref{eq:sampling_rule_timex} ensures exploration of the pessimistic policy set $\pessiSet[,\epspessi]{\n}(\safeInit{\n}; \Horizon)$.

\begin{figure}
    \centering
    \scalebox{0.8}{
 
\tikzset{
pattern size/.store in=\mcSize, 
pattern size = 5pt,
pattern thickness/.store in=\mcThickness, 
pattern thickness = 0.3pt,
pattern radius/.store in=\mcRadius, 
pattern radius = 1pt}
\makeatletter
\pgfutil@ifundefined{pgf@pattern@name@_egw86ddy2}{
\pgfdeclarepatternformonly[\mcThickness,\mcSize]{_egw86ddy2}
{\pgfqpoint{-\mcThickness}{-\mcThickness}}
{\pgfpoint{\mcSize}{\mcSize}}
{\pgfpoint{\mcSize}{\mcSize}}
{
\pgfsetcolor{\tikz@pattern@color}
\pgfsetlinewidth{\mcThickness}
\pgfpathmoveto{\pgfpointorigin}
\pgfpathlineto{\pgfpoint{0}{\mcSize}}
\pgfusepath{stroke}
}}
\makeatother
\tikzset{every picture/.style={line width=0.75pt}} 

\begin{tikzpicture}[x=0.75pt,y=0.75pt,yscale=-1,xscale=1]

\draw  [fill={rgb, 255:red, 128; green, 128; blue, 128 }  ,fill opacity=0.3 ] (74,510.22) .. controls (74,483.78) and (95.43,462.35) .. (121.87,462.35) -- (389.31,462.35) .. controls (415.75,462.35) and (437.18,483.78) .. (437.18,510.22) -- (437.18,653.83) .. controls (437.18,680.27) and (415.75,701.7) .. (389.31,701.7) -- (121.87,701.7) .. controls (95.43,701.7) and (74,680.27) .. (74,653.83) -- cycle ;
\draw  [fill={rgb, 255:red, 108; green, 215; blue, 108 }  ,fill opacity=0.6 ] (154,492) .. controls (214.5,507) and (273.68,485.35) .. (344.68,493.85) .. controls (415.68,502.35) and (411.68,576.85) .. (412.68,649.35) .. controls (413.68,721.85) and (230.18,687.16) .. (208.18,694.66) .. controls (186.18,702.16) and (125.28,669.26) .. (118.18,643.66) .. controls (111.08,618.06) and (93.5,477) .. (154,492) -- cycle ;
\draw  [fill={rgb, 255:red, 108; green, 215; blue, 108 }  ,fill opacity=0.6 ] (166.68,499.26) .. controls (227.18,514.26) and (266.01,498.6) .. (335.68,503.85) .. controls (405.35,509.09) and (395.68,574.35) .. (399.18,641.35) .. controls (402.68,708.35) and (235.68,672.66) .. (209.68,677.66) .. controls (183.68,682.66) and (132.18,659.76) .. (123.68,618.26) .. controls (115.18,576.76) and (106.18,484.26) .. (166.68,499.26) -- cycle ;
\draw  [fill={rgb, 255:red, 80; green, 227; blue, 194 }  ,fill opacity=0.67 ] (154.33,636.57) .. controls (140.06,606.15) and (146.88,569.23) .. (169.56,554.1) .. controls (192.24,538.96) and (222.18,551.35) .. (236.45,581.76) .. controls (250.72,612.17) and (243.9,649.09) .. (221.22,664.23) .. controls (198.54,679.36) and (168.6,666.98) .. (154.33,636.57) -- cycle ;
\draw  [line width=3.75]  (213.18,642.66) .. controls (213.18,641.45) and (214.16,640.48) .. (215.36,640.48) .. controls (216.57,640.48) and (217.54,641.45) .. (217.54,642.66) .. controls (217.54,643.86) and (216.57,644.84) .. (215.36,644.84) .. controls (214.16,644.84) and (213.18,643.86) .. (213.18,642.66) -- cycle ;
\draw    (173.78,619.5) .. controls (175.8,618.26) and (177.43,618.64) .. (178.68,620.63) .. controls (179.89,622.64) and (181.49,623.09) .. (183.48,621.98) .. controls (185.97,621.74) and (187.27,622.8) .. (187.36,625.17) .. controls (187.53,627.51) and (188.84,628.58) .. (191.28,628.38) .. controls (193.53,627.93) and (194.89,628.84) .. (195.38,631.09) .. controls (196.13,633.37) and (197.63,634.12) .. (199.89,633.34) .. controls (202.09,632.5) and (203.59,633.19) .. (204.39,635.4) -- (206.5,636.47) -- (213.31,640.98) ;
\draw [shift={(215.36,642.66)}, rotate = 220.5] [fill={rgb, 255:red, 0; green, 0; blue, 0 }  ][line width=0.08]  [draw opacity=0] (8.93,-4.29) -- (0,0) -- (8.93,4.29) -- cycle    ;
\draw [line width=1.5]    (214.18,641.76) .. controls (257.43,681.76) and (353.68,711.76) .. (369.18,611.26) .. controls (384.37,512.77) and (242.36,499.77) .. (209.38,555.76) ;
\draw [shift={(207.5,559.28)}, rotate = 295.8] [fill={rgb, 255:red, 0; green, 0; blue, 0 }  ][line width=0.08]  [draw opacity=0] (11.61,-5.58) -- (0,0) -- (11.61,5.58) -- cycle    ;
\draw   (200.22,642.66) .. controls (200.22,634.3) and (207,627.52) .. (215.36,627.52) .. controls (223.72,627.52) and (230.5,634.3) .. (230.5,642.66) .. controls (230.5,651.02) and (223.72,657.8) .. (215.36,657.8) .. controls (207,657.8) and (200.22,651.02) .. (200.22,642.66) -- cycle ;
\draw  [line width=3.75]  (205.75,558.76) .. controls (205.75,557.41) and (206.84,556.33) .. (208.18,556.33) .. controls (209.52,556.33) and (210.61,557.41) .. (210.61,558.76) .. controls (210.61,560.1) and (209.52,561.18) .. (208.18,561.18) .. controls (206.84,561.18) and (205.75,560.1) .. (205.75,558.76) -- cycle ;
\draw  [line width=3.75]  (169.75,618.76) .. controls (169.75,617.41) and (170.84,616.33) .. (172.18,616.33) .. controls (173.52,616.33) and (174.61,617.41) .. (174.61,618.76) .. controls (174.61,620.1) and (173.52,621.18) .. (172.18,621.18) .. controls (170.84,621.18) and (169.75,620.1) .. (169.75,618.76) -- cycle ;
\draw [line width=1.5]    (197.68,692.16) -- (197.68,682.66) ;
\draw [shift={(197.68,678.66)}, rotate = 90] [fill={rgb, 255:red, 0; green, 0; blue, 0 }  ][line width=0.08]  [draw opacity=0] (6.97,-3.35) -- (0,0) -- (6.97,3.35) -- cycle    ;
\draw [shift={(197.68,696.16)}, rotate = 270] [fill={rgb, 255:red, 0; green, 0; blue, 0 }  ][line width=0.08]  [draw opacity=0] (6.97,-3.35) -- (0,0) -- (6.97,3.35) -- cycle    ;
\draw  [color={rgb, 255:red, 0; green, 0; blue, 0 }  ,draw opacity=1 ][pattern=_egw86ddy2,pattern size=2.7pt,pattern thickness=0.75pt,pattern radius=0pt, pattern color={rgb, 255:red, 0; green, 0; blue, 0}] (311.68,686.16) .. controls (271.68,692.66) and (218.68,657.16) .. (215.36,642.66) .. controls (212.04,628.15) and (304.18,697.16) .. (337.68,654.16) .. controls (371.18,611.16) and (350.68,566.66) .. (335.18,554.66) .. controls (319.68,542.66) and (288.18,530.66) .. (254.18,542.66) .. controls (220.18,554.66) and (228.18,559.16) .. (218.68,568.66) .. controls (209.18,578.16) and (203.68,557.16) .. (194.68,556.16) .. controls (185.68,555.16) and (202.18,537.66) .. (224.18,523.16) .. controls (246.18,508.66) and (284.18,508.16) .. (320.68,516.16) .. controls (357.18,524.16) and (392.18,571.66) .. (380.18,614.16) .. controls (368.18,656.66) and (351.68,679.66) .. (311.68,686.16) -- cycle ;
\draw  [color={rgb, 255:red, 0; green, 0; blue, 0 }  ,draw opacity=1 ][fill={rgb, 255:red, 74; green, 144; blue, 226 }  ,fill opacity=0.3 ] (312.68,691.66) .. controls (252.18,702.16) and (178.18,635.66) .. (172.18,618.76) .. controls (166.18,601.85) and (307.18,668.16) .. (324.68,638.66) .. controls (342.18,609.16) and (342.68,585.16) .. (326.18,569.16) .. controls (309.68,553.16) and (280.18,549.66) .. (257.68,557.16) .. controls (235.18,564.66) and (234.68,595.66) .. (216.68,586.16) .. controls (198.68,576.66) and (181.18,566.66) .. (172.18,565.66) .. controls (163.18,564.66) and (192.18,524.66) .. (214.18,510.16) .. controls (236.18,495.66) and (285.68,491.16) .. (322.18,499.16) .. controls (358.68,507.16) and (401.18,549.16) .. (401.18,594.16) .. controls (401.18,639.16) and (373.18,681.16) .. (312.68,691.66) -- cycle ;

\draw (156.43,576.53) node [anchor=north west][inner sep=0.75pt]  [font=\large]  {$\mathbb{X}_{n}$};
\draw (81.23,477.87) node [anchor=north west][inner sep=0.75pt]  [font=\Huge]  {$\mathcal{X}$};
\draw (154.63,617.83) node [anchor=north west][inner sep=0.75pt]  [font=\large]  {$x_{s}$};
\draw (215.63,627.83) node [anchor=north west][inner sep=0.75pt]  [font=\large]  {$x'$};
\draw (171.83,640.23) node [anchor=north west][inner sep=0.75pt]  [font=\large]  {$x_{0}^{p,l}$};
\draw (202.33,581.33) node [anchor=north west][inner sep=0.75pt]  [font=\large]  {$x_{H}^{p,l}$};
\draw (178.13,551.33) node [anchor=north west][inner sep=0.75pt]  [font=\large]  {$x_{H}^{p}$};
\draw (270.23,639.63) node [anchor=north west][inner sep=0.75pt]  [font=\large]  {$\pi ^{p}$};
\draw (182.93,604.33) node [anchor=north west][inner sep=0.75pt]  [font=\large]  {$\hat{\pi }$};
\draw (200.52,663.23) node [anchor=north west][inner sep=0.75pt]  [font=\LARGE]  {$\epsilon '$};

\end{tikzpicture}}
    \caption{
 Illustration in state space of the trajectories used to relate the pessimistic policy sets under two different horizons in \cref{lem:horizon_diff}. 
    The policy $\pessiPolicy \in \pessiSet[,\epspessi]{\nfin}(\safeInit{\n}; \Horizon)$ drives all dynamics $\dynVec$  from a state $\state'$ to $\statePessi_{\Horizon}\in \safeInit{\n}$ (marked with shaded lines), while ensuring that all intermediate states $\statePessi_\h \in \X \ominus \ball{\epsguarantee'}$, for all $\h \in \Intrange{0}{\Horizon}$. 
    We show that the concatenated policy 
    $[\hat{\pi}, \pessiPolicy ] \in \pessiSet[]{\nfin}(\state_s;\Horizon + \delta \h)$ 
    control any dynamics $\dynVec \in \dynSet[\nfin]$ starting at $\state_s$ to $\state'$, where all the dynamics land in a ball around $\state'$ (marked with a circle) due to unknown dynamics and noise, and then follows $\pessiPolicy$. The resulting trajectories $\statePessi[,l]_\h$ for all dynamics (region indicated by semi-transparent blue color) satisfy the constraints $\statePessi[,l]_\h \in \X$ for all horizons. 
    A similar argument applies to the input constraints, though not shown in the state space figure above.}
    \label{fig:dyn_init_set_appending_pessi}
    \vspace{-1em}
\end{figure}

\begin{lemma} \label{lem:horizon_diff_pessi}
Let \cref{assump:q_RKHS,assump:safeSet,assump:lipschitz,assump:sublinear} hold. $\exists \nfin \leq \n^\star: \forall  \state' \in  \safeInit{\n},~ \pessiPolicy \in \pessiSet[,\epspessi]{\nfin}(\state'; \Horizon), \exists \hat{\pi} \in\Pi_{\delta \h}: 
[\hat{\pi}, \pessiPolicy ] \in \pessiSet[]{\nfin}(\state_s;\Horizon + \delta \h)$.
\end{lemma}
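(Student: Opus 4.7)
The argument closely parallels \cref{lem:horizon_diff}, but with the extra difficulty that the prefix policy $\hat{\pi}$ must render the concatenation safe \emph{for every} dynamics in $\dynSet[\nfin]$, not only for $\dyntrueVec$. The plan is to pick $\nfin$ via \cref{lem:sample_complexity} so that the residual uncertainty inside the safe set is uniformly below $\epsterminal$, then choose $\hat{\pi}$ by the controllability property applied to a single reference dynamics (say the mean $\bm{\mu}_{\nfin}$), and finally use the $\epspessi$-tightening in $\pessiSet[,\epspessi]{\nfin}(\state';\Horizon)$ to absorb the resulting spread of states at time $\delta h$ across all $\dynVec\in\dynSet[\nfin]$.

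\textbf{Step 1 (uncertainty inside the safe set is small).} The terminal policy $\pi_{\mathrm{f}}$ from \cref{assump:safeSet} is by construction an element of $\pessiSet[]{\nfin}(\state_s;\totHorizon)$, so every state reachable from $\state_s$ under $\pi_{\mathrm{f}}$ lies in the reachable set considered in \cref{lem:sample_complexity}. Robust positive invariance of $\safeInit{\nfin}$ then implies that the reachable set covers $\safeInit{\nfin}$, so the termination criterion of the sampling rule~\eqref{eq:sampling_rule_timex} guarantees the existence of $\nfin\le \n^\star$ such that $\cwidth[\nfin](\state,\coninput) < \epsterminal$ for all $\state\in\safeInit{\nfin},\,\coninput\in\inputSpace$. \textit{This is the main technical obstacle}, since the lemma's assumption list does not include \cref{assump:dynamics_tol_terminal}; it must be derived from the sample-complexity/invariance combination.

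\textbf{Step 2 (construct $\hat{\pi}$ and bound the spread at time $\delta h$).} By controllability in \cref{assump:safeSet} applied to the mean dynamics, there exists $\hat{\pi}\in\Pi_{\delta h}$ and a reference trajectory $\state^{\mathrm{p}}_0=\state_s,\,\state^{\mathrm{p}}_1,\dots,\state^{\mathrm{p}}_{\delta h}=\state'$ that stays inside $\safeInit{\nfin}\ominus\ball{\epsguarantee}$ with inputs inside $\inputSpace\ominus\ball{\epsguarantee}$. For an arbitrary $\dynVec\in\dynSet[\nfin]$ and noise $\eta_h\in\W$, denote by $\state^{\mathrm{p},l}_h$ the trajectory obtained by applying $\hat{\pi}$ to $\dynVec$ from $\state_s$. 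Using Lipschitz continuity (\cref{assump:lipschitz}), the GP error bound~\eqref{eq:dyn_set}, the noise bound $\noisebound$, and Step~1, a straightforward recursion analogous to~\eqref{eq:JK_bound_x_sequential}--\eqref{eq:JK_bound_x_total} gives
\begin{align*}
\|\state^{\mathrm{p},l}_h-\state^{\mathrm{p}}_h\|\le \sum_{j=0}^{h-1}\LipDyn^{j}(\epsterminal+\noisebound),\qquad h\in\Intrange{0}{\delta h},
\end{align*}
and in particular $\|\state^{\mathrm{p},l}_{\delta h}-\state'\|\le \LipDyn_{\delta h}(\epsterminal+\noisebound)$; the states and inputs along the way remain inside $\safeInit{\nfin}$ and $\inputSpace$ after absorbing the $\epsguarantee$-tightening, provided $\epsguarantee\ge \max\{1,\LipPi\}\LipDyn_{\delta h}(\epsterminal+\noisebound)$.

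\textbf{Step 3 (propagate through $\pessiPolicy$ and conclude).} Now concatenate: starting from $\state^{\mathrm{p},l}_{\delta h}$, apply $\pessiPolicy$. For every $\dynVec\in\dynSet[\nfin]$ and noise realization, the concatenated trajectory $\tilde{\state}^{\mathrm{p},l}_h$ can be compared to the trajectory $\state^{\mathrm{p}}_h$ obtained by applying $\pessiPolicy$ to the same $\dynVec$ from $\state'$. Since both share the same closed-loop dynamics, closed-loop Lipschitz continuity (\cref{assump:lipschitz}) yields
\begin{align*}
\|\tilde{\state}^{\mathrm{p},l}_h-\state^{\mathrm{p}}_h\| \le \LipDyn^{h}\,\|\state^{\mathrm{p},l}_{\delta h}-\state'\| \le \LipDyn^{h}\LipDyn_{\delta h}(\epsterminal+\noisebound),
\end{align*}
with a corresponding bound on the inputs multiplied by $\LipPi$. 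Because $\pessiPolicy\in\pessiSet[,\epspessi]{\nfin}(\state';\Horizon)$ we have $\state^{\mathrm{p}}_h\in\X\ominus\ball{\epspessi}$, $\pessiPolicy[\h](\state^{\mathrm{p}}_h)\in\inputSpace\ominus\ball{\epspessi}$, and $\state^{\mathrm{p}}_\Horizon\in\safeInit{\nfin}\ominus\ball{\epspessi}$. Choosing $\epspessi$ large enough that
\begin{align}
\label{eq:pessi_eps_dash_shrink}
\epspessi\ge \max\{1,\LipPi\}\max_{h\in\Intrange{0}{\Horizon}}\LipDyn^{h}\LipDyn_{\delta h}(\epsterminal+\noisebound),
\end{align}
the $\epspessi$-margin absorbs the deviation, so $\tilde{\state}^{\mathrm{p},l}_h\in\X$, $\pessiPolicy[\h](\tilde{\state}^{\mathrm{p},l}_h)\in\inputSpace$, and $\tilde{\state}^{\mathrm{p},l}_\Horizon\in\safeInit{\nfin}$ for all $\dynVec\in\dynSet[\nfin]$. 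Combining this with Step~2 shows that the concatenated policy $[\hat{\pi},\pessiPolicy]$ satisfies the defining condition of $\pessiSet[]{\nfin}(\state_s;\Horizon+\delta h)$, completing the proof.
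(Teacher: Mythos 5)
Your overall architecture — steer with $\hat{\pi}$, bound the spread of all $\dynVec\in\dynSet[\nfin]$ at time $\delta\h$, and absorb it with the $\epspessi$-tightening of the pessimistic set — is the same as the paper's, but two steps do not go through as written. First, Step~1: positive invariance of $\safeInit{\nfin}$ only says that trajectories starting in the safe set stay there; it does not imply that the reachable set from the single point $\state_s$ under pessimistically safe policies covers $\safeInit{\nfin}\times\inputSpace$, so you cannot conclude $\cwidth[\nfin](\state,\coninput)<\epsterminal$ uniformly on the safe set from \cref{lem:sample_complexity}. (Even invoking controllability does not help directly, since the steering policy tailored to one $\dynVec$ need not keep \emph{all} dynamics in $\dynSet[\nfin]$ safe, so the steered states need not lie in the reachable set that \cref{lem:sample_complexity} controls.) The paper does not derive this property; it assumes it: \cref{assump:dynamics_tol_terminal} gives $\cwidth\leq\epscollect$ on the safe set, and the proof uses exactly this (via the deviation bound $\LipDyn_{\delta\h}(\epscollect+\noisebound)$ from \cref{lem:safety_with_return_policy}) even though that assumption is not listed in the lemma's hypotheses.

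Second, and more substantively, Step~3: you propagate the initial-condition error through $\pessiPolicy$ with the factor $\LipDyn^{h}$ by treating the closed loop $\dynVec(x,\pessiPolicy(x))$ as $\LipDyn$-Lipschitz for an \emph{arbitrary} $\dynVec\in\dynSet[\nfin]$. But \cref{assump:lipschitz} only asserts Lipschitz continuity of $\dyntrueVec$ (and of $\cwidth[\n]$); a generic member of $\dynSet[\nfin]$ is merely some function inside the confidence tube and need not be Lipschitz at all. The comparison must be routed through $\dyntrueVec$, i.e.\ $\|\dynVec(x_1,\cdot)-\dynVec(x_2,\cdot)\|\leq\LipDyn\|x_1-x_2\|+\cwidth[\nfin](x_1,\cdot)+\cwidth[\nfin](x_2,\cdot)$, which both inflates the growth rate to $(\LipDyn+\LipWidthPi)^{h}$ and accumulates additive $\epsterminal$-terms over the entire $\Horizon$-step rollout. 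Controlling those additive terms is the real work here, because the rollout leaves the safe set: the paper needs \cref{lem:true_less_uncertainity_pessi_policy}, which shows $\cwidth[\nfin](\state,\pessiPolicy[\h](\state))\leq\epsterminal+\LipWidthPi\|\state-\state^{\star}_{\h}\|$ along the noiseless true trajectory of any $\epspessi$-tightened pessimistic policy (itself a consequence of the exploration machinery in \cref{lem:uniform_exploration,lem:horizon_diff}). Your proof never establishes or invokes anything of this kind, so your proposed lower bound on $\epspessi$ omits the $2\LipDyn_{w,\Horizon}(\epsterminal+\noisebound)$ contribution present in the paper's condition \eqref{eq:pessi_eps_dash_shrink}, and the argument as written does not establish the claim.
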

\begin{proof}
Given $\state_s$ as an initial state, any target state  $\state'\in\safeInit{\n}$ and any dynamics $\dynVec\in\dynSet[\n]$ (optimistic), consider
applying policy $\hat{\pi}$ (\cref{assump:safeSet}) that control this dynamics $\dynVec$ from $\state_s \to \state' \eqqcolon \statePessi_0$. 
However, due to noise and unknown dynamics in the safe set, 
the worst deviation satisfies $\|\statePessi_0 - \statePessi[,l]_0\| \leq 
\sum_{h=0}^{ \delta h - 1}\LipDyn^h(\epsdiff + \noisebound)$ where $\statePessi[,l]_0$ results from any (worst-case) dynamics $\dynVec\in\dynSet[\n]$ 
under the policy $\hat{\pi}$ (cf.~\cref{lem:safety_with_return_policy}). 
After that, let $\statePessi[,l]_h$ and $\statePessi_h$ denote the state sequences generated by applying the policy $\pessiPolicy$ and some noise sequences $\eta_h$ to the dynamics $\dynVec$ with initial conditions $\statePessi[,l]_0$ and $\statePessi_0$, respectively. See \cref{fig:dyn_init_set_appending_pessi} for an illustration.

We will bound $\|\statePessi[,l]_{h+1}- \statePessi_{h+1}\|, \forall \h \in \Intrange{0}{\Horizon-1}$ to determine $\epspessi$ such that the policy $[\hat{\pi},\pessiPolicy]$ keeps all dynamics safe, i.e. $[\hat{\pi},\pessiPolicy]\in\pessiSet[]{\nfin}(\state_s;\Horizon + \delta \h)$.  
For this, define noiseless true trajectory sequence $\state^{\star}_{h+1} = \dyntrueVec(\state^\star_\h, \pessiPolicy[\h](\state^\star_\h))$ 
with initial conditions $\state_0^{\star} = \statePessi_0$. Using triangle inequality, we get,
\begin{align}\label{eq:split_pessi_traj}
    \|\statePessi[,l]_{h+1} - \statePessi_{h+1}\| \leq \|\statePessi[,l]_{h+1} -x_{h+1}^{\star}\| + \|x_{h+1}^\star - \statePessi_{h+1}\|.
\end{align}
Analogous to~\cref{prop:suffi_info_epscollect}, it holds that
\begin{align*}
     \|x_{h+1}^{\star} - \statePessi_{h+1}\| &= \| \dyntrueVec(\state^{\star}_{h}, \pessiPolicy[\h](\state^{\star}_{h})) - \dyntrueVec(\statePessi_{\h}, \pessiPolicy[\h](\statePessi_{\h})) \| + \| \dyntrueVec(\statePessi_{\h}, \pessiPolicy[\h](\statePessi_{\h}))- \dynVec(\statePessi_{\h}, \pessiPolicy[\h](\statePessi_{\h}))- \noisebound\| \\
     &\stackrel{\eqref{eq:short_pessi_uncertainity_bound}}{\leq} \LipDyn\|\state^{\star}_{\h}-\statePessi_{\h}\| + \LipWidthPi\|\statePessi_{\h} - \state^\star_{\h}\| + \epsterminal + \noisebound  \\
     &\leq (\LipDyn + \LipWidthPi)^h\|\state^{\star}_0-\statePessi_0\| + \LipDyn_{w,\h} (\epsterminal + \noisebound), \numberthis \label{eq:diff_true_pessi}
\end{align*}
where the first inequality follows using \cref{lem:true_less_uncertainity_pessi_policy}. Analogously, we have
\begin{align}
    \|x_{h+1}^{\star}- \statePessi[,l]_{h+1}\| & \leq (\LipDyn + \LipWidthPi)^h\|\state^{\star}_0-\statePessi[,l]_0 \| + \LipDyn_{w,\h} (\epsterminal + \noisebound)  \nonumber \\
    &\leq  (\LipDyn + \LipWidthPi)^h \sum_{i=0}^{ \delta \h - 1}\LipDyn^i (\epsdiff + \noisebound)  + \LipDyn_{w,\h} (\epsterminal + \noisebound) . \label{eq:diff_true_pessi_l}
\end{align}
Together, \cref{eq:diff_true_pessi,eq:diff_true_pessi_l} on substituting in \cref{eq:split_pessi_traj} imply
\begin{align*}
 \forall \h \in \Intrange{0}{\Horizon-1},~\|\statePessi_{h+1} -\statePessi[,l]_{h+1}\|\leq (\LipDyn + \LipWidthPi)^\Horizon \sum_{i=0}^{ \delta \h - 1}\LipDyn^i (\epsdiff + \noisebound) + 2 \LipDyn_{w,\Horizon} (\epsterminal + \noisebound)  \leq \epspessi.
\end{align*}
and for the input constraints, we have
\begin{align*}
    \forall \h \in \Intrange{0}{\Horizon-1}, \|\pi_h(\statePessi_\h) - \pi_h(\statePessi[,l]_\h) \|\leq \LipPi \| \statePessi_\h - \statePessi[,l]_\h \| \leq \epspessi.
\end{align*}
with 
\begin{align}
     \label{eq:pessi_eps_dash_shrink}
     \epspessi \geq \max\{1, L_\pi\} \left(\max\{ (\LipDyn + \LipWidthPi)^\Horizon,1 \} \LipDyn_{\Delta \Horizon} (\epsdiff + \noisebound) + 2 \LipDyn_{w,\Horizon} (\epsterminal + \noisebound) \right).
\end{align}
\end{proof}

\vspace{-2em}
\begin{lemma} \label{lem:opti_in_pessi}
Let \cref{assump:q_RKHS,assump:safeSet,assump:lipschitz,assump:sublinear} hold. Suppose \cref{alg:rew_maxim} terminates in $\nfin \leq \n^\star$ iterations with $\n^\star$ as defined in \cref{thm:maximum_dynamics_exploration}. Then it holds that $\optiSet[]{\nfin}(\safeInit{\nfin}; \Horizon) \subseteq \pessiSet[,\epspessi]{\nfin}(\safeInit{\nfin}; \Horizon)$.
\end{lemma}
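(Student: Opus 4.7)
The strategy is to adapt the contradiction argument of \cref{lem:uniform_exploration}, but with the $\epsilon$-tightened optimistic set playing the role of the $\epsilon/2$-tightened one, and the $\epspessi$-tightened pessimistic set playing the role of the untightened one. First, I would exploit the assumption that \cref{alg:rew_maxim} terminates in $\nfin \leq n^\star$ iterations together with \cref{lem:sample_complexity}: at termination, the sampling sub-problem~\eqref{eq:slack_there_exists_goal} with $\nu=0$ becomes infeasible at the current state $\state_s \in \safeInit{\nfin}$, which delivers the uniform uncertainty bound $\cwidth[\nfin](\state_h, \pi_h(\state_h)) < \epsterminal$ for every $h \in \Intrange{0}{\totHorizon-1}$, every $\dynVec \in \dynSet[\nfin]$, and every $\pi \in \pessiSet[]{\nfin}(\state_s; \totHorizon)$. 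Invariance and controllability from \cref{assump:safeSet} (just as in the paragraph preceding~\eqref{eq:JK_espilon_formula} in the proof of \cref{lem:contra_eps}) then propagate this bound to pessimistic trajectories starting from any $\state'_0 \in \safeInit{\nfin}$ over the shorter horizon $\Horizon$.

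Next, I would argue by contradiction. Suppose there exists $\pi^b \in \optiSet[]{\nfin}(\safeInit{\nfin}; \Horizon) \setminus \pessiSet[,\epspessi]{\nfin}(\safeInit{\nfin}; \Horizon)$. By path-connectedness of the relevant policy sets and continuity of the trajectory/constraint map in $\pi$, I can choose $\pi^b$ so that $\|\pi^b(\state) - \pessiPolicy(\state)\| \leq \epsclose$ for some $\pessiPolicy \in \pessiSet[,\epspessi]{\nfin}(\safeInit{\nfin}; \Horizon)$ and an arbitrarily small $\epsclose>0$. Applying \cref{lem:contra_eps} (adapted to the free initial condition in $\safeInit{\nfin}$ by absorbing the controllability prefix $\hat{\pi}$ of \cref{lem:horizon_diff_pessi}) gives, for every pair $\dynVec_1, \dynVec_2 \in \dynSet[\nfin]$ and every noise sequence, $\|\state^s_h - \state^o_h\| \leq \epsilon/2$ and $\|\coninput^s_h - \coninput^o_h\| \leq \epsilon/2$ for all $h \in \Intrange{0}{\Horizon}$, where $\state^o_h$ is the optimistic trajectory under the dynamics $\dynVec^o$ certifying $\pi^b \in \optiSet[]{\nfin}$.

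Finally, I would combine the $\epsilon$-margin from the optimistic set definition~\eqref{eq:def_opti_set} with this $\epsilon/2$ deviation using the Minkowski identity $(\X \ominus \ball{\epsilon}) \oplus \ball{\epsilon/2} \subseteq \X \ominus \ball{\epsilon/2}$, and analogously for $\inputSpace$ and $\safeInit{\nfin}$. Consequently, for every $\dynVec^s \in \dynSet[\nfin]$ and every noise realization, $\state^s_h \in \X \ominus \ball{\epsilon/2}$, $\coninput^s_h \in \inputSpace \ominus \ball{\epsilon/2}$, and $\state^s_\Horizon \in \safeInit{\nfin} \ominus \ball{\epsilon/2}$. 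Picking $\epspessi \leq \epsilon/2$ yields $\pi^b \in \pessiSet[,\epspessi]{\nfin}(\safeInit{\nfin}; \Horizon)$, contradicting the choice of $\pi^b$.

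\textbf{Main obstacle.} The delicate part will be bookkeeping the tolerance hierarchy so that all constants are simultaneously feasible: the constraint $\epspessi \leq \epsilon/2$ introduced here must be compatible with the lower bounds on $\epspessi$ demanded by~\eqref{eq:epsilon_lower_bound_convergence},~\eqref{eq:eps_dash_formula}, and~\eqref{eq:pessi_eps_dash_shrink}, as well as with the choice of $\epsterminal$ in~\eqref{eq:JK_formula_epsilon_d} and $\epsguarantee$ in~\eqref{eq:JK_espilon_formula}--\eqref{eq:JK_bound_epsilon_2}. All of these lower bounds scale linearly in $\epscollect + \noisebound$, so a valid $\epspessi > 0$ exists whenever $\epsguarantee$ is chosen large enough relative to $\noisebound$ (as already required in \cref{thm:maximum_dynamics_exploration}); collecting these inequalities into a single consistent window for $\epspessi$ is the only non-routine step beyond the contradiction argument above.
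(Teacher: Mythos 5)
Your proposal is correct and follows essentially the same route as the paper: termination plus \cref{lem:sample_complexity} gives the uniform bound $\cwidth[\nfin]<\epsterminal$, \cref{lem:horizon_diff_pessi} and the invariance/controllability of \cref{assump:safeSet} transfer it to the $\epspessi$-tightened pessimistic set with free initial condition over horizon $\Horizon$, and the contradiction argument of \cref{lem:uniform_exploration}/\cref{lem:contra_eps} closes the inclusion via the Minkowski set algebra. The only cosmetic difference is the tolerance bookkeeping: you impose the deviation bound $\epsguarantee/2$ together with $\epspessi\leq\epsguarantee/2$, whereas the paper directly requires the gap condition $\epsguarantee-\epspessi>\max\{1,\LipPi\}\,2\sum_{j=0}^{\Horizon-1}(\LipWidthPi+\LipDyn)^j(\epsterminal+\LipWidthPi\LipDyn_{\Horizon-j-1}\noisebound)$ in \eqref{eq:esp_diff_formula}, which is slightly less conservative but equally compatible with the lower bounds on $\epspessi$.
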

\begin{proof}
Given the sampling rule in \cref{alg:rew_maxim} and using \cref{lem:sample_complexity}, we know $\exists \nfin \leq \n:$ 
$\cwidth[\nfin](\state_\h, \pessiPolicy(\state_\h)) < \epsterminal, \forall \h \in \Intrange{0}{\totHorizon-1}, \dynVec \in \dynSet[\nfin], \pessiPolicy \in \pessiSet[]{\nfin}(\state_s, \totHorizon)$. 
Note that for all $\delta\h \in \Intrange{0}{\Delta \Horizon}, \pi \in \pessiSet[]{\nfin}(\state_s;\Horizon + \delta \h),~\exists \pi_f \in \Pi_{\Delta \Horizon - \delta \h}: [\pi,\pi_f] \in \pessiSet[]{\nfin}(\state_s;\totHorizon)$ using the invariance property of \cref{assump:safeSet}. Hence, 
$\cwidth[\nfin](\state_\h, \pessiPolicy(\state_\h)) < \epsterminal, \forall \h \in \Intrange{0}{\Horizon + \delta \h-1}, \dynVec \in \dynSet[\nfin], \pessiPolicy \in \pessiSet[]{\nfin}(\state_s, \Horizon + \delta \h)$.
Using \cref{lem:horizon_diff_pessi}, this implies
\begin{align}
    \cwidth[\nfin](\state_\h, \pi(\state_\h)) < \epsterminal, \forall \h \in \Intrange{0}{\Horizon-1}, \dynVec \in \dynSet[\n], \pi \in \pessiSet[,\epspessi]{\nfin}(\safeInit{\n}, \Horizon). 
\end{align}
Using the above statement, the proof follows similar to \cref{lem:uniform_exploration}, where we use
\begin{align}
\label{eq:esp_diff_formula}
\epsguarantee -\epspessi > \max\{1,\LipPi\} 2 \sum_{j=0}^{ \Horizon-1}(\LipWidthPi+\LipDyn)^j (\epsterminal+ \LipWidthPi \LipDyn_{\Horizon-j-1}\noisebound), 
 \end{align}    
see also \cref{eq:JK_espilon_formula} in \cref{lem:contra_eps}.
\end{proof}
\vspace{-1em}


\begin{lemma}[Guaranteed termination]\label{lem:guaranteed_termination}
~\cref{assump:q_RKHS,assump:safeSet,assump:lipschitz,assump:sublinear} hold. Consider $\n^\star$ as defined in \cref{thm:maximum_dynamics_exploration}. \cref{alg:rew_maxim} terminates in $\nfin \leq \n^\star$ iterations.
\end{lemma}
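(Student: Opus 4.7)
The plan is to combine the sample complexity bound of \cref{lem:sample_complexity} with the observation that once the model uncertainty is uniformly small on reachable trajectories, the termination criterion in \cref{lin:MPC:go_termination} is automatically satisfied.

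First, I would show that every iteration in which the algorithm has \emph{not} terminated collects at least one new measurement with $\cwidth[\n-1]\geq\epscollect$. In the $\nu=0$ case at \cref{lin:moveTOsampling}, the optimized policy $\pessiPolicy[e]$ visits some planned state with $\cwidth[\n](\state_\h,\coninput_\h)\geq\epsterminal$ for some $\dynVec\in\dynSet[\n]$; applying the trajectory Lipschitz argument of \cref{prop:suffi_info_epscollect} (with $\epsterminal$ chosen as in \eqref{eq:JK_formula_epsilon_d}) to the true-system rollout under $\pessiPolicy[e]$ guarantees a collected sample with $\cwidth[\n-1]\geq\epscollect$. In the $\nu>0$ case at \cref{lin:return}, Problem~\eqref{eq:slack_there_exists_goal} is re-solved with $\nu=0$ as a hard constraint; feasibility in this second subproblem must hold whenever termination has not yet fired, since the contrapositive of the third step below would otherwise force termination.

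Second, \cref{lem:sample_complexity} bounds the total number of $\cwidth\geq\epscollect$ collections by $\n^\star$. Combined with the first step, this yields that either the algorithm has terminated at some iteration $\n\leq\n^\star$, or after at most $\n^\star$ iterations $\cwidth[\n](\state_\h,\coninput_\h)<\epsterminal$ uniformly over all states reachable under any pessimistically safe policy starting from any $\state_s\in\safeInit{\n}$.

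Third, I would show that this uniform bound triggers the termination criterion~\eqref{eq:termination_criteria}. Following (the proof of) \cref{lem:opti_in_pessi}, $\cwidth[\n]<\epsterminal$ on the pessimistic reachable set implies $\optiSet[]{\n}(\safeInit{\n};\Horizon)\subseteq\pessiSet[,\epspessi]{\n}(\safeInit{\n};\Horizon)$, so the optimistic optimizer $\optiPolicy$ with initial state $\state^o$ is admissible for the pessimistic problem \eqref{eq:opti_pessi_obj}(right), giving $\Jobj[\mathrm{p}]{\state^{\mathrm{p}},\bm{\mu}_\n}{\pessiPolicy}\geq\Jobj[\mathrm{p}]{\state^o,\bm{\mu}_\n}{\optiPolicy}$. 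The remaining gap
\[
\Jobj[]{\state^o,\dynVec^o}{\optiPolicy}-\Jobj[\mathrm{p}]{\state^o,\bm{\mu}_\n}{\optiPolicy}
\]
splits into (a) a dynamics-mismatch term $\Jobj[]{\state^o,\dynVec^o}{\optiPolicy}-\Jobj[]{\state^o,\bm{\mu}_\n}{\optiPolicy}$, controlled via \cref{lem:deviation_with_uncertainity} together with $\LiprewPi$-Lipschitz rewards, and (b) the pessimism correction $\E[\LiprewPi\sum_{\h=0}^{\Horizon-1}\sum_{i=0}^{\h}L^{i}\cwidth[\n](\state_i,\optiPolicy[i](\state_i))]$ by definition of $J^{\mathrm{p}}$. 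Under the uniform bound $\cwidth[\n]<\epsterminal$, each of these is dominated by $\LiprewPi\sum_{\h=0}^{\Horizon-1}\LipDyn_\h\epsterminal$, and a careful accounting reproduces the constant $3\LiprewPi\sum_{\h=0}^{\Horizon-1}\LipDyn_\h\epsterminal = K\epsguarantee$ on the right-hand side of~\eqref{eq:termination_criteria}.

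The main obstacle is the bookkeeping in the third step: the optimistic and pessimistic problems optimize over different initial conditions and different dynamics, so one cannot directly compare their optima. The inclusion from \cref{lem:opti_in_pessi} is the key tool, letting us evaluate the pessimistic objective along the optimistic trajectory; after that, the Lipschitz reward estimate and the cumulative-uncertainty correction built into $J^{\mathrm{p}}$ close the argument and match exactly the constant in~\eqref{eq:termination_criteria}.
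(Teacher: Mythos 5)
Your proposal follows essentially the same route as the paper's proof: (i) each non-terminating iteration yields an informative sample via the \cref{prop:suffi_info_epscollect}-style Lipschitz argument (with the $\nu>0$ branch handled by the same contrapositive the paper uses for feasibility of the hard-constrained problem), (ii) \cref{lem:sample_complexity} caps the number of such iterations at $\n^\star$, and (iii) \cref{lem:opti_in_pessi} makes $(\state^{\mathrm{o}},\optiPolicy)$ a feasible candidate for the pessimistic problem, after which the gap is split into the dynamics-mismatch term and the pessimism correction and bounded by $\cwidth<\epsterminal$ plus Lipschitz transfer — exactly the paper's chain of inequalities. The only gloss is that the "careful accounting" also needs the Lipschitz propagation bounds of \eqref{eq:process_traj_diff}--\eqref{eq:traj_diff_noise_lip_width} to move the $\epsterminal$ bound from the noise-free true trajectory to the noisy mean-dynamics rollout, but this is a detail of the same argument rather than a different approach.
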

\noindent \textbf{Proof} Analogous to~\cref{prop:suffi_info_epscollect}, \cref{alg:rew_maxim} ensures that with every iteration the collected data has at least one informative measurement, i.e., $\cwidth[\n-1,i](\state_\h, \coninput[\h]) \geq \epscollect$. 
Following \cref{lem:sample_complexity}, this implies that there exists $\nfin \leq \n^\star$ such that $\cwidth[\nfin, i](\state_\h, \coninput[\h]) < \epsterminal, \forall i \in \Intrange{1}{\statedim}, \h \in \Intrange{0}{\totHorizon-1}, \dynVec \in \dynSet[\n], \pi \in \pessiSet[]{\nfin}(\state(k), \totHorizon)$ where $\state_{\h+1} = \dynVec(\state_\h,\coninput_\h), \coninput_\h = \pi_\h(\state_\h)$ and $\state_0=\state(k)$.

In the following, we show that this ensures satisfaction of the termination criteria~\eqref{eq:termination_criteria}. 
From~\cref{lem:opti_in_pessi}, we know that $\optiPolicy \in \optiSet[]{\nfin}(\safeInit{\n};\Horizon)  \subseteq \pessiSet[,\epspessi]{\nfin}(\safeInit{\n};\Horizon)$. 
Since $\optiPolicy$ and $\stateOpti$ are feasible candidate for pessimistic problem~\eqref{eq:pessi_obj} we get,
\begin{align}
\Jobj[\mathrm{p}]{\statePessi, \bm{\mu}_{\n} }{\pessiPolicy} &\geq \Jobj[\mathrm{p}]{\state^{\mathrm{o}}, \bm{\mu}_{\n} }{\optiPolicy} \\
&\labelrel{=}{step:pessi_obj_def}\Jobj[]{\state^{\mathrm{o}}, \bm{\mu}_{\n} }{\optiPolicy} - \E_{\noise} \left[\LiprewPi \sum_{\h=0}^{\Horizon-1} \sum_{i=0}^{\h-1} \LipDyn^{i}  \cwidth[\n-1](\state_{i}, \optiPolicy[i](\state_{i}))\right] \\
& \labelrel{\geq}{step:noise_to_no_noise} \Jobj[]{\state^{\mathrm{o}}, \bm{\mu}_{\n} }{\optiPolicy} - \E_{\noise} \left[\LiprewPi \sum_{\h=0}^{\Horizon-1} \sum_{i=0}^{\h-1} \LipDyn^{i} ( \LipWidthPi \| \state^\star_{i} - \state_{i} \| +\epsterminal )\right]\\ 
&\labelrel{\geq}{step:noise_diff_bound} \Jobj[]{\state^{\mathrm{o}}, \bm{\mu}_{\n} }{\optiPolicy} - \E_{\noise} \left[\LiprewPi \sum_{\h=0}^{\Horizon-1} \sum_{i=0}^{\h-1} \LipDyn^{i} ( \LipWidthPi \LipDyn_{w,i} (\epsterminal + \noisebound) + \epsterminal )\right]\\ 
&\geq \Jobj[]{\state^{\mathrm{o}}, \dynVec^\mathrm{o}}{\optiPolicy} - \LiprewPi \sum_{\h=0}^{\Horizon-1} \sum_{i=0}^{\h-1} \LipDyn^{i} ( \LipWidthPi \LipDyn_{w,i} (\epsterminal + \noisebound) + \epsterminal ) - 2 \LiprewPi \sum_{\h=0}^{\Horizon-1} \LipDyn_{w,\h} (\epsterminal + \noisebound) \label{eq:termination_proof_last_Step}
\end{align}
Here \eqref{step:pessi_obj_def} follows from the definition of pessimistic objective \eqref{eq:pessi_obj}. $x_i$, and $x^\star_i$ denote the trajectory under with mean and true dynamics while $x_i$ is perturbed noise whereas $x^\star_h$ is noise-free and $\cwidth[\nfin](\state^\star_i, \optiPolicy[i](\state^\star_i)) < \epsterminal$. Step \eqref{step:noise_diff_bound} follows from the following. Consider,
\begin{align}
    \|x^\star_{\h+1} - \state_{\h+1}\| &\leq \| \dyntrueVec(\state^{\star}_{\h}, \optiPolicy[\h](\state^{\star}_{\h})) - \bm{\mu}_{\nfin}(\state_{\h}, \optiPolicy[\h](\state_{\h})) - \eta \| \\
    &\leq \noisebound + \| \dyntrueVec(\state^{\star}_{\h}, \optiPolicy[\h](\state^{\star}_{\h})) - \dyntrueVec(\state_{\h}, \optiPolicy[\h](\state_{\h}))+ \dyntrueVec(\state_{\h}, \optiPolicy[\h](\state_{\h})) - \bm{\mu}_{\nfin}(\state_{\h}, \optiPolicy[\h](\state_{\h})) \| \nonumber \\
    &\leq (\LipDyn + \LipWidthPi)\|x^\star_{\h} - \state_{\h}\| + \epsterminal + \noisebound \label{eq:process_traj_diff}\\ 
    &\leq \LipDyn_{w,\h+1} (\epsterminal + \noisebound) \label{eq:traj_diff_noise_lip_width}
\end{align}
The step in \eqref{eq:process_traj_diff} follows using $\cwidth[\nfin](\stateOpti_\h, \pi^o_\h(\stateOpti_\h)) < \LipWidthPi\|x^\star_{\h} - \stateOpti_{\h}\| + \cwidth[\nfin](x^\star_{\h}, \pi^o_\h(x^\star_{\h}))$ and then $\cwidth[\nfin](x^\star_{\h}, \pi^o_\h(x^\star_{\h})) < \epsterminal$.

Finally, the last inequality in \eqref{eq:termination_proof_last_Step} follows from the following,
\begin{align}
|\Jobj[]{\state^{\mathrm{o}}, \bm{\mu}_{\n} }{\optiPolicy} - \Jobj[]{\state^{\mathrm{o}}, \dynVec^o }{\optiPolicy} | &\leq \sum_{\h=0}^{\Horizon-1} |r(x_\h, \optiPolicy(x_\h)) - r(\stateOpti_\h, \optiPolicy(\stateOpti_\h))| \\
&\leq \LiprewPi \sum_{\h=0}^{\Horizon-1} \|x_\h - \stateOpti_\h\| \\
&\leq \LiprewPi \sum_{\h=0}^{\Horizon-1} \|x_\h - x^\star_\h\| + \|x^\star_\h - \stateOpti_\h\| \label{eq:rew_diff_conv}
\end{align}
where $\stateOpti_\h$ denotes the trajectory under optimistic dynamics,  perturbed by the same noise sequence as $\state_\h$. 
The trajectory difference bound from \eqref{eq:traj_diff_noise_lip_width} also holds for the optimistic dynamics, since $\dynVec^{\mathrm{o}} \in \dynSet[\nfin]$. 
Hence, on substituting \eqref{eq:traj_diff_noise_lip_width} in \eqref{eq:rew_diff_conv}, we get,
\begin{align*}
    |\Jobj[]{\state^{\mathrm{o}}, \bm{\mu}_{\n} }{\optiPolicy} - \Jobj[]{\state^{\mathrm{o}}, \dynVec^o }{\optiPolicy} | \leq 2 \LiprewPi \sum_{\h=0}^{\Horizon-1} \LipDyn_{w,\h} (\epsterminal + \noisebound). \qedhere
\end{align*}

\restaterewmaximization*
\begin{proof}  
\emph{Feasibility:}
First, we prove the feasibility of~\cref{alg:rew_maxim}. The control invariance property of \cref{assump:safeSet} ensures that $[\pi_\mathrm{f}, \hdots, \pi_\mathrm{f}] \in \pessiSet[,\epspessi]{0}(\safeInit{0}, \Horizon)$. Since the $\pessiSet[,\epspessi]{0}(\safeInit{0}, \Horizon)$ is not empty, this guarantees feasibility of Problem~\eqref{eq:pessi_obj} in ~\cref{lin:solve_pessi_prob}. 
The feasibility of the optimistic Problem~\eqref{eq:opti_obj} is similarly ensured by the control invariance property with tightening of the safe set with $\epsilon$.  
%

Feasibility of Problem~\eqref{eq:slack_there_exists_goal} in \cref{lin:solve_relax_prob} at any $\n\geq1$
is ensured using standard MPC arguments for recursive feasibility~\citep{rawlings2017model}. Without loss of generality, let $\pi^{\mathrm{old}}$ be a feasible policy at any $\n$ and $\h' \in \Intrange{0}{\totHorizon-1}$ be the step where we obtain an informative location ($|\mathcal{D}_c|\geq1$) under $\pi^{\mathrm{old}}$. We build a feasible candidate sequence at iteration $\n+1$ with $\pi \in \Pi_{\totHorizon}$ and $\state_\h \in \X$ be state under any $\dynVec \in \dynSet[\n+1]$, 
by shifting the previous feasible solution $\pi_{\h} = \pi^{\mathrm{old}}_{\h' + \h}, 
\forall \h \in 
\Intrange{0}{\totHorizon - \h' - 1}$
and appending $\pi_{\h} = \pi_f, \forall \h \in \Intrange{\totHorizon - \h'}{\totHorizon -1}$,  from \cref{assump:safeSet}. 
Feasibility follows from $\state_\h \in \X , \forall \h \in \Intrange{0}{\totHorizon-\h'-1}$ where $\state_\h$ is propagated under any $\dynVec \in \dynSet[\n+1] \subseteq \dynSet[\n]$ (nestedness \cref{eq:dyn_set})
with the previous feasible policy; and
$\state_\h \in \safeInit{\n} \subseteq \safeInit{\n+1} \subseteq \X, \forall \h \in \Intrange{\totHorizon-\h'}{\totHorizon-1}$ respectively due
to “invariance” and “monotonicity” of the safe set $\safeInit{\n}$ (\cref{assump:safeSet}). 
Moreover, note that the
constraint $\cwidth[\n](\state_\h,\coninput_\h) \geq \epsterminal - \nu$ in Problem~\eqref{eq:slack_there_exists_goal} is always feasible by
choosing $\nu$ sufficiently large. 

At $\ki=0$, $\pessiSet[]{0}(\state(\ki);\totHorizon)$ is non-empty due to the control invariance property and $\state(0) \in \safeInit{0}$. From then on, for any further $\n\geq1$, the solution of $\n-1$ is still feasible since $\safeInit{\n} \subseteq \safeInit{\n+1}$ by \cref{assump:safeSet}, $\dynSet[\n] \supseteq \dynSet[\n+1]$ by construction~\eqref{eq:dyn_set} and the control invariance property ensures that the system can stay in the safe set.
In \cref{lin:return}, Problem~\eqref{eq:slack_there_exists_goal} is feasible with $\nu=0$. Otherwise it would hold that $\cwidth[\nfin](\state_\h,\coninput_\h) < \epsterminal,~~\forall \h \in \Intrange{0}{\totHorizon-1}, \dynVec \in \dynSet[\nfin], \pessiPolicy \in \pessiSet[]{\totHorizon}(\state_s)$, with $\state_{\h+1} = \dynVec(\state_\h,\coninput_\h), \coninput_\h = \pi_\h(\state_\h)$, which implies $ \optiSet[]{\n}(\safeInit{\n}; \Horizon) \subseteq \pessiSet[,\epspessi]{\n}(\safeInit{\n}; \Horizon)$ by \cref{lem:horizon_diff_pessi,lem:opti_in_pessi}. This would already ensure that the termination condition of \cref{alg:rew_maxim} is satisfied and thus the algorithm will reach~\cref{lin:return} only if it is feasible. 
Hence, all optimization problems are feasible in~\cref{alg:rew_maxim} for all $\n\geq 0$.

\emph{Safety:}
Analogous to \cref{thm:maximum_dynamics_exploration}, safety is ensured since \cref{alg:rew_maxim} applies the policy $\pessiPolicy \in \pessiSet[]{\n}(\state(\ki);\totHorizon), \forall \n \geq 0$ which, by definition, ensures constraint satisfaction $\forall \dynVec \in \dynSet[\n]$. By \cref{assump:q_RKHS} and \cref{lem:beta}, since the unknown system $\dyntrueVec \in \dynSet[\n]$, thereby guaranteeing constraint satisfaction for the unknown system~\eqref{eq:system_dyn} as well.

\emph{Finite termination:}
\cref{lem:guaranteed_termination} ensures that the defined termination criterion is satisfied in $\nfin\leq \n^\star$ iterations.

\emph{Close to optimal performance:} 
First, we show that the agent satisfies constraints after following the returned policy $[\hat{\pi}, \pessiPolicy]$ in \cref{lem:safety_with_return_policy}. Then 
\cref{lem:optimality_with_return_policy} ensures that once the termination criterion is satisfied  (early termination not necessarily uniformly reducing uncertainty), then the returned policy satisfies~\cref {obj:reward_maximization}. 
\end{proof}



\vspace{-2em}
\section{Lower bound on sample complexity}
\label{apxsex:lowerbound}

In the following lemma, we adapt the kernel-based lower bounds on sample complexity for the Gaussian process bandit optimization setting derived in \citet{scarlett2017lower} to have bounded noise.
\begin{lemma} \label{coro:scarlett_lb_prob} Fix $\epsilon \in (0, \frac{1}{2})$, $\delta \in (0, \frac{1}{8})$, $B>0$ and $\nfin \in \R$. 
Consider the  Gaussian process bandit optimization setting for any \mbox{$f \in \RKHS$} with \mbox{$\|f\|_{\RKHS} \leq B $} and the measurement noise to be a Gaussian distribution, i.e. $\noise \sim \N(0, \sigma^2)$. Then with probability at least $1-\delta$ the noise is bounded, i.e.,  
\begin{align}
\probability{|\noise_\n| \leq \noisebound \coloneqq \sigma \left( 2\log\left(\frac{ \nfin}{\delta}\right) \right)^{1/2}, \forall \n \leq \nfin. } \geq 1-\delta \label{eq:bounded_noise_proablity}   
\end{align}
Given that  \eqref{eq:bounded_noise_proablity} holds, suppose that there exists an algorithm that with probability $1-\delta$ achieves a simple regret $f(x^\star)-f(x_\nfin)\leq \epsilon'$ after rounds $\nfin$. Let $ \epsilon = \epsilon'/r > 0$ where $r = (1-8\delta)/(1-2\delta)$ and provided that $\epsilon/B$ is sufficiently small, the following holds:
\begin{itemize}
    \item For the squared exponential kernel, it is necessary that
    \begin{align}
        \nfin =  \Omega\left(\frac{\sigma^2}{\epsilon^2} \left ( \log \frac{B}{\epsilon}\right)^{d/2} \right). \label{eq:sqexp_lb}
    \end{align}
    \item For the Mat\'ern kernel, it is necessary that
    \begin{align}
        \nfin = \Omega\left(\frac{\sigma^2}{\epsilon^2} \left (\frac{B}{\epsilon}\right)^{d/\nu} \right). \label{eq:matern_lb}
    \end{align}
\end{itemize} 
\end{lemma}
\begin{proof} 
The overall idea is as follows: Consider an algorithm designed to operate under truncated Gaussian noise and apply it to the Gaussian process bandit optimization setting with Gaussian noise.  Any instance where the noise exceeds the truncation bound is treated as a failure event, and we remove them in probability.  

For this, we first show that Gaussian noise $\noisebound_n$ is bounded by $\noisebound$ with probability at least $1-\delta$. 
Note that for a normally distributed random variable 1-D $X \sim \N(\mu, \sigma^2)$, we have \citep{beta_srinivas},
\begin{align}
    \probability{|X-\mu|>\noisebound} <  e^{-\frac{\noisebound^2}{2\sigma^2}}
\end{align}
for any $\noisebound>0$. Now, consider $\nfin$ i.i.d. random variables $\noise_1, \noise_2, \hdots \noise_{\nfin} \sim \N(0, \sigma^2)$, then it holds that
\begin{align}
    \probability{\forall n \leq \nfin, |\noise_n|<\noisebound} > (1 -  e^{-\frac{\noisebound^2}{2\sigma^2}})^\nfin 
    &\labelrel\geq{step:binomial} 1 -  \nfin e^{-\frac{\noisebound^2}{2\sigma^2}} \\
 \implies   \probability{\exists n\leq \nfin : |\noisebound_n|>\noisebound} &< \nfin e^{-\frac{\noisebound^2}{2\sigma^2}} \label{eq:prob_lb}
\end{align}
Step~\eqref{step:binomial} follows from the binomial inequality $(1-x)^{n}\geq 1-nx$ for $n\geq 1$ and $x\leq 1$. Let
\begin{align*}
   \delta &=   \nfin e^{-\frac{\noisebound^2}{2\sigma^2}} 
   \implies \log(\frac{\nfin}{\delta}) = \frac{\noisebound^2}{2\sigma^2} 
   \implies \noisebound = \left( 2\sigma^2 \log\left(\frac{\nfin}{\delta}\right) \right)^{1/2} \numberthis
\end{align*}
which together with \eqref{eq:prob_lb} implies that 
\begin{align}
        \probability{|\noise_n| \leq \noisebound, \forall \n \leq \nfin} \geq 1-\delta. \label{eq:noise_small_event}
\end{align}

Using \citet[Theorem 1] {scarlett2017lower} with extension to high probability simple regret explained in \citet[Section D]{scarlett2017lower} using $\nfin$ samples it holds that,
\begin{align}
    \probability{f(x^\star)-f(x_\nfin)\leq \epsilon'} \leq 1 - \frac{1-r}{4-r} = 1-2\delta,\label{eq:regret_small_event}
\end{align}
where $r \in (0,1)$ and 
thus $\delta \in (0,1/8)$.

\noindent Given the bounded noise realizations, we want to bound the probability of the regret being small. Using $A$ as the event described in \eqref{eq:noise_small_event} and $B$ as in  \eqref{eq:regret_small_event}, in particular, we want to bound $\probability{B | A}$. Note that $\probability{B|A} = \probability{A \cap B} / \probability{A} \leq  \probability{B} / \probability{A}$ and therefore
\begin{align*}
   \probability{B | A} &\leq \frac{1-2\delta}{1-\delta} \labelrel{\leq}{step:up_bound} 1-\delta,
\end{align*}
where Step \eqref{step:up_bound} follows by concavity of $\frac{1-2\delta}{1-\delta}$ and being tight at $\delta=0$.
\end{proof}

This lemma shows that given the noise is bounded, achieving simple regret less than $\epsilon'$ requires at least $\nfin$ samples given in \cref{eq:sqexp_lb,eq:matern_lb}. 
The following theorem leverages it to present the sample complexity lower bound for the dynamics exploration problem.


\restatelowerbound*
\begin{proof} 
We reduce our problem to the GP bandit optimization setting described in \cref{coro:scarlett_lb_prob}.

Given a GP bandit optimization problem with any unknown \mbox{$f \in \RKHS$} with \mbox{$\|f\|_{\RKHS} \leq B $}. 
At each round $\n$, we observe $y = f(u(\n)) + \noise_\n$ where noise follows a truncated Gaussian distribution, i.e., $\noise_\n \sim \N(0, \sigma^2)$ 
and conditioned on the event $|\noise_\n| \leq \noisebound$ as per \cref{coro:scarlett_lb_prob}. This is also a conditionally $\sigma$-sub-Gaussian distribution.
The goal is to achieve a simple regret $f(u^\star)-f(u(\nfin))\leq \epsilon$ after rounds $\nfin$ with probability $1-\delta$. This can be converted to a dynamics exploration setup as:

\mypar{An instance} Consider the horizon $\Horizon = 2$, $r(\state) = \state$, the dynamics $\state(\n+1) = f(\coninput(\n)) +  \noise_n$, where the noise is truncated Gaussian distribution 
$|\noise_\n| \leq \noisebound$ 
and $\safeInit{\n} = \X$, where $\X$ is a compact and chosen large enough set such that the state constraints are not active.
For contradiction, suppose that the algorithm guarantees \eqref{eq:lb_objective} in $\nfin < \Omega\left(\frac{\sigma^2}{\epsilon^2} \left ( \log \frac{B}{\epsilon}\right)^{d/2} \right)$ samples. This implies,
\begin{align*}
      \Jobj[]{\state^{\mathrm{p}}, f }{\pessiPolicy} &=  \E \left[ \sum\nolimits_{\h = 0}^{\Horizon-1} r(\state_\h, \coninput_\h) | \state_0 = \state^{\mathrm{p}}, \pessiPolicy \right] \\
     &= \E \left[ r(\state_0, \coninput_0) + r(\state_1, \coninput_1) | \state_0 = \state^{\mathrm{p}}, \pessiPolicy \right] \\
     &= \state^{\mathrm{p}} +  f(\coninput({\nfin})) 
\end{align*}
where $u(\nfin) = \pessiPolicy(\state^{\mathrm{p}})$. Now using \eqref{eq:lb_objective},
\begin{align}
    \state^{\mathrm{p}} +  f(\coninput({\nfin})) &\geq  \E \left[ \max_{x \in \X} x + f(\coninput_{\star}) + \eta \right] - \epsilon' \\
    &\geq \state^{\mathrm{p}} + f(\coninput_{\star})  -\epsilon' \label{eq:regret_reduction}
\end{align}
which implies $f(\coninput^{\star}) - f(\coninput({\nfin})) \leq \epsilon'$ with probability $1-\delta$. 
Therefore, this algorithm would achieve a simple regret $\leq \epsilon'$ for the GP bandit optimization setting in fewer than $\nfin$ samples, which contradicts \cref{coro:scarlett_lb_prob}. 
Hence, with the same probability $1-\delta$, the corresponding lower bound holds for the dynamics exploration problem.
\end{proof}
\vspace{-2em}
\section{Experiment}
\label{apxsec:experiments}
\vspace{-0.3em}
\looseness -1 
This section presents our experimental setup. We first discuss implementation details, and later explain each environment model, which includes an inverted pendulum, drone navigation, and car racing.


\looseness -1 
\mypar{Implementation details} Across different environments, we employ Gaussian processes or a distribution over a finite number of basis functions as probabilistic models (i.e., Gaussian processes with degenerate kernels) to capture the unknown system dynamics.
To obtain coverage over the dynamics set, we sample multiple dynamics functions from the GP posterior similar to \citet{prajapat2024towards}, as discussed in \cref{rem:implementation}. 
For solving the resulting non-linear finite-horizon optimization problem with sampled dynamics, we utilize the Sequential Quadratic Programming (SQP) algorithm proposed in \citet{prajapat2024towards}.
This optimization procedure yields a unique control input sequence that guarantees safety across all sampled dynamics models. 
According to~\citet{prajapat2025finite}, a finite number of GP samples and an appropriate tolerance yield a reliable over-approximation of the reachable set induced by the dynamics set $\dynSet[\n]$ and thus ensure constraint satisfaction. This requires constraint tightening based on the number of samples. In experiments, we ignore the tightening and directly calibrate the number of samples such that it provides a good approximation of the reachable set and ensures constraint satisfaction. Moreover, we directly optimize the open-loop action sequence instead of affine policies (\cref{rem:implementation}). In the  following, we provide details specific to each environment used in our experiments:

\begin{figure}
    \begin{subfigure}[t]{0.4\columnwidth}
      \centering
  	\scalebox{0.7}{
 
\tikzset{
pattern size/.store in=\mcSize, 
pattern size = 5pt,
pattern thickness/.store in=\mcThickness, 
pattern thickness = 0.3pt,
pattern radius/.store in=\mcRadius, 
pattern radius = 1pt}
\makeatletter
\pgfutil@ifundefined{pgf@pattern@name@_9tzw9tlx2 lines}{
\pgfdeclarepatternformonly[\mcThickness,\mcSize]{_9tzw9tlx2}
{\pgfqpoint{0pt}{0pt}}
{\pgfpoint{\mcSize+\mcThickness}{\mcSize+\mcThickness}}
{\pgfpoint{\mcSize}{\mcSize}}
{\pgfsetcolor{\tikz@pattern@color}
\pgfsetlinewidth{\mcThickness}
\pgfpathmoveto{\pgfpointorigin}
\pgfpathlineto{\pgfpoint{\mcSize}{0}}
\pgfusepath{stroke}}}
\makeatother
\tikzset{every picture/.style={line width=0.75pt}} 

\begin{tikzpicture}[x=0.75pt,y=0.75pt,yscale=-1,xscale=1]

\draw    (407.67,106.33) -- (469.33,106.33) ;
\draw    (438.5,106.33) -- (496,181.67) ;
\draw  [fill={rgb, 255:red, 0; green, 0; blue, 0 }  ,fill opacity=1 ] (476,181.67) .. controls (476,170.62) and (484.95,161.67) .. (496,161.67) .. controls (507.05,161.67) and (516,170.62) .. (516,181.67) .. controls (516,192.71) and (507.05,201.67) .. (496,201.67) .. controls (484.95,201.67) and (476,192.71) .. (476,181.67) -- cycle ;
\draw    (438.83,106.17) -- (438.67,162.33) ;
\draw    (439.33,141.33) .. controls (449.06,146.69) and (450.74,142.08) .. (456.4,135.47) ;
\draw [shift={(458.33,133.33)}, rotate = 133.73] [fill={rgb, 255:red, 0; green, 0; blue, 0 }  ][line width=0.08]  [draw opacity=0] (10.72,-5.15) -- (0,0) -- (10.72,5.15) -- (7.12,0) -- cycle    ;
\draw    (525.33,160.33) -- (525.65,204.33) ;
\draw [shift={(525.67,206.33)}, rotate = 269.58] [color={rgb, 255:red, 0; green, 0; blue, 0 }  ][line width=0.75]    (10.93,-3.29) .. controls (6.95,-1.4) and (3.31,-0.3) .. (0,0) .. controls (3.31,0.3) and (6.95,1.4) .. (10.93,3.29)   ;
\draw    (505.74,119.14) -- (571.38,149.16) ;
\draw  [color={rgb, 255:red, 219; green, 21; blue, 21 }  ,draw opacity=1 ][pattern=_9tzw9tlx2,pattern size=3pt,pattern thickness=0.75pt,pattern radius=0pt, pattern color={rgb, 255:red, 0; green, 0; blue, 0}] (509,112.01) -- (574.64,142.03) -- (571.38,149.16) -- (505.74,119.14) -- cycle ;
\draw  [fill={rgb, 255:red, 126; green, 211; blue, 33 }  ,fill opacity=1 ] (543.33,48.33) -- (545.34,56.42) -- (554.75,55.59) -- (546.57,59.75) -- (550.39,67.33) -- (543.33,61.81) -- (536.28,67.33) -- (540.09,59.75) -- (531.92,55.59) -- (541.33,56.42) -- cycle ;

\draw (440.33,149.33) node [anchor=north west][inner sep=0.75pt]    {$\theta $};
\draw (529.67,195) node [anchor=north west][inner sep=0.75pt]    {$g$};
\draw (478,118.67) node [anchor=north west][inner sep=0.75pt]    {$l$};
\draw (458.67,183.33) node [anchor=north west][inner sep=0.75pt]    {$m$};

\end{tikzpicture}} 
    \caption{Inverted pendulum setup}
    \label{fig:pendulum_fig}
    \end{subfigure}
~
    \begin{subfigure}[t]{0.56\columnwidth}
  	\centering
  	\includegraphics[scale=0.6]{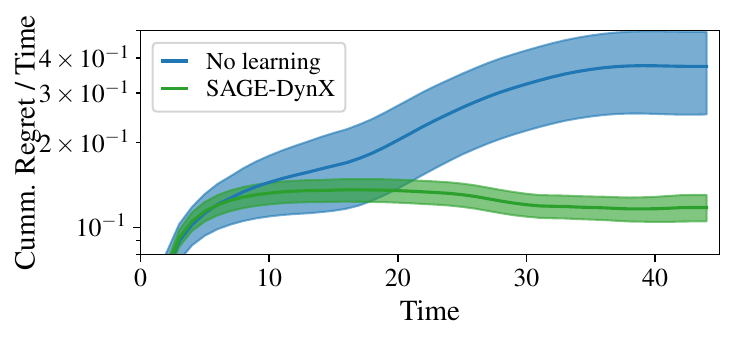}
    \caption{Regret comparison in pendulum}
    \label{fig:pendulum}
    \end{subfigure}
\caption{
\looseness -1
Illustration of the pendulum example. \cref{fig:pendulum_fig}: 
The pendulum begins at the bottom position and aims to reach closer to the green star, while being constrained by the physical red wall and a limit on the angular velocity.
\cref{fig:pendulum} shows the cumulative regret over time (averaged across runs). \dynExplor has significantly lower regret as compared to the no-learning baseline, implying that it more closely follows the optimal trajectory of the clairvoyant agent.
}
\vspace{-1.0em}
\end{figure}

\mypar{Inverted Pendulum} The pendulum with unknown dynamics, where the ground-truth evolution (available only to a clairvoyant agent) is described by the non-linear model:
\begin{align*}
    \begin{bmatrix}
        \theta(\ki+1) \\\\
        \omega(\ki+1)
    \end{bmatrix} =     \begin{bmatrix}
        \theta(\ki) + \omega(\ki) \Delta  + \eta_1(\ki)\\\\
        \omega(\ki) - g_a \sin(\theta(\ki)) \Delta / l + \alpha(\ki) \Delta + \eta_2(\ki)
    \end{bmatrix}. 
\end{align*}
\looseness -1 The state vector is $\state = [\theta, \omega]^\top$, 
where $\theta[\si{rad}]$ is the angular position, $\omega[\si{rad/s}]$ is the angular velocity, and the control input is the angular acceleration $\alpha[\si{rad/s^2}]$.
The dynamics are stochastic, corrupted with uniform noise, $[\eta_1, \eta_2]^\top$ bounded by $\noisebound = 10^{-3}$.
The pendulum has a length of $l = 1\si{m}$, the system is discretized with $\Delta=0.015 \si{s}$, and the acceleration due to gravity is $g_a=9.81 \si{m/s^2}$. 
The GP is trained using
$|\bar{\mathcal{D}}_0| = 27$ prior data points 
on an equally-spaced
$3 \times 3 \times 3$
mesh grid 
in the constraint set 
\mbox{$\mathcal{X}\times\mathcal{U} = \{ (\theta, \omega, \alpha) \in [-2.14,1.14] \times [-2.5,2.5] \times[-8,8] \}$}.
We use a squared exponential kernel with its hyperparameters optimized using the maximum likelihood estimate. 
The task is to control the pendulum from $x = (0,0)$ to a desired
state of $x_{\mathrm{des}}$ marked by a green start in ~\cref{fig:pendulum_fig}. 
The reward function is given by
\mbox{$r(\theta_\h, \alpha_\h) = 50 (x_\h - x_{\mathrm{des}})^2 + 0.1 \alpha_\h^2$}, 
$\forall \h \in \Intrange{0}{\totHorizon-1}$,
where $\totHorizon = 31$. 

To ensure a good coverage of the dynamics set $\dynSet$, we used 50 dynamics functions sampled from the GP. 
We construct a safe set (\cref{assump:safeSet}) around the origin by computing a common Lyapunov function using Jacobians (around the origin) of the dynamics sampled from the set $\dynSet[\n]$, as done in~\citet{prajapat2025finite}.
We recursively solve a pessimistic problem, which finds a unique common control sequence that keeps all the dynamics safe and returns them to the safe set. In line with \dynExplor, instead of executing the entire plan (return back), we replan after collecting every measurement, which are collected after every five time steps.

\looseness -1
\cref{fig:pendulum} compares the performance \dynExplor with a no-learning algorithm, which does not actively move to informative states and does not incorporate online measurements to update the model. 
We plot cumulative regret over time, where regret is computed with the position difference from the clairvoyant agent at any time step. 
The no-learning baseline, due to the high model uncertainty, avoids using high angular velocity (to be safe) and deviates significantly from the optimal behaviour. In contrast, \dynExplor gathers informative data and then progressively aligns more closely to the optimal behaviour.


\begin{figure}
      \centering
  	\includegraphics[scale=0.85]{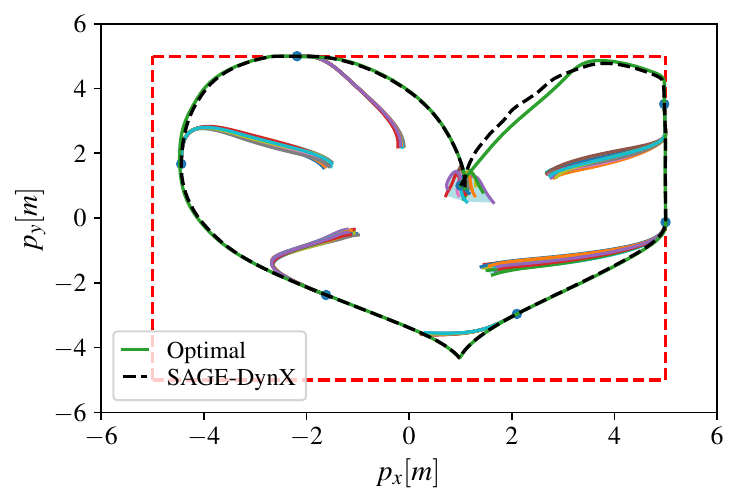}

\caption{
\looseness -1
Demonstration of the \dynExplor algorithm in the drone navigation environment. 
The drone begins at the initial position (1,1) to follow a heart-shaped reference trajectory while satisfying state constraints indicated by dashed red lines. The green line represents the optimal trajectory computed by a clairvoyant agent (known system dynamics).  
The dashed black line shows the agent's trajectory using \dynExplor. 
The thin multicolored lines illustrate predicted trajectories using different sampled dynamics starting from the drone's position at different time steps, highlighted by cyan dots. \dynExplor initially deviates from the optimal behavior, gathers informative data, and quickly converges towards close to optimal performance.
}
\label{fig:drone_setup}
\vspace{-1.0em}
\end{figure}

\mypar{Drone Navigation} We consider a nonlinear drone system described by the following continuous-time dynamics~\citep{singh2023robust,sasfi2023robust}:
\begin{align*}
\begin{bmatrix}
\dot{p}_x \\\\
\dot{p}_y \\\\
\dot{\phi} \\\\
\dot{v}_x \\\\
\dot{v}_y \\\\
\ddot{\phi}
\end{bmatrix}
=
\begin{bmatrix}
v_x \cos(\phi) - v_y \sin(\phi) \\\\
v_x \sin(\phi) + v_y \cos(\phi) \\\\
\dot{\phi} \\\\
v_y \dot{\phi} - g \sin(\phi) \\\\
- v_x \dot{\phi} - g \cos(\phi) \\\\
0
\end{bmatrix}
+
\begin{bmatrix}
0 & 0 \\\\
0 & 0 \\\\
0 & 0 \\\\
0 & 0 \\\\
\frac{1}{m} & \frac{1}{m} \\\\
\frac{l}{J} & -\frac{l}{J} 
\end{bmatrix}
u
+
\begin{bmatrix}
0 \\\\
0 \\\\
0 \\\\
\cos(\phi) \\\\
-\sin(\phi) \\\\
0
\end{bmatrix}
d
\end{align*}
In this model, $p_x$ and $p_y$ denote the drone's horizontal and vertical positions, and $v_x$, $v_y$ represent the corresponding velocities expressed in the body-fixed frame. The variables $\phi$ and $\dot{\phi}$ refer to the drone's orientation angle and its angular velocity. The control vector $\mathbf{u} = [u_1, u_2]^\top$ consists of the thrust forces generated by the two propellers. The constant $d=0.1$ accounts for external wind disturbances. Remaining constants $g= 9.81~[\si{m/s^2}]$, $l= 0.25~[\si{m}]$, $J = 0.00383$ and  $m= 1 ~[\si{kg}]$ denote gravity, the distance from each propeller to the vehicle's center, moment and mass of the drone, respectively. We discretize the continuous-time drone dynamics described above using the Euler discretization method with the time step of $\Delta = 0.1~[\si{s}]$.

We model the dynamics using a finite set of basis functions, incorporating a total of 21 features with unknown parameters. To initialize the model, we use 2 prior data points along each state and input dimension as a mesh grid over the constraint set given by 
$\mathcal{X}\times\mathcal{U}=\{ (x,u) \in [-5,5]\times[-5,5]\times[-\pi,\pi]\times[-2,2]\times[-2,2]\times[-1,1]\times[-1,5]\times[-1,5] \}$.
The drone must learn the dynamics online while staying within these constraints. 
As shown in \cref{fig:drone_setup}, the drone starts at $(p_x,p_y)=(1,1)$
and the task requires tracking of a heart-shaped reference, which is incorporated using a time-varying reward function. The exact equation for generating a heart shape can be found in the submitted code. 
The entire reference is divided into 500 discrete steps, and the agent optimizes over rewards with a receding horizon (rewards based on a moving reference) of length $\totHorizon=31$ that advances one step per action. 
The pessimistic planning uses $15$ samples from the dynamic functions based on the updated model set $\dynSet[n]$. 
The safe set $\mathbb{X}_n$ is an ellipsoid centered around any state with velocity zero, ensuring that the pessimistic plan ends with low velocity.  
We do not observe any constraint violations in the experiment. 
We replan after each measurement, which is collected at every alternate time step. 

\mypar{Car Racing} We model the nonlinear car dynamics using a kinematic bicycle model~\citep{7225830} as follows:
\begin{align*}
    \begin{bmatrix}
        \xpos(\ki+1) \rule{0pt}{1.0em} \\\\ \ypos(\ki+1) \\\\ \theta(\ki+1) \\\\ v(\ki+1) \rule{0pt}{1.0em} 
    \end{bmatrix}  &= \begin{bmatrix}
        \xpos(\ki) + v(\ki) \cos(\theta(\ki) + \zeta_k) \Delta \\\\
        \ypos(\ki) + v(\ki) \sin(\theta(\ki) + \zeta_k) \Delta \\\\
        \theta(\ki) + v(\ki) \sin(\zeta_k) l_r^{-1} \Delta  \\\\
        v(\ki) + a(\ki) \Delta - c v^2(\ki)\Delta
    \end{bmatrix}
\end{align*}
where \mbox{$\zeta_k = \tan^{-1}\left(\frac{l_r}{l_f + l_r} \tan(\delta(k))\right)$} denotes the slip angle~$[\si{rad}]$. The state vector is defined as $\state = [\xpos, \ypos, \theta, v]^\top$, where $[\xpos, \ypos]^\top$ specifies the vehicle's position in a 2D Cartesian frame~$[\si{m}]$, $\theta~[\si{rad}]$ represents the vehicle's heading angle, and $v~[\si{m/s}]$ is the forward velocity. The control vector is given by \mbox{$\coninput = [\delta, a]^\top$}, where $\delta~[\si{rad}]$ is the steering input and $a~[\si{m/s^2}]$ is the applied longitudinal acceleration.
The term $c v^2(\ki)$ denotes the drag force with constant $c = 0.4167$ and the
system is discretized with $\Delta = 0.06~[\si{s}]$. 
The parameters \mbox{$l_f = 1.105~[\si{m}]$} and \mbox{$l_r = 1.738~[\si{m}]$} represent the distances from the vehicle’s center of mass to the front and rear axles, respectively.

We model vehicle dynamics using a finite set of basis functions, incorporating a total of nine features. To initialize the model, we use 64 prior data points, generated by sampling two values along each state and input dimension. These data points are concentrated near the initial state to provide a reliable prior.
The car must learn the dynamics online while staying within track constraints defined by two elliptical boundaries centered at $(x_e,y_e) = (0,0)$. The outer ellipse is given by $({\xpos} - x_e)^2 + 9({\ypos} - y_e)^2 \leq 1600$ and the inner ellipse by $({\xpos} - x_e)^2 + 45({\ypos} - y_e)^2 \geq 400$, forming a corridor the vehicle must remain within as shown in \cref{fig:car_setup}. 
The reward function is time-varying, with the track divided into 500 discrete steps.
The agent optimizes over rewards with a receding horizon of length $\totHorizon=31$ that advances one step per action (receding). The reference follows an elliptical path defined by $({\xpos} - x_e)^2 + 30({\xpos} - y_e)^2 = 900$, which approximately represents the center line.
In addition to the track boundaries, the car is subjected to state constraints and input constraints
$\mathcal{X}\times\mathcal{U}=\{ (x,u) \in [-45,45] \times [-15,15]\times[-40.14,40.14]\times[-15,20]\times[-0.6,0.6]\times[-2,20] \}$.
The pessimistic planning uses $15$ samples from the dynamic functions based on the updated model set $\dynSet[n]$. 
We do not observe any constraint violations in the experiment. 
We replan after each measurement, which is collected at each time step. 

\begin{figure}
      \centering
  	\includegraphics[scale=0.8]{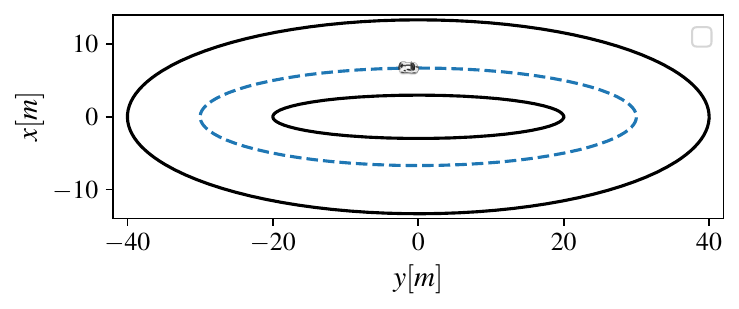}

\caption{
\looseness -1
Illustration of the car track depicted by the black line. The car is required to follow the reference trajectory (blue dashed line) while respecting state and input constraints, as well as staying within the track boundaries.
}
\label{fig:car_setup}
\end{figure}

\newpage
\bibliography{ref.bib}

\end{document}